\setlist[enumerate]{itemsep=\smallskipamount,parsep=0pt,label={\rm \roman*)}}
\setlist[itemize]{itemsep=\smallskipamount,parsep=0pt}
\newcommand{\toc}[1]{\pagenumbering{roman}\setcounter{tocdepth}{#1}\tableofcontents\newpage\pagenumbering{arabic}}
\def\thm@space@setup{%
  \thm@preskip=\parskip \thm@postskip=0pt
}
\renewenvironment{abstract}{%
  \if@twocolumn
    \section*{\abstractname}%
  \else
    \small
    \begin{center}%
      {\bfseries \abstractname\vspace{-.5em}\vspace{\z@}}%
    \end{center}%
    \quotation
    \setlength{\parskip}{8pt plus2pt}
  \fi}
  {\if@twocolumn\else\endquotation\fi}
\newcommand{\Ind}[1]{\mathbb{I}\left[#1\right]}
\newcommand{\zo}{\{0,1\}}
\newcommand{\ra}{\rightarrow}
\newcommand{\eps}{\varepsilon}
\newcommand{\cM}{\mathcal{M}}
\newcommand{\cG}{\mathcal{G}}
\newcommand{\cA}{\mathcal{A}}
\newcommand{\vo}{\vec{1}}
\newcommand{\vz}{\vec{0}}
\newcommand{\la}{\leftarrow}
\newcommand{\defn}[1]{{\textit{\textbf{\boldmath #1}}}\xspace}
\renewcommand{\paragraph}[1]{\vspace{0.09in}\noindent{\bf \boldmath #1.}}
\DeclareMathOperator{\Span}{Span}
\DeclareMathOperator{\Comp}{Comp}
\DeclareMathOperator{\Decomp}{Decomp}
\DeclareMathOperator{\Fix}{Fix}
\DeclareMathOperator{\Det}{det}
\DeclareMathOperator{\GF}{GF}
\newcommand{\interior}[1]{ {\kern0pt#1}^{\mathrm{o}} }
\newcommand{\N}{\mathbb{N}}
\newcommand{\ROUTE}{\mathsf{ROUTE}}
\newcommand{\paren}[1]{\left( #1 \right)}
\newcommand{\bin}{\{0,1\}}
\crefname{equation}{}{} 
\crefname{enumi}{Step}{} 
\theoremstyle{definition}
\newtheorem{theorem}{Theorem}[section]
\newtheorem{fact}[theorem]{Fact}
\newtheorem{definition}[theorem]{Definition}
\newtheorem{proposition}[theorem]{Proposition}
\newtheorem{lemma}[theorem]{Lemma}
\newtheorem{claim}[theorem]{Claim}
\newtheorem{corollary}[theorem]{Corollary}
\def\draft{1}
\newcommand{\nnote}[1]{\ifnum\draft=1\textcolor{purple}{[\textbf{Nathan:} #1]}\fi}
\newcommand{\nathan}[1]{\ifnum\draft=1\textcolor{purple}{[\textbf{Nathan:} #1]}\fi}
\newcommand{\tnote}[1]{\ifnum\draft=1\textcolor{teal}{[\textbf{Ted:} #1]}\fi}
\newcommand{\ted}[1]{\ifnum\draft=1\textcolor{teal}{[\textbf{Ted:} #1]}\fi}
\newcommand{\ian}[1]{\ifnum\draft=1\textcolor{red}{[\textbf{Ian:} #1]}\fi}
\newcommand{\james}[1]{\ifnum\draft=1\textcolor{green!70!black}{[\textbf{James:} #1]}\fi}
\newcommand{\surendra}[1]{\ifnum\draft=1\textcolor{orange}{[\textbf{surendra:} #1]}\fi}
\newcommand{\unfinished}[1]{\ifnum\draft=1\textcolor{orange}{[#1]}\fi}
\newclass{\AVOID}{AVOID}
\newclass{\svAVOID}{svAVOID}
\newclass{\CL}{CL}
\newclass{\searchCL}{searchCL}
\newclass{\CBPL}{CBPL}
\newclass{\FSTP}{FS_2P}
\newclass{\avgZPP}{avgZPP}
\newclass{\FZPP}{FZPP}
\newclass{\STP}{S_2P}
\newclass{\avgP}{avgP}
\newclass{\LOG}{L}
\newclass{\FL}{FL}
\newclass{\FCL}{FCL}
\newclass{\FSPACE}{FSPACE}
\newclass{\FCSPACE}{FCSPACE}
\newclass{\inplaceFL}{inplaceFL}
\newclass{\inplaceFCL}{inplaceFCL}
\newclass{\inplaceFSPACE}{inplaceFSPACE}
\newclass{\inplaceFCSPACE}{inplaceFCSPACE}
\newclass{\LOSSY}{LOSSY}
\newclass{\ClassNP}{NP}
\newclass{\unifNC}{unifNC}
\newclass{\CoNP}{coNP}
\DeclareMathOperator{\Poly}{poly}
\DeclareMathOperator{\negl}{negl}
\newcommand{\ts}{\texorpdfstring}
\newlang{\lossycode}{LossyCode}
\newcommand{\emphdef}[1]{\defn{#1}} 
\title{The Structure of In-Place Space-Bounded Computation}
\newif\ifnames
\author{James Cook\\\texttt{falsifian@falsifian.org} \and Surendra Ghentiyala\thanks{This work is supported in part by the NSF under Grants Nos.~CCF-2122230 and CCF-2312296, a Packard Foundation Fellowship, and a generous gift from Google.}\\Cornell University\\\texttt{sg974@cornell.edu} \and Ian Mertz\\Charles University\\\texttt{iwmertz@iuuk.mff.cuni.cz} \and Edward Pyne\thanks{Supported by an NSF Graduate Research Fellowship.}\\MIT\\\texttt{epyne@mit.edu} \and Nathan Sheffield\\MIT\\\texttt{shefna@mit.edu}}
\date{}
\begin{document}
\begin{titlepage}
\maketitle
\begin{abstract}
    In the standard model of computing multi-output functions in logspace ($\FL$), we are given a read-only tape holding $x$ and a logarithmic length worktape, and must print $f(x)$ to a dedicated write-only tape. However, there has been extensive work (both in theory and in practice) on \textit{in-place} algorithms for natural problems, where one must transform $x$ into $f(x)$ in-place on a single read-write tape with only $O(\log n)$ additional workspace. We say $f\in \inplaceFL$ if $f$ can be computed in this model.
    
    We initiate the study of in-place computation from a structural complexity perspective, proving upper and lower bounds on the power of $\inplaceFL$. We show the following:
    \begin{itemize}
        \item Unconditionally, $\FL\not\subseteq \inplaceFL$. 
        
        \item For a permutation $f$, if $f\in \inplaceFL$ then $f^{-1} \in\avgP$. Thus, the problems of integer multiplication and evaluating $\NC^0_4$ circuits lie outside $\inplaceFL$ under cryptographic assumptions. 
        
        \item Despite this, evaluating $\NC^0_2$ circuits can be done in $\inplaceFL$.
        \item We have
        $
        \FL \subseteq \inplaceFL^{\STP}.
        $
        Consequently, proving $\inplaceFL\not \subseteq \FL$ would imply $\SAT\notin \LOG$.
    \end{itemize}
    We likewise show several extensions and strengthenings of the above results to in-place \textit{catalytic} computation ($\inplaceFCL$), where the in-place algorithm has a large additional worktape tape that it must reset at the end of the computation:
    \begin{itemize}
        \item Assuming $\CL \neq \PSPACE$, then $\FCL \not\subseteq \inplaceFCL$, and under cryptographic assumptions, integer multiplication and $\NC_4^0$ evaluation lie outside $\inplaceFCL$.
        \item Despite this, $\inplaceFCL$ can provably compute matrix multiplication and inversion over polynomial-sized finite fields.
    \end{itemize}
    
    We use our results and techniques to show two novel barriers to proving $\CL \subseteq \P$. First, we show that any proof of $\CL\subseteq \P$ must be non-relativizing, by giving an oracle $O$ relative to which $\CL^O=\EXP^O$. This answers an open problem raised in the survey of (Mertz B. EATCS).
    Second, we show that a search problem not known to be in $\P$, namely
    $\mathcal{C}$-$\lossycode$ for circuits of small width and depth, is in $\searchCL$.
\end{abstract}

\thispagestyle{empty}
\end{titlepage}

\toc{2}

\section{Introduction}

The standard model for sublinear-space computation when dealing with functions outputting multiple bits is a \defn{transducer}:
the input is provided on a read-only tape, and the output must be written to a write-only tape, with the aid of a (bounded-length) read/write work-tape.
In many theoretical settings, this is the right notion --- for instance, if a problem $A$ has a logspace algorithm and there is a reduction from $B$ to $A$ computable by a logspace transducer, then $B$ has a logspace algorithm.

However, this is not the only definition one could use.
If we restrict our attention to functions which preserve the length of the input, another natural definition of space complexity could be ``given the input on a \emph{read/write} tape, how much additional read/write workspace is needed to mutate the input into the output?''
That is, instead of allowing a separate write-only output tape, we require the algorithm to \emph{replace} the input with the output, using minimal additional space overhead.
This gives rise to an alternative class of functions --- for the purposes of this paper, we will refer to the class of $n$-bit to $n$-bit functions computable by a logspace transducer as \defn{$\FL$}, and the class computable in this in-place fashion with logarithmic space overhead as \defn{$\inplaceFL$}.

As a complexity class, $\inplaceFL$ is in some ways less clean than $\FL$. 
One particularly ``brittle'' feature is that $\inplaceFL$ is highly sensitive to the input encoding: if the encoding is inefficient, it may be possible for an in-place machine to free up a large amount of additional workspace by compressing the input, and hence gain more computational power.
Nevertheless, there are many cases where in-place space complexity is a coherent and important notion. In particular, it is arguably better suited for the study of \emph{applied} small space computation.
There has been substantial work on in-place algorithms for particular problems like list sorting and other array permutations, fast Fourier transforms, and computational geometry, motivated by the real-life memory overhead of processing large datasets (for a discussion of some of this work, see \cref{sec:background}).
Despite this extensive interest in the algorithms community, we believe this is the first work to define and prove general structural results about $\inplaceFL$.

Computing functions in-place has also emerged as a component of complexity theory, in arguments regarding \defn{catalytic computing}.
In the model of catalytic logspace (corresponding to complexity class $\CL$), a machine with a logarithmic-length work tape has additional access to a poly-length ``catalytic'' tape, which starts arbitrarily initialized, and must be reset after the computation.
Several structural results for catalytic computing (for instance, the recent proofs of $\mathsf{BPCL} = \CL$ and $\mathsf{NCL} = \CL$~\cite{cook2025structure,koucky2025collapsing}, or the catalytic algorithm for bipartite matching~\cite{agarwala2025bipartite}) have relied on a ``compress-or-random'' framework: if a segment of the catalytic tape has some unusual property, then the algorithm can compress that portion of the tape, freeing up more work-space to perform the computation.
For these arguments, the relevant computational property of the compression scheme is precisely such an in-place requirement: the algorithm must be able to mutate the segment of catalytic tape to replace it by its compressed form, with only a small amount of additional space overhead.
A major part of these compress-or-random proofs generally consists of demonstrating that the requisite compression can indeed be performed in-place.

With the goal of developing systematic tools to address these problems, we consider the complexity classes \defn{$\FCL$} and \defn{$\inplaceFCL$}, which are defined analogously to $\FL$ and $\inplaceFL$, but with the addition of a $\poly(n)$-bit read/write catalytic tape that must be reset at the end of computation.\footnote{Note that this means $\FL \subseteq \FCL$ and $\inplaceFL \subseteq \inplaceFCL$.}
We study the strength of $\inplaceFCL$, including identifying several linear-algebraic problems solvable in $\inplaceFCL$ but not obviously in $\inplaceFL$.
Finally, we apply our results and techniques here to prove several structural statements about $\CL$.

\subsection{Our Results}

In this work we initiate the systematic and structural study of in-place space-bounded
algorithms, specifically $\inplaceFL$ and $\inplaceFCL$. Our results can broadly
be grouped into three categories:
\begin{enumerate}
    \item separations between the standard and in-place models of space;
    \item exhibiting new and surprising functions computable in-place; and
    \item using in-place computation to show barriers against proving $\CL \subseteq \P$.
\end{enumerate}

\subsubsection{Separations}
Perhaps the most basic structural question for an in-place class would be how it
compares to the standard computation class. For $\FL$ and $\FCL$ we show that
they (unconditionally and conditionally, respectively) are not equal:

\begin{restatable}{proposition}{implaceFLsep}\label{prop:inplacehax}
    $\inplaceFL \not\subseteq \FL$.
\end{restatable}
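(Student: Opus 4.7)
The plan is to exploit the ``brittleness'' of $\inplaceFL$ noted in the introduction: on heavily padded inputs, an in-place machine can reclaim the padded region as scratch space, giving itself far more than $O(\log n)$ memory. By the space hierarchy theorem, I would fix a language $L$ with $L\in\mathsf{DSPACE}(m)\setminus \mathsf{DSPACE}(\log m)$ (where $m$ is the input length to $L$), and package $L$ into a length-preserving function that only does real work on padded inputs.

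Concretely, set $m = \lceil n^{1/3} \rceil$ and define $f_n : \zo^n\to\zo^n$ by
\[
f_n(y) \defeq
\begin{cases}
L(y_1\cdots y_m)\,y_2 y_3\cdots y_n & \text{if } y_{m+1}=\cdots=y_n=0,\\
y & \text{otherwise.}
\end{cases}
\]
To place $f \in \inplaceFL$, the in-place machine first uses an $O(\log n)$ counter to check whether the suffix $y_{m+1}\cdots y_n$ is all zero; if not, it halts (in which case input already equals output). Otherwise it simulates the $\mathsf{DSPACE}(m)$-algorithm for $L$ on $y_1\cdots y_m$, using a length-$O(m)$ window inside the zero suffix as the simulated worktape and storing the $L$-machine's head position, state, and tape offsets in its $O(\log n)$ real worktape. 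Once $L(y_1\cdots y_m)$ is computed, the machine overwrites position $1$ with that bit and then sweeps positions $m+1,\ldots,n$, writing $0$ to each (the correct final content on those cells).

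For $f\notin \FL$, I would give a padding reduction: from any alleged $\FL$ transducer $M$ for $f$, one computes $L(x)$ on $x\in\zo^m$ by simulating $M$ on the virtual input $y = x\,0^{m^3-m}$, answering each input-tape query of $M$ in $O(\log m)$ space by reading from $x$ or returning $0$, and recording the first bit $M$ writes to its (one-way) write-only output tape. The total space is $O(\log(m^3)) = O(\log m)$, contradicting $L \notin \mathsf{DSPACE}(\log m)$.

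The main technical care is in the cleanup step of the in-place algorithm: the $L$-simulator writes to the padding region, and these writes must be undone so the final tape matches $f_n(y)$ on the suffix. Since $f_n(y)$ is $0$ on positions $m+1,\ldots,n$, a single final sweep zeroing that entire segment suffices. One should also verify that $n - m$ comfortably exceeds the $O(m)$-space footprint of the $L$-simulator (true for $m = \lceil n^{1/3}\rceil$ and large $n$) and that the real worktape usage for head positions, state, and offset bookkeeping all fits in $O(\log n)$ bits.
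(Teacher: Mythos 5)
Your proposal is correct and takes essentially the same route as the paper: pad the input so that an in-place machine can reclaim the all-zero region as linear workspace for a space-hierarchy-hard language, while any $\FL$ transducer for the function would yield an $O(\log m)$-space decider for that language via the padding simulation, contradicting the hierarchy theorem. The paper's formal proof differs only in details --- it uses a unary hard language and acts nontrivially only on inputs $0^{n-1}b$, which additionally makes $f$ a permutation and rules out even $\FSPACE[n/\omega(1)]$ --- but for \cref{prop:inplacehax} as stated your construction suffices.
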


\begin{proposition}\label{prop:inplacehaxcl}
    Assuming $\CL\neq \PSPACE$, then $\inplaceFCL \not\subseteq \FCL$.
\end{proposition}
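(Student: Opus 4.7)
The plan is analogous to Proposition~\ref{prop:inplacehax}: define a function $f$ whose inputs come in a heavily zero-padded form, so that the in-place machine enjoys a long stretch of free scratch on its own read/write input tape, then embed into that scratch a computation provably out of reach of $\FCL$. In Proposition~\ref{prop:inplacehax} one can close the argument unconditionally via a space hierarchy theorem; no comparably unconditional separation is known between $\FCL$ and a larger space-bounded class, so we appeal to the hypothesis $\CL \neq \PSPACE$.

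Concretely, fix a $\PSPACE$-complete language $L$ decidable in space $n^c$ for some constant $c \ge 1$, and for each $k \ge 1$ set $N_k \defeq k + k^c$. Define $f : \{0,1\}^n \to \{0,1\}^n$ by
\[
f(x \| 0^{k^c}) \;=\; x \| L(x)\, 0^{k^c - 1} \qquad \text{whenever } n = N_k \text{ and } |x| = k,
\]
and by $f(y) = y$ on all other inputs. To see $f \in \inplaceFCL$, the machine first checks in $O(\log n)$ additional space whether the input has the padded form; if not, the tape already holds $f$ of the input. Otherwise the zero suffix is a read/write region of length $k^c$ on which we simulate the space-$k^c$ algorithm for $L$ applied to the prefix $x$ (which we treat as read-only). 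After computing $L(x)$, we write it into the first bit of the suffix and restore the remaining bits to zero. The catalytic tape is never used, so its reset obligation is vacuous.

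Conversely, suppose $f \in \FCL$ via a transducer $M$. Then on any input $x$ of length $k$ we simulate $M$ on the virtual input $x \| 0^{k^c}$ — returning $0$ whenever $M$ queries a position past $|x|$ — and read off the $(k+1)$-st output symbol of $M$ as $L(x)$. This places $L \in \CL$, contradicting the hypothesis, so $f \notin \FCL$.

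The main step requiring care is verifying that the $\PSPACE$ simulation of $L$ respects the in-place model's space budget: only $O(\log n)$ bits of \emph{external} workspace (beyond the input tape, and beyond the catalytic tape which we leave untouched) are permitted, so all substantive scratch must live inside the zero-padded suffix of the input tape. Choosing the padding exponent to match the space bound of $L$ makes this routine, but the bookkeeping needed to orchestrate the simulation on that suffix region still has to be carried out in logspace.
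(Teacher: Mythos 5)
Your proof is correct and follows essentially the same route as the paper: pad the input so that the in-place machine gains a large all-zero region of its own read/write tape to serve as workspace for a $\PSPACE$-hard computation, while any $\FCL$ transducer for the resulting function could be simulated in $\CL$ (answering queries to the virtual zero padding on the fly), contradicting $\CL\neq\PSPACE$. The only cosmetic difference is that you keep $x$ as a prefix with $k^c$ zeros of padding, whereas the paper reuses the near-all-zeros function of \cref{thm:flsep} instantiated with a $\SPACE[n]$-complete language; both arguments rest on the same padding-plus-simulation idea.
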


We also rule out the reverse inclusions under weak cryptographic assumptions. Let $\unifNC^0_4$ be the set of functions computable by logspace-uniform constant-depth circuits of locality $4$. We show that these functions are unlikely to lie in $\inplaceFL$:

\begin{restatable}{theorem}{owphard}\label{thm:owps-means-inplace-sad}
    If there exists a one-way permutation computable in $\FL$, then
    $\unifNC^0_4 \not\subseteq \inplaceFCL$.
\end{restatable}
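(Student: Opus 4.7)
The plan is to derive, from any one-way permutation $f \in \FL$, a one-way \emph{permutation} $g$ in $\unifNC^0_4$, and then argue via the catalytic extension of the in-place permutation-inversion statement discussed earlier in the abstract (``$f\in\inplaceFL$ permutation $\Rightarrow f^{-1}\in\avgP$'') that $g$ cannot lie in $\inplaceFCL$, contradicting its one-wayness.

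Starting from a one-way permutation $f:\{0,1\}^n\to\{0,1\}^n$ in $\FL$, I would build a bijection
$g:\{0,1\}^{n+m}\to\{0,1\}^{n+m}$ computable by a logspace-uniform circuit of locality $4$ such that the first $n$ output bits of $g(x,r)$ are $f(x)$, and such that the marginal distribution of these $n$ bits on uniform $(x,r)$ is uniform. This is a permutation-preserving analogue of the Applebaum--Ishai--Kushilevitz randomized encoding: first realize $f$ as a reversible logspace transducer (Lange--McKenzie--Tapp), then implement each reversible step as a constant-size, constant-locality bijective gadget padded with a few fresh randomness bits. Composing the gadgets gives a $\unifNC^0_4$-computable bijection $g$ whose inverse recovers $x$ (and thus $f^{-1}(y)$) from any preimage of $(y,s)$.

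Given this $g$, suppose toward contradiction that $\unifNC^0_4\subseteq\inplaceFCL$; then $g\in\inplaceFCL$. Invoking the $\inplaceFCL$-version of the permutation-inversion lemma gives $g^{-1}\in\avgP$. Projecting onto the first $n$ output coordinates yields an $\avgP$ algorithm that inverts $f$ on a noticeable fraction of uniformly random outputs, contradicting the one-wayness of $f$.

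The main obstacle is the first step: producing $g$ as an actual \emph{permutation} in $\unifNC^0_4$, not just an $\NC^0_4$ function. Standard AIK-style randomized encodings give $\NC^0_4$ one-way functions but are not a priori bijective, so one must design the encoding so every local gadget is a fixed-size bijection (the composition of bijections remains a bijection, and the locality stays bounded by $4$ after a constant amount of padding). A secondary, lighter technical point is lifting the permutation-inversion lemma from $\inplaceFL$ to $\inplaceFCL$: since the catalytic tape must be fully reset and only contributes polynomially many auxiliary configurations, the enumeration-over-configurations argument for $\inplaceFL$ adapts, provided we additionally handle the search over consistent initial catalytic contents, which adds only a polynomial overhead compatible with $\avgP$.
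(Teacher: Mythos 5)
Your high-level skeleton is the same as the paper's: obtain a one-way permutation in $\unifNC^0_4$ from the $\FL$ one, then apply the configuration-graph inversion argument for $\inplaceFCL$ permutations to contradict one-wayness. The second half is essentially the paper's \cref{lem:ipFCLinvert}, though your accounting there is off: the catalytic tape contributes \emph{exponentially} many configurations, not polynomially many, and one cannot ``search over consistent initial catalytic contents'' in polynomial time. The correct move (as in the paper) is to \emph{sample} a random catalytic tape, which makes the inverter randomized and yields $\avgZPP$ rather than $\avgP$; this still contradicts \cref{def:OWP}, which quantifies over randomized adversaries, but your stated justification does not go through as written.

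The genuine gap is in the first step. A locality-$4$ circuit $g(x,r)$ whose first $n$ output bits literally equal $f(x)$ is impossible unless $f$ itself is in $\NC^0_4$: each such output bit would be a function of at most $4$ input bits, so $f$ would already be $4$-local. Likewise, ``composing the gadgets'' of a reversible (Lange--McKenzie--Tapp style) simulation does not give an $\NC^0_4$ function --- sequential composition of constant-locality bijections over polynomially many steps blows the locality up to essentially the whole input; constant locality requires the \emph{parallel} AIK-style encoding in which every output bit is a local function of $(x,r)$ directly, and then the output is an encoding of $f(x)$, not $f(x)$ itself. Consequently your security reduction also breaks: given only $y=f(x)$ you cannot feed the $g$-inverter a correctly distributed string by ``projection''; you need the simulation (privacy) property of the randomized encoding to produce a sample distributed as $g(f^{-1}(y),r)$ without knowing $f^{-1}(y)$, and you need the fact that a \emph{perfect} randomized encoding of a permutation is again a permutation. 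All of this is exactly the content of the Applebaum--Ishai--Kushilevitz theorem (Theorem~5.4 of \cite{applebaum2006cryptography}), which the paper simply invokes (adding only a remark that the construction is logspace uniform); your proposal in effect attempts to reprove it with a construction that, as sketched, cannot work.
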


Since $\unifNC^0_4 \subseteq \FL \subseteq \FCL$, \cref{thm:owps-means-inplace-sad}
shows that basic cryptographic assumptions imply $\FL \not\subseteq \inplaceFL$ and
$\FCL \not\subseteq \inplaceFCL$. However, we show that an \textit{unconditional} separation would imply breakthrough separations of well-studied classes:

\begin{restatable}{theorem}{TheoremFPInPlace}\label{thm:FPInPlace}
Every length-preserving function \( f \in \FZPP \) can be computed in \( \inplaceFL^{ \NP / \Poly } \cap \inplaceFL^{\STP} \).
\end{restatable}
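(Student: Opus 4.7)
The plan is to first derandomize the $\FZPP$ algorithm using the oracle, and then simulate the resulting deterministic polynomial-time computation in place via a sequence of single-bit operations guided by oracle queries. For the $\NP / \Poly$ direction, I would apply Adleman's theorem: standard boosting and a union bound over inputs yield a polynomial-length string $r^*_n$ such that the underlying $\ZPP$ algorithm $A$ satisfies $A(x, r^*_n) = f(x)$ for all $x \in \{0,1\}^n$; we hard-code $r^*_n$ as the non-uniform advice. For the $\STP$ direction, the containment $\BPP \subseteq \STP$ (Russell and Sundaram; Canetti) lets the $\STP$ oracle decide bits of $f(x)$ via $\exists\forall$-style queries, obviating non-uniform advice.

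The in-place algorithm then consists of a polynomial-length sequence of single-bit operations, each determined by an oracle query on the current tape state $\tau_t$ and a step counter $t$ (held in $O(\log n)$ workspace). At each step, the oracle nondeterministically guesses an input $z$ consistent with $(\tau_t, t)$, simulates the polynomial-time computation of $f$ on $z$, and returns the intended action (which tape cell to overwrite and with what value). The sequence of operations is chosen to trace a canonical trajectory from $x$ to $f(x)$, using the oracle essentially as a virtual ancilla for a reversible-style simulation of $f$.

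The main obstacle is ambiguity: since $f$ need not be a bijection, multiple inputs $z$ may be consistent with the tape at intermediate steps and give different values for the next bit. I would handle this by designing the trajectory so that every information-discarding operation merges only inputs $z, z'$ with $f(z) = f(z')$, keeping the intermediate tape states injective in $x$ up to $f$-equivalence and thereby making the oracle queries unambiguous along the canonical path. For $\NP / \Poly$, the advice can additionally encode a polynomial-size description of this trajectory (or disambiguating hints) to ensure the $\NP$ oracle identifies the correct $z$; for $\STP$, we exploit the $\exists\forall$ structure of $S_2P$ so that the $\exists$-player proposes a consistent $z$ and the $\forall$-player challenges with alternatives, certifying that the proposed next action is consistent across all $z$'s compatible with the tape. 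Designing this canonical trajectory and verifying that the discarding operations always respect $f$-equivalence will be the most delicate part of the argument.
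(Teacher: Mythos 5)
There is a genuine gap: you have correctly identified the central obstacle --- that the next action must be a function of the current tape contents plus $O(\log n)$ bits of workspace, so distinct inputs $x,x'$ with $f(x)\neq f(x')$ must never collide at an intermediate configuration --- but you have not supplied the idea that resolves it, and your proposed resolutions do not work as stated. Designing a ``canonical trajectory'' whose information-discarding steps only merge $f$-equivalent inputs is precisely the hard part, and it cannot be delegated to ``disambiguating hints'' in the advice: the trajectory is per-input, there are $2^n$ inputs, and the advice must have total length $\poly(n)$, so per-input hints are unavailable. Likewise, in the $\STP$ case your $\exists\forall$ certification that ``the proposed next action is consistent across all $z$ compatible with the tape'' only succeeds if such consistency has already been engineered; without it the verifier has nothing to certify. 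The paper's solution is a concrete mechanism you are missing: work in a changed basis given by an invertible $Q\in\zo^{n\times n}$ and store a short hash $Hf(x)$ (with $H\in\zo^{\lceil 2\log n\rceil\times n}$) in the workspace, then transform the tape from $x$ to $f(x)$ by fixing the coordinates of $Qx$ to those of $Qf(x)$ one at a time, each step realized in place as adding a multiple of a column of $Q^{-1}$. The pair $(Q,H)$ is chosen to be \emph{conflict-avoiding} for $f$ on the given $x$, meaning no $x'$ with $f(x')\neq f(x)$ agrees with $x$ on the hash and on any intermediate hybrid $v_i$; this is exactly what makes every oracle query about ``the unique consistent continuation'' well-defined.

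The second missing ingredient is why such $(Q,H)$ exist and how to obtain them within the stated oracle classes. The paper shows (via a probabilistic/counting argument packaged as a range-avoidance instance, together with an encoding of invertible matrices) that a list of $\poly(n)$ pairs $(Q_t,H_t)$ exists such that for \emph{every} input $x$ some pair is conflict-avoiding; selecting the right pair for the actual input is a $\CoNP$ predicate decidable with the $\NP$ oracle. For $\inplaceFL^{\NP/\Poly}$ this list (together with a pseudorandom set derandomizing the $\FZPP$ algorithm --- your Adleman-style step covers this easy part) is the advice; for $\inplaceFL^{\STP}$ the list is \emph{computed} rather than given, using Li's single-valued $\FSTP$ algorithm for $\AVOID$ and the Turing-closure $\P^{\STP}=\STP$, which is why no nonuniformity is needed there. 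Your appeal to $\BPP\subseteq\STP$ only addresses derandomization, not the construction of the routing/hashing data, which is the heart of the theorem. Without the conflict-avoiding basis-change construction (or an equivalent), the proposal does not go through.
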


\begin{corollary}\label{cor:sat-in-l-means-inplace-happy}
    Assuming $\FCL \not\subseteq \inplaceFL$, then $\NP\neq\LOG$.
\end{corollary}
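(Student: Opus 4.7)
The plan is to prove the contrapositive: assuming $\NP = \LOG$, every $f \in \FCL$ lies in $\inplaceFL$.

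First, I would reduce $\FCL$ to $\FZPP$. The known inclusion $\CL \subseteq \ZPP$ extends bit-by-bit to the corresponding length-preserving function classes (each output bit of $f\in\FCL$ is individually decidable in $\CL\subseteq\ZPP$, and the per-bit $\ZPP$ computations can be amplified and run sequentially with independent coins), yielding $\FCL \subseteq \FZPP$. Applying \cref{thm:FPInPlace} then places every $f \in \FCL$ inside $\inplaceFL^{\STP}$, so it suffices to show that $\inplaceFL^{\STP} \subseteq \inplaceFL$ under the assumption $\NP = \LOG$.

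For this collapse, I would apply standard arguments. Under $\NP = \LOG$, the chain $\LOG \subseteq \P \subseteq \NP = \LOG$ forces $\P = \LOG$, and the usual induction showing $\P = \NP \Rightarrow \PH = \P$ then forces $\PH = \LOG$, so in particular $\STP \subseteq \LOG$. An in-place machine with an oracle to a language $L \in \LOG$ can simulate each query by running the logspace algorithm for $L$ on the contents of the oracle tape, using only $O(\log n)$ extra work-tape space and never touching the main in-place tape. Hence $\inplaceFL^{\STP} \subseteq \inplaceFL^{\LOG} = \inplaceFL$, completing the contrapositive.

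There is no real obstacle here: the structural content lives in \cref{thm:FPInPlace}, and the corollary follows by a short oracle collapse. The one routine detail to verify is $\inplaceFL^{\LOG} = \inplaceFL$, which holds under any reasonable formalization of in-place oracle computation in which the oracle tape is separate from the main read/write tape and can be treated as read-only input to a logspace subroutine.
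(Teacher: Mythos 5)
Your proof is correct and follows essentially the same route as the paper: the contrapositive $\NP=\LOG \Rightarrow \PH=\LOG \Rightarrow \STP=\LOG$, combined with $\FCL\subseteq\FZPP\subseteq\inplaceFL^{\STP}$ from \cref{thm:FPInPlace} and the oracle collapse $\inplaceFL^{\LOG}=\inplaceFL$ (\cref{prop:oraclecollapse}). The only slight inaccuracy is your closing caveat: the paper's in-place oracle model has no separate oracle tape (queries are made on subintervals of the read/write tape, precisely to prevent persisting data there), but your simulation argument works verbatim in that model, so nothing is affected.
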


\subsubsection{Algorithms}

Beyond the question of standard versus in-place computation, we additionally
demonstrate several unconditional inclusions of problems and classes in
$\inplaceFL$ and $\inplaceFCL$.
For instance, while evaluating uniform $\NC^0_4 $ circuits is conditionally ruled out for both
classes by \cref{thm:owps-means-inplace-sad}, locality 2 \textit{is} unconditionally computable:
\begin{restatable}{theorem}{nctwoeval}\label{thm:locality-2-inplace}
    $\unifNC^0_2 \subseteq \inplaceFL$.
\end{restatable}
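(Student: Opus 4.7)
The plan is to give an $\inplaceFL$ algorithm that, on input $x$, processes the tape cells one at a time in a logspace-computable order, overwriting cell $i$ with $y_i$ and holding only a constant number of ``in flight'' $x$-values in workspace at any moment. By logspace-uniformity, for each $i\in[n]$ we can compute in $O(\log n)$ space the at-most-two input positions $a_i,b_i\in[n]$ and the two-input gate $g_i$ such that $y_i = g_i(x_{a_i},x_{b_i})$. Let $D_i=\{a_i,b_i\}\setminus\{i\}$. This induces a directed dependency graph $F$ on $[n]$ with arcs $i\to a_i$ and $i\to b_i$ (self-loops deleted), of out-degree at most $2$. The scheduling problem becomes: order the vertices $\pi(1),\ldots,\pi(n)$ so that whenever we process $\pi(t)$, each cell in $D_{\pi(t)}$ either has not yet been overwritten or has been explicitly saved into the workspace, while minimizing the maximum number of concurrently saved values.

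The structural idea is that $F$, having out-degree at most $2$, canonically decomposes into two functional subgraphs $F_1,F_2$ (each of out-degree at most $1$) by splitting the at-most-two outgoing arcs of each vertex in a logspace-uniform way, e.g.\ ``arc to the smaller of $a_i,b_i$ goes into $F_1$.'' Each functional subgraph $F_k$ is a disjoint union of ``rho-shapes'' (an in-tree rooted at a simple cycle), and its weakly connected components can be enumerated in logspace by canonical minimum-index representatives. The classical in-place cycle-rotation trick handles a single rho with one saved bit: save the value at the representative cycle vertex, traverse the cycle writing $y$-values via the gates $g_i$, and then walk the in-tree. The algorithm iterates through the components of $F$ and runs this trick in a coordinated way for both $F_1$ and $F_2$, interleaving the two traversals so that secondary-color dependencies are serviced before their cells are clobbered.

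The main obstacle is showing that this interleaving actually needs only $O(1)$ concurrent saves, so that the $O(\log n)$ workspace suffices to hold both the saved bits and the iteration pointers used to decode $\pi(t)$. The difficulty is that a single cell $c$ can be the ``secondary'' input of arbitrarily many outputs (the in-degree of $F_2$ at $c$ can be as large as $n-1$), so a naive schedule could force many saves to coexist. The key combinatorial claim is that by coordinating the rho-processing of $F_1$ and $F_2$, one can always release each saved bit within $O(1)$ steps of when it was created. I expect this to go through by induction on the joint rho-structure of $F_1\cup F_2$, leveraging the out-degree bound on the combined graph $F$ in a more essential way than that of either $F_k$ alone, together with a careful argument that the resulting schedule is producible streamingly in $\FL$ without invoking general strongly-connected-components decomposition (which is not known to lie in $\FL$).
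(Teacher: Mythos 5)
There is a genuine gap: the entire weight of the theorem rests on the ``key combinatorial claim'' that the interleaved rho-processing of $F_1$ and $F_2$ needs only $O(1)$ concurrent saves and that the resulting schedule is streamable in $\FL$, and you explicitly leave both as conjectures (``I expect this to go through by induction\dots''). This is precisely the hard part. Your decomposition into two functional subgraphs does not obviously help: the rho/cycle-rotation trick is sound only when each cell feeds a single output, whereas here a cell can be a secondary input to $\Theta(n)$ outputs, so rotating a rho of $F_1$ through such a cell is blocked until all of its $F_2$ in-neighbours have been serviced; no mechanism for this ``coordination'' is given, and the even stronger property you posit (each saved bit released within $O(1)$ steps of creation) is more than is needed and not clearly achievable --- what one can actually hope to guarantee is a bound of $O(1)$ values \emph{in flight at any time}, and establishing even that requires a structural argument you have not supplied. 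Equally unaddressed is why the schedule (your $\pi$) is computable position-by-position in $O(\log n)$ space; asserting that it avoids SCC decomposition is not an argument.

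For contrast, the paper works with the \emph{undirected} multigraph whose vertices are input cells and whose edges are output gates (each edge joining the two inputs of that gate). Since the function is length-preserving and has locality $2$, this graph has exactly $n$ vertices and $n$ edges, hence average degree $2$, and one shows that greedily processing isolated vertices, leaves, and isolated cycles always makes progress while keeping the number of computed-but-unwritten values bounded by $2$. The real work is then making this schedule logspace-computable: the paper uses logspace undirected connectivity and minimum spanning forests (Reingold, Nisan--Ta-Shma) to decompose each component into feathers, bones, and a skull, orders bones by a logspace-computable ``prominence,'' and shows that the index of any edge or vertex in the resulting deletion order is computable in $O(\log n)$ space; a final in-place permutation re-orders the outputs. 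Your write-up would need analogues of both ingredients --- an existence proof of an $O(1)$-buffer schedule that genuinely exploits the $n$-edges-on-$n$-vertices (or out-degree $2$ on $n$ outputs) balance, and a concrete logspace implementation of the schedule --- before it constitutes a proof.
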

This result directly implies an in-place algorithm for evaluating (arbitrary-depth) uniform circuits of fan-in two and width $n+O(\log n)$ (and see \cref{thm:small-width-computable} for details).

With the addition of catalytic space, we also give in-place algorithms for linear algebra problems, with our most fundamental being matrix-vector product,
and by extension matrix-matrix product:

\begin{restatable}{theorem}{matvecprod}\label{thm:mat-vec-prod}\label{thm:inplace-matrix-vector}
For any field \( K \) representable in space \( O(\log n) \), there is an algorithm in \( \inplaceFL^{ \CL } \subseteq \inplaceFCL\) which, given read-only access to a matrix \( A \in K^{ n \times n } \), replaces a vector \( x \in K^n \) with \( Ax \) in-place. 
\end{restatable}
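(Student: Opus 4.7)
The plan is to compute $x \mapsto Ax$ by a sequence of in-place transformations, using the $\CL$ oracle to fetch a structured decomposition of $A$ on demand. The three in-place primitives we will use are: multiplication by an upper-triangular matrix, multiplication by a lower-triangular matrix, and applying a permutation. For an upper-triangular $U$, we compute $x \mapsto Ux$ by sweeping positions $i = 1, 2, \dots, n$: at step $i$, positions $1, \dots, i-1$ already hold $(Ux)_1, \dots, (Ux)_{i-1}$, while positions $i, \dots, n$ still hold the original values $x_i, \dots, x_n$, so $(Ux)_i = \sum_{j \geq i} U_{ij} x_j$ can be computed by scanning the tape with oracle queries for the individual $U_{ij}$. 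The lower-triangular case is symmetric, with a reversed sweep $i = n, \dots, 1$, and a permutation is applied in-place by the standard cycle-following technique using $O(\log n)$ extra workspace.

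For invertible $A$, we compute a $PLU$ decomposition $A = PLU$ on demand: since the entries of $L$ and $U$ are ratios of subdeterminants of $A$, and computing determinants over fields representable in $O(\log n)$ space lies in $\CL$, each entry is a single oracle call. The algorithm then applies $U$, then $L$, then $P$ in sequence, each in-place. For general rank-$r$ $A$, we use a rank-revealing factorization $A = B \cdot D \cdot C$ with $B, C$ invertible and $D = \mathrm{diag}(I_r, 0)$, applying $C$ in-place via its $PLU$, then zeroing out the last $n-r$ entries of the tape (trivially in-place), then applying $B$ in-place via its $PLU$. This decomposition is again $\CL$-computable.

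The main obstacle is ensuring that all of the required factors --- the triangular matrices, permutations, and the rank $r$ --- can be queried entry-by-entry by the $\CL$ oracle given only read-only access to $A$. This will follow from known $\CL$-algorithms for linear algebra over small fields. One subtlety arises in the singular case: the rank-revealing decomposition must be \emph{uniquely specified} (for example by fixing a canonical tie-breaking rule during Gaussian elimination), so that the oracle's independent answers to queries for different entries of the factorization are globally consistent with a single underlying decomposition of $A$.
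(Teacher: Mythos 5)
Your proposal is correct in outline, but it takes a genuinely different route from the paper. You factor $A$ explicitly into in-place-implementable pieces: $A = PLU$ in the invertible case (forward/backward triangular sweeps plus cycle-following for the permutation), and a rank normal form $A = B\cdot\mathrm{diag}(I_r,0)\cdot C$ in the singular case, with every entry of the factors fetched on demand from the $\CL$ oracle under a canonical, tie-broken elimination rule; your remark that the factorization must be uniquely specified so that independent oracle queries are globally consistent is exactly the right concern. The paper never factors $A$ at all: it builds an invertible change of basis $Q$ from Krylov sequences $e_i, Ae_i, A^2e_i,\dotsc$ (each column computable in $\FCL$ from rank queries), for which $Q^{-1}AQ$ is \emph{almost upper-triangular}, and then runs the triangular-style sweep on the virtual vector $v = Q^{-1}w$, translating each simulated read/write into operations on the real tape using columns of $Q$ and entries of $Q^{-1}AQ$ supplied by the oracle. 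What each approach buys: the paper's similarity-transform route handles singular $A$ uniformly with no case split, needs no permutation subroutine, and leans only on the rank/inverse computations already packaged in \cref{lem:LinAlg}; your route is more classical linear algebra and keeps the tape contents equal to honest intermediate vectors ($Ux$, $LUx$, \dots) rather than a basis-changed encoding, but it needs the extra rank-revealing factorization for singular $A$ and a (routine, but currently only asserted) verification that the entries of the canonical $P$, $L$, $U$, $B$, $C$ are $\FCL$-computable --- e.g.\ as ratios of minors of the pivoted matrix, with the pivot choices themselves determined by submatrix ranks; \cref{lem:LinAlg} supplies iterated products, rank and inverse, and determinants appear in its proof, so this can be assembled, but it is a step you should spell out rather than cite.
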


\begin{restatable}{corollary}{matmatprod}\label{cor:mat-mat-prod}\label{cor:inplace-matrix-matrix}
For any field \( K \) representable in space \( O(\log n) \), there is an algorithm in \( \inplaceFL^{ \CL } \subseteq \inplaceFCL\) which, given read-only access to a matrix \( A \in K^{ n \times n } \), replaces a matrix \( B \in K^{n \times n} \) with \( AB \) in-place. 
\end{restatable}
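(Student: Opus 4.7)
The natural plan is to reduce matrix-matrix multiplication to $n$ independent invocations of the in-place matrix-vector routine from \cref{thm:mat-vec-prod}. Recall that the $j$-th column of $AB$ is exactly $A$ times the $j$-th column of $B$. So the high-level algorithm iterates $j = 1, 2, \ldots, n$, and for each $j$ replaces the $j$-th column of $B$ (in place, on the $B$-tape) with $A$ times itself.

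The main thing to check is that each step can be carried out using only $O(\log n)$ additional workspace, despite the fact that a "column" of $B$ is not stored contiguously on the tape. We handle this by address translation: we keep the column index $j$ (of length $O(\log n)$) on the worktape and simulate the algorithm of \cref{thm:mat-vec-prod} on a virtual vector $v$ of length $n$, translating any read or write to virtual position $i$ into a read or write to tape position $(i,j)$ of $B$. Since this translation is itself computable in $O(\log n)$ space, and the matrix-vector routine is promised to be in-place with $O(\log n)$ overhead and with access to a $\CL$ oracle, the entire simulation runs in $\inplaceFL^{\CL}$. The $A$-tape is read-only throughout, which matches the interface of \cref{thm:mat-vec-prod}, so the $\CL$ oracle calls remain well-formed.

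Finally, we must argue that the catalytic tape used inside the oracle calls is properly reset between iterations. Because $\CL$ oracle queries are single invocations that clean up after themselves by definition, and because the outer loop modifies only the active column of $B$, the state of $B$ at the start of iteration $j$ has columns $1, \ldots, j-1$ already correctly replaced by columns of $AB$, and columns $j, \ldots, n$ still equal to the original $B$. Thus after $n$ iterations the tape holds $AB$ in-place, with only $O(\log n)$ workspace used beyond $B$ and the oracle tape, giving the $\inplaceFL^{\CL} \subseteq \inplaceFCL$ containment. I do not foresee a serious obstacle here: the corollary is essentially a bookkeeping reduction, and the only mild subtlety is the address-translation wrapper, which is routine.
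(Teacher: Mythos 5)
Your proposal is correct and matches the paper's proof, which likewise reduces $B \mapsto AB$ to $n$ column-wise applications of the in-place matrix-vector algorithm of \cref{thm:mat-vec-prod}; your address-translation wrapper is a reasonable way to make explicit a bookkeeping detail the paper leaves implicit. (The paper's proof additionally handles $B \mapsto BA$ via transposition and \cref{lem:swap}, but that is not required by the statement as given.)
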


Furthermore, with some more tricks from catalytic computing, which appear in \cref{app:matrix-compress} and may be of independent interest,
\cref{cor:inplace-matrix-matrix} allows us to invert matrices as well:

\begin{restatable}{corollary}{inplaceinvert}\label{thm:inplace-matrix-invert}
    For any field \( K \) representable in space \( O(\log n) \), there is an algorithm in \( \inplaceFL^{ \CL } \subseteq \inplaceFCL\) which replaces a matrix \( A \in K^{n \times n} \) with \( A^{-1} \) in-place.
\end{restatable}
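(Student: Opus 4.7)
The plan is to reduce matrix inversion to the in-place matrix-matrix product of \cref{cor:inplace-matrix-matrix}, combined with the $\CL$ oracle's ability to compute entries of $A^{-1}$ directly (matrix inversion lies in $\NC^2 \subseteq \CL$); the catalytic bookkeeping is handled by the matrix-compression framework of \cref{app:matrix-compress}.

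First, while the read-write tape still holds $A$, I would use the $\CL$ oracle to write $A^{-1}$ into a catalytic region via the standard additive catalytic trick: iterate over entries of $A^{-1}$, query the oracle for each, and add it to the corresponding position of a catalytic region with initial contents $v$, leaving the region holding $v + A^{-1}$. This phase is straightforward because the read-write tape has not yet been modified, so the oracle always sees a clean $A$. Next, I would use the in-place matrix-matrix product to overwrite the read-write tape with $A^{-1}$, conceptually by multiplying twice: first replacing $A$ with $A^{-1} \cdot A = I$ and then with $A^{-1} \cdot I = A^{-1}$, treating the catalytic region as the read-only multiplier. Finally, with the read-write tape now holding $A^{-1}$, one more pass with the $\CL$ oracle subtracts $A^{-1}$ (read directly) from the catalytic region, resetting it to $v$.

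The hard part will be the middle step, because the catalytic region physically stores $v + A^{-1}$ rather than a clean $A^{-1}$, so a direct invocation of \cref{cor:inplace-matrix-matrix} with this region as the read-only input computes $(v + A^{-1}) \cdot B$ and leaves behind a spurious $v \cdot B$ term that cannot be subtracted directly, since the algorithm never sees $v$ in isolation. Handling this is exactly the role of the matrix-compression framework of \cref{app:matrix-compress}: either the catalytic noise $v$ has enough structure that it can be compressed and the spurious product tracked symbolically, or a compress-or-random argument on $v$ shows that the extraneous contribution can be absorbed into auxiliary catalytic regions whose initial values are recovered at the end. Equipped with this ``noisy-multiplier'' lift of in-place matrix-matrix product, the outline above goes through as a routine catalytic-bookkeeping argument, yielding the desired $\inplaceFL^{\CL} \subseteq \inplaceFCL$ algorithm.
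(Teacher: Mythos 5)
There is a genuine gap in the middle step, and it is not one that the machinery of \cref{app:matrix-compress} can close in the way you suggest. First, the information-theoretic problem: in your ``conceptual'' sequence the main tape passes through the configuration where it holds $I$ while the catalytic region holds $v + A^{-1}$. At that point two distinct pairs $(A_1,v_1)\neq(A_2,v_2)$ with $v_1+A_1^{-1}=v_2+A_2^{-1}$ produce identical machine configurations (the $O(\log n)$ work bits cannot distinguish exponentially many such collisions), yet correctness requires different final contents on both tapes ($A_i^{-1}$ on the main tape, $v_i$ restored on the catalytic tape). So no deterministic continuation exists, and the ``noisy-multiplier lift'' you hope for cannot take the form of multiplying by $v+A^{-1}$ through the identity. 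Second, the appendix does not provide what you invoke: \cref{lem:compress-non-invert-exact}, \cref{lem:compress-non-invert-other} and \cref{lem:invertible-mat-total} are about reversibly transforming the catalytic contents so that a \emph{genuine} invertible matrix $B$ sits on the catalytic tape (by compressing or locally fixing non-invertible matrices); they give no mechanism for ``tracking a spurious product $v\cdot B$ symbolically'' or ``absorbing'' it into auxiliary regions.

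The paper's route avoids the additive trick entirely. It first uses \cref{lem:invertible-mat-total} to place an invertible $B$ on the catalytic tape, and then performs a sequence of in-place multiplications via \cref{cor:inplace-matrix-matrix} in which every step is multiplication by an invertible matrix that is computable from data currently present on the tapes: $\langle A,B\rangle \to \langle AB,B\rangle \to \langle AB,\,B(AB)^{-1}\rangle = \langle AB, A^{-1}\rangle \to \langle A^{-1}AB, A^{-1}\rangle = \langle B, A^{-1}\rangle$, followed by a swap (\cref{lem:swap}); here entries of $(AB)^{-1}$ are supplied by \cref{lem:LinAlg} from the copy of $AB$ on the main tape. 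Because every step is an invertible linear map on the joint configuration, no information is ever lost, and $B$ ends up back on the catalytic tape so the compression can be undone. Your proposal would become correct if you replaced the additive storage of $A^{-1}$ with this ``get a clean invertible $B$ onto the catalytic tape first'' step and then ran an invertible sequence of multiplications; as written, the central step is both unsupported by the cited framework and impossible in the stated form.
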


\subsubsection{Barriers}
Finally, we apply our techniques and results to the standard model of catalytic logspace. We show two barriers to resolving the question of whether
$\CL \subseteq \P$. Recall that $\CL\subseteq \ZPP$ (\cite{buhrman2014computing}), so $\CL \subseteq \P$ holds under standard de-randomization assumptions, but this is not known unconditionally. Prior to this work there were no explicit (non-promise) problems known to be in $\CL$ but not known to be in $\P$. Thus, it was unclear if we should expect $\CL\subseteq \P$ to be a difficult problem to resolve. 
Our first and main result is an oracle such that $\CL^O \not\subseteq \P^O$,
which rules out a relativizing proof of $\CL \subseteq \P$:
\begin{theorem}
    There exists an oracle $O$ such that $\CL^O = \EXP^O$.
\end{theorem}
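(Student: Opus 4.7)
My plan is to prove the two inclusions separately. The first, $\CL^O \subseteq \EXP^O$, holds for every oracle $O$: a catalytic logspace machine has $O(\log n)$ workspace, $\poly(n)$ catalytic tape, and a polynomially bounded query tape, giving only $2^{\poly(n)}$ total configurations, which can be simulated in $\EXP^O$ in exponential time by walking the configuration graph while routing each oracle query to $O$.

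For the reverse direction $\EXP^O \subseteq \CL^O$, I would construct $O$ via stage-wise diagonalization so that $O$ explicitly encodes the acceptance bit of every exponential-time oracle computation at a polynomially-addressable location. Fix an enumeration $(M_i)_{i \geq 1}$ of oracle machines with time bounds $2^{n^i}$, and a polynomial padding function $p(n,i)$ to be determined; the goal is to design $O$ so that $\langle i, x, 1^{p(|x|, i)}\rangle \in O$ if and only if $M_i^O$ accepts $x$ within $2^{|x|^i}$ steps. Given such $O$, every $L \in \EXP^O$ decided by some $M_i$ lies in $\CL^O$: on input $x$ the catalytic machine computes the polynomial-length address $\langle i, x, 1^{p(|x|, i)}\rangle$ in its $\log n$ workspace and makes a single query to $O$. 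The construction of $O$ proceeds in stages indexed by $(i, n)$ in a carefully chosen order, where stage $(i, n)$ simulates $M_i^{\widetilde O}$ on each input of length $n$ (with $\widetilde O$ the partial oracle committed by earlier stages, uncommitted bits defaulting to $0$) and commits the resulting verdict at the designated address.

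The main obstacle is self-reference: one must verify that the verdict committed at stage $(i, n)$ agrees with the true $M_i^O(x)$ once $O$ is fully defined. The tension is that $\CL$ requires $p$ polynomial so it can construct the query, while $M_i^O(x)$ runs for $2^{|x|^i}$ steps and can query addresses of length up to $2^{|x|^i}$, potentially including its own committed bit. Resolving this requires a careful choice of padding and stage enumeration ensuring that $M_i^O$'s queries on inputs of length $n$ only touch bits either already committed in earlier stages or left permanently uncommitted (and hence defaulting to $0$ forever). This is in the spirit of Baker--Gill--Solovay's classical oracle constructions, but with additional bookkeeping due to the exponential time bound and the need to coordinate $\CL$'s polynomial query length with $\EXP$'s much larger query allowance. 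Once consistency is established, the inclusion $\EXP^O \subseteq \CL^O$ follows, giving $\CL^O = \EXP^O$.
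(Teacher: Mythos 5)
The easy direction is fine, but your plan for $\EXP^O \subseteq \CL^O$ has a fatal flaw, and it is exactly the self-reference issue you defer to ``careful bookkeeping.'' In your construction the verdict of $M_i^O$ on $x$ is stored at a \emph{polynomially long} address $\langle i,x,1^{p(|x|,i)}\rangle$, and $M_i$ runs for $2^{|x|^i}$ steps, so $M_i$ can simply compute that address itself, query its own verdict bit, and output the opposite. No choice of padding $p$ or stage ordering can escape this: the query is short and the machine has ample time, so consistency fails for \emph{every} oracle (by a standard recursion-theorem/padding argument one may assume $M_i$ knows its own index). A cleaner way to see that the approach cannot be repaired: your $\CL^O$ algorithm uses only a logspace-computable, polynomial-length query, so the same simulation would place $\EXP^O$ inside $\P^O$ (indeed $\LOG^O$); but the deterministic time hierarchy theorem relativizes, so $\P^O \neq \EXP^O$ for all $O$. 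Any correct construction must therefore exploit a resource that $\CL$ has and $\P$/$\EXP$ machines lack, which your scheme never does.

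The paper's proof does precisely that: the resource is the arbitrary, resettable catalytic tape. The oracle consists of a ``successor'' part $S^{(d)}$, which tells the catalytic machine how to modify its tape in place one bit at a time so that every starting tape content lies on its own cycle (guaranteeing the tape is eventually restored), and a ``password'' part $P^{(d)}$, which reveals the state of a simulated exponential-time computation only when handed a valid password --- a long string lying on the machine's cycle. Passwords are chosen, via an adversarial ``cycle-hiding game'' played against a dovetailing enumeration of all exponential-time oracle machines (using Valiant-style randomized hypercube routing to bound congestion), to be strings that no enumerated machine ever queries within its $2^{O(n^c)}$ query budget, while every cycle through a start vertex is guaranteed to contain one; circularity is avoided because $P^{(d)}$ only answers about computations already simulated in the dovetailing. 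Thus a $\CL^O$ machine finds a password by walking its own tape's cycle, but an $\EXP^O$ machine cannot, which is what makes $\CL^O = \EXP^O$ consistent with $\P^O \subsetneq \EXP^O$.
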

The existence of an oracle such that $\CL^O = \PSPACE^O$ was demonstrated by Buhrman,
Cleve, Kouck\`y, Loff, and Speelman~\cite{buhrman2014computing},
while Heller~\cite{heller1984relativized} showed an oracle such that $\ZPP^O = \EXP^O$;
our result strengthens and unifies both of these.\footnote{The definition of space-bounded oracles is known to be tricky. Our result (as does the prior result of \cite{buhrman2014computing}) holds with regard to the weakest such model, where we do not allow a separate query tape, and force the catalytic algorithm to only make queries on subsets of its catalytic tapes (and see \cref{subsec: oracle_defs} for details).} This also answers an open question
by Mertz~\cite{mertz2023reusing}.

Second, we show the first explicit example of a problem in $\text{search-}\CL$
which we do not (immediately) know how to compute in $\text{search-}\P$.
Our problem will be the $\lossycode$ problem,
introduced by Korten~\cite{korten2022derandomization} (see \cref{sec:defs}),
as applied to a restricted class of circuits:

\begin{restatable}{theorem}{lossysmallwidth}\label{cor:lossy-for-those-weird-circuits}
    Let $\mathcal{C}$ be the class of fan-in $2$ layered circuits of width $n + O(\log n)$ and depth $O(\log n)$. We have $\mathcal{C}\text{-}\lossycode \in \searchCL$.\footnote{Note that $\searchCL$ means there is relation $R$ on pairs $(x,y)$, and given an input $x$ the catalytic algorithm must write some $y$ for which $(x,y)\in R$ to the output tape. This generalizes $\FCL$, where there is a single valid output for each input $x$.}
\end{restatable}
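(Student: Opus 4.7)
The plan is a compress-or-find-directly approach, exploiting two facts: (i) circuits in $\mathcal{C}$ have depth $O(\log n)$ and fan-in $2$, so $f := D \circ E$ lies in $\mathrm{NC}^1 \subseteq \LOG$ and each output bit can be computed in $O(\log n)$ work space; and (ii) since $f$ factors through $\{0,1\}^{n-1}$, its fixed-point set $\mathrm{FP}$ satisfies $|\mathrm{FP}| \le 2^{n-1}$.

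First I would do a purely-read sweep through polynomially many $n$-bit blocks of the catalytic tape. For each block $B$, I compute $f(B)$ bit-by-bit and compare against $B$; if any block satisfies $f(B) \neq B$, I simply copy $B$ to the output tape and halt. Since the catalytic tape is only read in this phase, the catalytic reset is automatic.

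The nontrivial case is when every examined block is a fixed point of $f$. Here I exploit that $E$ restricted to $\mathrm{FP}$ is injective with left inverse $D$, so each such block carries only $n-1$ bits of information. In the second phase I reversibly compress enough fixed-point blocks---applying $E$ in-place to each block using the in-place evaluator for small-width circuits referenced in \cref{thm:locality-2-inplace}---to free up $n$ bits of workspace. I then use those freed bits to hold a counter and brute-force enumerate $x \in \{0,1\}^n$ until finding some $x^\star$ with $f(x^\star) \neq x^\star$, which exists since $|\mathrm{FP}| < 2^n$. After writing $x^\star$ to the output, I uncompute the counter, then apply $D$ in-place to each compressed block to restore the catalytic tape.

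The main obstacle is making the compression step both bijective on the full block space and implementable reversibly in $O(\log n)$ work space. The map $x \mapsto E(x)$ is defined and injective on $\mathrm{FP}$, but does not obviously extend to an efficiently invertible permutation of $\{0,1\}^n$, because the image $E(\mathrm{FP})$ is not explicitly characterized. I plan to resolve this via a Bennett-style uncomputation, using auxiliary catalytic bits as reversible scratch: compute $E(B)$ alongside $B$, then erase $B$ by XORing with $D(E(B))$ (which equals $B$ precisely because $B \in \mathrm{FP}$), and finally unwind the scratch. The subtlety is to verify that the entire sequence is genuinely reversible on top of the in-place circuit evaluator, and that after the brute-force search the counter can be cleanly unwound so that the decompression phase restores the catalytic tape bit-for-bit.
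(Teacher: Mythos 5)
Your overall architecture is exactly the paper's: scan the catalytic tape block-by-block read-only and output any block that is not a fixed point of $D\circ C$; otherwise compress every block in-place from $\tau_i$ to $C(\tau_i)$ (each freeing one bit, so $n$ blocks free $n$ bits), brute-force over $\{0,1\}^n$ for a non-fixed-point (which exists since the fixed-point set has size at most $2^{n-1}$), and then restore the tape by applying $D$ in-place, using $D(C(\tau_i))=\tau_i$. This matches the paper's proof assuming the in-place small-width evaluator.

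The one place you go astray is the paragraph you flag as ``the main obstacle.'' There is no bijectivity requirement to discharge: the in-place evaluator (this is \cref{thm:small-width-computable}, for depth-$O(\log n)$ fan-in-$2$ width-$(n+O(\log n))$ circuits, not \cref{thm:locality-2-inplace}, which is the single-layer building block) computes an \emph{arbitrary} such function in-place --- the definition of $\inplaceFL$ never asks the map to be invertible, and decompression is not ``inverting $E$'' but simply a second forward application of the same evaluator, this time to $D$ on the block $C(\tau_i)$ plus its padding bit, which lands back on $\tau_i$ precisely because $\tau_i$ was verified to be a fixed point in the first phase. The Bennett-style workaround you sketch is therefore unnecessary, and as described it would not go through: computing ``$E(B)$ alongside $B$'' needs roughly $n$ clean ancilla bits, which you do not have before any block has been compressed (XORing $E(B)$ into arbitrary catalytic content does not give you $E(B)$ in the clear, and XORing $D(E(B))$ into $B$ bit-by-bit while still computing $D(E(B))$ from $B$ is exactly the in-place difficulty that \cref{thm:small-width-computable} was built to solve). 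Dropping that paragraph and simply invoking the in-place evaluator for $C$ (compression) and for $D$ (restoration), plus resetting the freed bits holding your counter to the known padding value before decompressing, yields the paper's argument.
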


Assuming $\ZPP = \P$ we immediately get both $\CL \subseteq \P$ and
deterministic algorithms for $\mathcal{C}\text{-}\lossycode$ (in fact, $\P/\poly\text{-}\lossycode$) ---
\cref{cor:lossy-for-those-weird-circuits} implies that the former result
is at least as hard to prove as the latter.

\subsection{Background and related work}\label{sec:background}

\subsubsection{In-place algorithms}

There is a substantial amount of existing literature considering in-place algorithms for particular problems, both in theory and in practice.
The notion of ``in-place'' varies across the literature and does not always coincide with our definition of \inplaceFL.
In particular, some papers use ``in-place'' simply to refer to $\FL$, while some papers allow in-place algorithms a write-only output tape \emph{as well as} write access to the input.
(A notable variant of the latter idea is the ``Restore'' model of Chan, Munro, and Raman~\cite{chan2014selection,dumas2024place2,dumas2024place}, where modifications to the input are allowed only if they are reverted by the end of the algorithm's runtime, a condition reminiscent of catalytic computing.)
Let us briefly mention several existing lines of work that follow our definition of in-place: namely, where the input must be overwritten by the output.

\textbf{Sorting and array permutations.}
In-place computation has been a particular focus for sorting algorithms. 
It is a classic result that an $n$-entry array of $O(\log n)$-bit integers can be sorted in-place with $O(\log n)$ space overhead in $O(n \log n)$ time --- for instance, via heapsort.
There has since been work giving in-place versions of mergesort~\cite{merge-partition,merging2,cache-oblivious-mergesort} and radix sort~\cite{franceschini2007radix}, minimizing the number of moves needed for in-place sorting~\cite{kindafewmoves,fewmoves, superfewmoves}, and optimizing in-place sorting for practical hardware concerns~\cite{peters2010fast,awan2016gpu}.
In addition to sorting, time-space tradeoffs for other in-place list problems have been considered, especially the problem of applying an arbitrary permutation (provided as auxiliary read-only input)~\cite{melville1979time,fich1995permuting,el2016raising,guspiel2019place,dudek2021strictly}.
There has also been recent work studying these problems in the ``Parallel In-Place'' model, whereby the in-place computation is distributed across a large number of processors, which must use small overhead space between them~\cite{axtmann2017place,parallel-inplace-radix, kuszmaul2020cache, gu2021parallel, axtmann2022engineering, penschuck2023engineering, hutton2025encoding} --- in this setting, ``in-place'' can mean either $\polylog(n)$ or $n^{1-\eps}$ space overhead.

\textbf{Fast Fourier transforms.} 
Another problem which has seen interest in time-efficient in-place computing is the Fourier transform.
The Cooley-Tukey fast Fourier transform algorithm can be implemented in-place for inputs of power-of-two lengths~\cite{burrus1981fft,temperton1991fft} --- further work has also shown that the ``truncated'' FFT can be implemented in-place for other input lengths~\cite{harvey2010place,arnold2013new,coxon2022place}, and found algorithms with better performance on certain hardware~\cite{hegland1994self, de2023place}. 

\textbf{Data compression.} 
A natural problem to want to compute in-place is data compression and decompression.
Existing work here is often driven by practical applications.
There have been papers discussing in-place ``delta compression'' for version control~\cite{burns2003place, klein2008modeling}, prefix codes which can be encoded and decoded in-place~\cite{milidiu1998place,chan2014selection}, and in-place implementations of the Burrows-Wheeler transform~\cite{crochemore2015computing, koppl2020place}.

\textbf{Computational geometry.}
Some problems that lend themselves to in-place computation take the form ``permute a list of points so that the new ordering encodes some geometric structure''.
There have been several such problems considered --- for instance, the problem of re-ordering points so that a prefix of them forms a convex hull~\cite{bronnimann2002place,bronnimann2004space,bronnimann2004towards,chan2010optimal}, so that a prefix forms the ``skyline'' (i.e. points which are not dominated in every coordinate by any other point)~\cite{blunck2010place}, or so that the permuted list can serve as an efficient nearest-neighbours data structure~\cite{bronnimann2004towards,chan2008place}.
Here, again, time-space tradeoffs are a major concern.

\subsubsection{Catalytic computing}

The model of catalytic space was introduced by Buhrman, Cleve, Kouck\'y, Loff, and Speelman~\cite{buhrman2014computing} as follows:

\begin{definition}
    A language $L$ belongs to $\CL$ if there exists a uniform algorithm with an $n$-bit read-only input tape, a $O(\log n)$-bit read/write worktape, and a $\poly(n)$-bit read/write catalytic tape such that, on every $x$, $\tau$, if the input tape is initialized to $x$ and the catalytic tape is initialized to $\tau$, then the algorithm terminates with an output of $L(x)$, and the catalytic tape still in configuration $\tau$.
\end{definition}

Buhrman, Cleve, Kouck\'y, Loff, and Speelman gave several structural results for this model: they showed that $\CL$ contains (uniform) $\TC^1$ and is contained in $\ZPP$, and gave oracles relative to which $\CL^O = \LOG^O$ and $\CL^O = \PSPACE^O$, respectively.
Since then, there has been substantial interest in understanding this class --- recent papers have found catalytic algorithms for problems not known to be in $\TC^1$~\cite{agarwala2025bipartite,alekseev2025catalytic}, collapsed the randomized and nondeterministic versions of $\CL$ to $\CL$~\cite{buhrman2018catalytic,datta2020randomized, cook2025structure, koucky2025collapsing}, and characterized how the class changes under weaker resetting requirements~\cite{bisoyi2024almost,gupta2024lossy,folkertsma2024fully}.
There has also been interest in variants of the catalytic model in other settings, such as branching programs~\cite{girard2015nonuniform,potechin2016note,cook2022trading,cote2023catalytic}, communication protocols~\cite{pyne2025catalytic}, and quantum computing~\cite{buhrman2025quantum}.
The underlying techniques behind this work has led to further applications in space-bounded derandomization~\cite{li2024distinguishing,pyne2024derandomizing,doron2024opening} and understanding the relationship between time and space~\cite{cook2020catalytic,cook2021encodings,cook2024tree,williams2025simulating}.
Nevertheless, many questions about catalytic computing remain poorly understood. 
The curious reader is referred to surveys of Kouck\'y and Mertz for further background~\cite{koucky2016catalytic,mertz2023reusing}.

\subsection{Roadmap}
In \cref{sec:summaries} we give sketches of the main proofs. In \cref{sec:defs} we formally define in-place computation and oracle in-place computation, and recall complexity background. In \cref{sec:FLsep} we separate $\FL$ and $\inplaceFL$. In \cref{sec:smallwidth} we show how to evaluate small-width circuits in $\inplaceFL$. In \cref{sec: fp_in_place} we give our oracle algorithm for computing $\FZPP$ in $\inplaceFL^O$. In \cref{sec:in_place_matrix} we give in-place algorithms for matrix problems. In \cref{sec:CLinP} we construct the oracle barrier to $\CL\subseteq \P$.

\section{Overview of proofs}\label{sec:summaries}
\subsection{Separating \ts{\( \FL \)}{FL} and \ts{\( \inplaceFL \)}{inplaceFL}}

To give a function (unconditionally) in $\inplaceFL$ but not $\FL$ we can simply take a function requiring linear space and apply it to the first 1\% of the input bits --- the remaining 99\% of the input is useless to a $\FL$ algorithm, but affords linear additional workspace to an $\inplaceFL$ algorithm.
The formal proof we give is similar but shows something slightly stronger, constructing a \emph{permutation} in $\inplaceFL \setminus \FL$. 
The same ideas work directly for $\inplaceFCL$ and $\FCL$, except that we do not know a uniform space hierarchy theorem for $\CL$, and so we require the additional assumption that $\CL \neq \PSPACE$.

We then show that, under cryptographic assumptions, $\FL$ is likewise not contained in $\inplaceFL$.
This follows from the observation that any permutation computable in $\inplaceFCL$ can be inverted in (randomized) polynomial time on average.
Thus, if there are one-way permutations computable in $\FL$, any such function is an example of something in $\FL \setminus \inplaceFCL$.
The basic idea behind the inversion is to note that an $\inplaceFCL$ machine using $c$ catalytic space has only $2^{n + c + O(\log n)}$ possible internal configurations, and that each of these is associated with at most one (input, initial catalytic setting) pair, since otherwise there would either be two distinct inputs that result in the same intermediate configuration (violating the assumption that we are computing a permutation) or two initial catalytic settings that result in the same intermediate configuration (violating the catalytic resetting requirement).
Thus, the number of intermediate configurations that eventually result in a given output and catalytic setting is $\poly(n)$ on average, meaning that if we see a random output and choose a random catalytic setting ourselves, we can likely invert in polynomial time by traversing backwards through the configuration graph. We note that this is the same fundamental idea that shows $\CL \subseteq \ZPP$ ~\cite{buhrman2014computing}.

\subsection{In-place algorithms for small width circuits}
We then show an $\inplaceFL$ algorithm for any function with a (logspace uniform) $\NC_2^0$ circuit.
The idea is to compute the output bits in a particular order that allows us to erase the input bits fast enough to store the results. 
As each output bit depends on only two input bits, we can think about such a circuit as an graph: each of the $n$ vertices corresponds to an input bit, and each of the $n$ edges corresponds to some output bit, computed as a function of its two endpoints.
We can safely erase an input bit once all edges incident to its corresponding vertex have been computed. 
One can see that, if we repeatedly take a vertex of minimum degree, compute the values of all incident edges, and then erase the value of that vertex, the total number of values we need to store at any point will be at most an additive constant larger than $n$.
However, it is not clear how to actually do the computational overhead of this process in low space --- the bulk of our proof consists of identifying a variant of this scheme that can be implemented using only logspace-computable properties of the original graph at each step.

Note that, given any (logspace uniform) list of functions that are each individually in $\inplaceFL$, we can compute their composition in $\inplaceFL$ by simply performing the transformations one-after-another.
So, this procedure also allows us to evaluate circuits of polynomial depth, as long as each layer has locality $2$, and the width is always bounded by $n + O(\log n)$. This gives us a $\CL$ algorithm solving $\lossycode$ for circuits of small width and depth: since $\CL$ can evaluate $\log$-depth circuits, we can check whether a chunk of our catalytic tape is successfully compressed and decompressed by the pair of circuits --- if not we have found a valid solution, and if so we can perform the compression in-place to free up additional workspace.

\subsection{Computing \ts{\( \FZPP \)}{FZPP} in-place with an oracle}
We next proceed to give oracle results demonstrating that an unconditional proof of $\FL \not\subseteq \inplaceFL$ is out-of-reach of current complexity theoretic techniques. Specifically, we construct a language $O \in \PH$ such that $(\FL \subseteq \FCL \subseteq) \FZPP \subseteq \inplaceFL^O$.
If $\SAT \in \LOG$, then the polynomial hierarchy collapses to $\LOG$ so an $\inplaceFL$ algorithm could simulate the oracle itself\footnote{We remark that, as with space-bounded computation in general, defining oracles for in-place computation requires care. In particular, there are some models of oracle access for which it would not be obvious that $O \in \LOG$ implies $\inplaceFL = \inplaceFL^O$. The oracle model we work with, defined in \cref{sec:defs}, is such that this collapse property is immediate.}, meaning that we would have $\FL \subseteq \inplaceFL$.

The fact that there exists \emph{any} decision oracle $O$ such that $\FZPP \subseteq \inplaceFL^O$ is not a priori obvious.
The issue is that, although the oracle may be incredibly powerful, it can only return a single bit at a time, so it must not only be powerful enough to compute any bit of the output function, but also able to help an $\inplaceFL$ machine mutate the input to the output one-step-at-a-time without destroying crucial information.
The approach we use to achieve this comes from a connection to existing literature on network routing.
In a well-studied scenario, 
a collection of users each want to transfer a message to some other user,
with the constraint that users correspond to the vertices of a $d$-dimensional hypercube and messages may only be passed along edges.
It is a classic result that this is possible with only $\poly(d)$ congestion --- i.e. only $\poly(d)$ messages must pass through any given vertex or edge~\cite{vocking2001almost}. 

Although it originated from the goal of managing traffic in communication networks, this result directly translates into a strategy for an oracle $O$ in our model.
Observe that we can think of the oracle as actually returning $O(\log n)$ bits as opposed to $1$, as the algorithm can specify which index into these $O(\log n)$ bits it wants with each call and store the results in its workspace.
Consider a routing strategy that lets each vertex $x$ send a message to the vertex $f(x)$ at $\poly(n)$ congestion.
The oracle will take as input an $n$-bit string (which it will think of as the name of a vertex on the hypercube), and an additional $O(\log n)$-bit tag (which it will think of as specifying an index into the list of all messages that ever pass through that vertex).
From this, it can uniquely determine the message, which allows it to determine these values for the message one timestep later in the routing.
It then provides as a response to the algorithm the new tag and the index of the bit to flip in the vertex name.

The above argument yields \emph{some} oracle with $\FZPP \subseteq \inplaceFL^O$ --- \cref{sec: fp_in_place} is dedicated to showing how to adapt these ideas so as to implement such an oracle in low complexity.
We show that this routing can be done efficiently if we work in a random choice of basis.
Then, we use tools from derandomization (including Li's beautiful $\FSTP$ algorithm for the range avoidance problem~\cite{li2024symmetric}) to make the oracle description constructive.

\subsection{Computing linear transformations in-place}
We next show new in-place algorithms for a basic linear algebraic problem: we can replace a vector $x$ with $Ax$ in-place, given read-only access to some matrix $A$.
Note that, as recently observed by Dumas and Grenet~\cite{dumas2024place}, if $A$ is upper-triangular, the \( i \)-th coordinate of \( Ax \) depends only on \( x_i, \dotsc, x_n \), so we can replace coordinates of \( x \) with \( Ax \) one-at-a-time in order.
A similar technique works if \( A \) is only \emph{almost upper-triangular}, meaning one additional diagonal is permitted to be non-zero.
So, in order to give an algorithm for general matrices, it suffices to efficiently work in a new basis under which $A$ is almost upper-triangular.

We show how to find such a basis in catalytic logspace.
We start with an arbitrary vector \( e \) and append \( e, A e, A^2 e, \dotsc \) to the basis until there is a linear dependence; we then start the process again with new vectors \( e' \) until the basis is complete.
Given an index $i$, we observe that it is possible to compute the $i$th basis vector in this list in logarithmic space given the ability to compute matrix ranks, which by \cref{lem:LinAlg} can be done in $\FCL$.
This new basis makes \( A \) almost upper-triangular by construction.
Using this procedure for matrix-vector multiplication, the $\FCL$ algorithm for matrix inversion, and a compress-or-random argument showing that we can reversibly manipulate the catalytic tape to contain a large invertible matrix, we are then also able to perform matrix inversion in $\inplaceFCL$.

\subsection{A barrier to \ts{$\CL\subseteq \P$}{CL in P}}
Our final result is an oracle $O$ such that $\CL^O = \EXP^O$. 
This improves upon the construction by Buhrman, Cleve, Kouck\'y, Loff, and Speelman of an oracle such that $\CL^O = \PSPACE^O$ \cite{buhrman2014computing} --- one consequence is that, by the time hierarchy theorem, our result rules out a relativizing proof of $\CL \subseteq \P$.
Buhrman et al.'s oracle construction follows a compress-or-random approach: given a string with very high Kolmogorov complexity (a ``password''), the oracle provides useful assistance in solving very hard problems, and given a string with low Kolmogorov complexity the oracle helps the algorithm compress, and later decompress, the string.
Even with the compression oracle's help, a $\PSPACE$ machine cannot produce a string with higher Kolmogorov complexity than its space bound, since its memory configuration after the final query before printing that string provides a succinct representation --- thus, if the requirement for Kolmogorov complexity is high enough, the algorithm will never find a password.
However, a $\CL$ machine with sufficient catalytic space can find a password: it simply runs the oracle on its tape, either finding a password immediately or compressing the tape contents until it can find a password by brute-force.
Note that it is crucial for this scheme that the compression and decompression be implementable in-place.
This is achieved by appealing to the chain rule for Kolmogorov complexity: if a string is very compressible, there must be some $O(\log(n))$-sized chunk that can be compressed conditional on the rest of the string --- this compression can be done in-place because the algorithm can store both the initial and final versions of that chunk simultaneously.

This approach is insufficient for our purposes, as an exponential-time machine with no space bound may make use of a decompression oracle to generate queries of very high complexity.
Instead of compressing based on Kolmogorov complexity, our oracle makes use of our observed connections to network routing to let the algorithm perform some more general types of in-place modifications to its catalytic tape.
We once again designate some strings as ``passwords'' --- our goal is to design an oracle that takes in a string and provides a single step of in-place modification to perform, such that repeatedly applying these transformations to any initial configuration of a catalytic machine will eventually first pass through a password, and then later eventually reset to the initial configuration.
In order to design such an oracle that also prevents an exponential-time algorithm from being able to find a password, we define a combinatorial game played by a ``cycle-finder'' (representing a dovetailing enumeration of all exponential time oracle algorithms) that adaptively chooses some configurations to query, and a ``cycle-hider'' (representing our oracle construction) that responds with those configurations' successors.
We use ideas from randomized routing to describe a strategy for the cycle-hider to win this game, and show that any such strategy gives our desired oracle construction.

\section{Preliminaries}\label{sec:defs}
We assume familiarity with the basics of Turing machines~\cite{AB}. We write \( \log x \) to denote the base-2 logarithm \( \log_2 x \).
We let $U_n$ denote the uniform distribution on $\zo^n$. For a language $L$, we let $L(x)$ denote the indicator $\Ind{x\in L}$. 

\subsection{Space-bounded classes}
We recall the standard definition of space-bounded computable functions:
\begin{definition}
    A function family $\{f_n:\zo^n\ra \zo^{m(n)}\}_{n\in \N}$ is in $\FSPACE[s(n)]$ if there is a Turing machine $\cM$ using $s(n)$ bits of workspace such that, on any input $x$ of size $n$, $\cM(x)$ writes $f_n(x)$ to a write-only output tape.
\end{definition}

The model of catalytic space was introduced by Buhrman, Cleve, Kouck\'y, Loff, and Speelman~\cite{buhrman2014computing}; again we define the functional version here:

\begin{definition}
    A function family $\{f_n:\zo^n\ra \zo^{m(n)}\}_{n\in \N}$ is in $\FCSPACE[s,c]$ if there is a Turing machine $\cM$ using $s$ bits of workspace as well as $c$ additional bits of workspace, called the \textit{catalytic memory}, such that, on any input $x$ of size $n$ and any initial configuration $\tau$ of the catalytic memory, $\cM(x)$ writes $f_n(x)$ to a write-only output tape and halts with the catalytic memory in configuration $\tau$.
\end{definition}

\begin{definition}[generalization of \cite{chen2024symmetric}]
    A function family $\{f_n:\zo^n\ra \zo^{m(n)}\}_{n\in \N}$ is in $\FZPP$ if there exists a polynomial time randomized algorithm $\mathcal{A}$ such that on input $x \in \zo^n$, $\mathcal{A}(x)$ outputs either $f(x)$ or $\bot$ and the probability (over the internal randomness of $\mathcal{A}$) that $\mathcal{A}(x)$ outputs $f(x)$ is at least $2/3$.
\end{definition}

We now define in-place $\FSPACE$ and $\FCSPACE$ computation. Note that while we overwhelmingly work with length-preserving functions, we give a definition that holds for compressing and extending functions.
\begin{definition}[inplaceFSPACE]
    A function family $\{f_n:\zo^n\ra \zo^{m(n)}\}_{n\in \N}$ is in $\inplaceFSPACE[s]$ if there is a Turing machine $\cM$ that, when run on a tape in configuration $x\circ 0^{\max\{0,m(n)-n\}}\circ 0^s$ for $n = |x|$, halts with the tape in configuration $f_n(x)\circ 1^{\max\{0,n-m(n)\}}\circ 1^s$. 
\end{definition}

\begin{definition}[inplaceFCSPACE]
    A function family $\{f_n:\zo^n\ra \zo^{m(n)}\}_{n\in \N}$ is in $\inplaceFCSPACE[s,c]$ if there is a Turing machine $\cM$ that, when run on a tape in configuration $x\circ 0^{\max\{0,m(n)-n\}}\circ 0^s \circ \tau$ for $n = |x|$ and any initial configuration $\tau\in \zo^c$, halts in configuration $f_n(x)\circ 1^{\max\{0,n-m(n)\}}\circ 1^s \circ \tau$.
\end{definition}

Throughout this paper, we will almost entirely focus on the case of logspace and
catalytic logspace:

\begin{definition}
    We define
    \begin{itemize}
        \item $\FL := \cup_{k \in \N}$ $\FSPACE[k \log n]$
        \item $\FCL := \cup_{k \in \N}$ $\FCSPACE[k \log n, n^k]$
        \item $\inplaceFL := \cup_{k \in \N}$ $\inplaceFSPACE[k \log n]$
        \item $\inplaceFCL := \cup_{k \in \N}$ $\inplaceFCSPACE[k \log n, n^k]$
    \end{itemize}
\end{definition}

\subsubsection{Space-bounded oracle computation}
\label{subsec: oracle_defs}
The \say{correct} definitions of in-place (and catalytic) computation relative to oracles is non-obvious. 
First, we recall the most common definition of oracles in the standard space-bounded model:
\begin{definition}
    \label{def: standard_oracle_def}
    Let $O: \zo^* \rightarrow \zo$ be any function. An $\FL^O$ machine $\cM$ is
    defined as an $\FL$ machine with an additional write-only \emphdef{oracle tape} as
    well as a read-only \emphdef{oracle output bit} initialized to 0.
    In addition to its normal operations, $\cM$ may also perform an \emphdef{oracle query} by transitioning to a special state in the FSM,
    which reads the current contents $y$ of the oracle tape, overwrites
    $O(y)$ onto the oracle output bit; and then erases the oracle tape.
\end{definition}

For in-place computation (and indeed, for catalytic computation), this write-only query tape can be used to persist intermediate data (for instance, the algorithm could write the input $x$ to this query tape, modify the read-write tape into $x'$, then append $x'$ to the query tape and query $O(x,x')$). As such, we restrict our in-place algorithms to only querying the oracle on its existing read-write tape:
\begin{definition}
    \label{def: our_oracle_def}
    Let $O: \zo^* \rightarrow \zo$ be any function. An $\inplaceFL^O$ machine $\cM$ is
    defined as an $\inplaceFL$ machine with an additional read-only \emphdef{oracle output bit}
    initialized to 0. In addition to its normal operations, $\cM$
    may also perform an \emphdef{oracle query} on any subinterval $T$ of the tape,\footnote{We assume there is an oracle query state in the finite state machine, and the final $2\lceil \log (n+O(\log n)\rceil$ bits of the tape specify the interval of the query.}
    which reads the current contents $y$ of $T$ and overwrites $O(y)$ onto
    the oracle output bit.
\end{definition}
Note that as written, this restricts $\inplaceFL$ to making oracle queries of length $n+O(\log n)$. We view this as the most natural model of in-place oracles. In fact, our oracle relative to which $\CL^O=\EXP^O$ is also used by the catalytic algorithm in this manner (i.e. it is only invoked on a subsection of the input and catalytic tape).\footnote{A precise definition of oracles for $\CL$ has not been previously written down, but the oracle for which $\CL^O=\PSPACE^O$ of~\cite{buhrman2014computing} is also used in this fashion.}

The most important property our oracle obeys is that it plays nicely with the oracle itself being computable in logspace, which we require for our conditional separations:
\begin{proposition}[Oracle Collapse]\label{prop:oraclecollapse}
    Suppose $f\in \inplaceFL^{O}$ and $O\in \LOG$ (resp. $O\in \CL$). Then $f\in \inplaceFL$ (resp. $f\in \inplaceFCL$).
\end{proposition}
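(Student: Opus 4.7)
The plan is a direct inline simulation of the oracle. Let $M$ be an $\inplaceFL^O$ machine witnessing $f \in \inplaceFL^O$, and let $M_O$ be a deterministic logspace Turing machine computing $O$. I will build an $\inplaceFL$ machine $M'$ with identical input/output behavior. Partition $M'$'s $O(\log n)$-bit workspace into three disjoint $O(\log n)$-sized regions: region (i) stores the entire simulated configuration of $M$, including the finite-state control, the head position on the in-place tape, and a one-bit slot simulating $M$'s oracle output bit (initialized to $0$, matching \cref{def: our_oracle_def}); region (ii) serves as workspace for $M_O$; region (iii) holds the endpoints $(i,j)$ of the current oracle query.

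On non-query steps, $M'$ simulates $M$ by reading and writing the in-place tape exactly as $M$ would, updating region (i) accordingly. When $M$ issues an oracle query on subinterval $T = [i,j]$, $M'$ writes $(i,j)$ into region (iii) and invokes $M_O$ as a subroutine, using region (ii) as its workspace and treating $T$ as a read-only input tape (with head bounds enforced using $(i,j)$). During this subroutine $M'$ performs no writes to the in-place tape, so $T$, and hence the full tape, is unchanged — matching the real oracle semantics. When $M_O$ halts with its one output bit, $M'$ copies that bit into the oracle-output slot in region (i), zeroes regions (ii) and (iii), and resumes simulating $M$. When $M$ finally halts with tape configuration $f(x) \circ 1^{s}$, $M'$ performs one last cleanup flipping all of its own workspace bits to $1$.

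The catalytic case ($O \in \CL$, yielding $f \in \inplaceFCL$) is essentially identical, except we additionally carve out a $\Poly(n)$-sized sub-block of $M'$'s catalytic tape to serve as $M_O$'s catalytic tape; since $M_O$ is catalytic this sub-block is restored after each simulated query, and hence the full catalytic tape of $M'$ is restored at termination. I do not expect any conceptual obstacle here: the proposition is essentially a sanity check that the oracle model of \cref{def: our_oracle_def} — where queries are restricted to contiguous subintervals of the existing tape — is tailored precisely so that a $\LOG$ (resp. $\CL$) oracle can be inlined without needing extra room to write down the query string. The only point of genuine care is bookkeeping: verifying that $M$'s saved configuration, $M_O$'s workspace, and the query endpoints all fit simultaneously within $O(\log n)$ workspace (and in the catalytic case, that $M_O$'s catalytic requirement fits within $M'$'s $\Poly(n)$-sized catalytic tape), and that the workspace is correctly reset to $1^s$ at termination.
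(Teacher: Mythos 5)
Your proposal is correct and follows essentially the same route as the paper: simulate $M$ step-by-step, and on each oracle query pause and run the $\LOG$ (resp.\ $\CL$) machine for $O$ on the queried subinterval using $O(\log n)$ auxiliary workspace (resp.\ plus the catalytic tape, which the $\CL$ machine restores by definition). The extra region bookkeeping you spell out is exactly the "additional $O(\log n)$ bits" accounting in the paper's proof.
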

\begin{proof}
    For the first result, let $\cM$ be the oracle machine that computes $f$, and $\cA$ a logspace machine that computes $O$. Then our new machine $\cM'$ emulates $\cM$ step-by-step, and when an oracle call is made on some section $\pi$ of the tape, pauses the simulation and simulates $\cA(\pi)$ using an additional $O(\log n)$ bits of auxiliary space; after $\cA$ returns an answer, we continue running $\cM$. Our total space usage will be the space to run $\cM$, namely $n + O(\log n)$ bits, plus an additional $O(\log n)$ bits for $\cA$, which gives $n + O(\log n)$ bits in total.

    For the latter result, when an oracle call is made on some section $\pi$ of the tape, we pause the simulation and use an additional $O(\log n)$ bits of auxiliary space and the catalytic tape to simulate $\cA(\pi)$ (and note that by the definition of $\CL$ this call terminates with the catalytic tape unmodified).
\end{proof}

\subsection{Circuit classes}
We always work with the full basis $B_k$ for circuits of fan-in $k$.

We slightly abuse notation and identify circuit classes with their functional versions.
\begin{definition}
    We say $f\in \NC^i$ if there is $c>0$ and a family of circuits $\{C_n\}_{n\in \N}$ where $C_i$ has size $n^c$ and depth $c\log^i(n)$, and for every $x\in \zo^i$ we have $f(x)=C_i(x)$. We say a circuit family is \emphdef{logspace uniform} if there is an algorithm that on input $1^i$ runs in space $O(\log i)$ and outputs $C_i$. We let $\unifNC^i$ denote logspace-uniform $\NC^i$. 
\end{definition}
\begin{definition}
    We say $f \in \NC^0_\ell$ if each output bit of $f$ is a function of at most $\ell$ input bits. (Equivalently, $f$ is computable by a depth-$1$ circuit of fan-in $\ell$.)
\end{definition}

\newcommand{\cC}{\mathcal{C}}
For a class of circuits $\cC$, we let $\mathcal{C}\text{-eval}$ be the language $\{(C,x):C\in \cC, C(x)=1\}$.

\subsection{Compression classes}
\label{subsec: compression_classes}
The \emphdef{range avoidance} problem, introduced in \cite{kleinberg2021total}, can be viewed as an algorithmization of the union bound.
\begin{definition}[\cite{kleinberg2021total}]
    \label{def: avoid}
    Let $\AVOID$ be the search problem whose input is a circuit $C: \zo^n \rightarrow \zo^{n+1}$, and whose valid outputs are $y \in \zo^{n+1}$ such that for all $x \in \zo^n$, $C(x) \neq y$.
    
    We further say a procedure solves \emphdef{single-value $\AVOID$ ($\svAVOID$)} if for every avoid instance $C$, there is a single $y_C$ which is a solution to $\AVOID(C)$ such that the procedure always outputs $y_C$ on input $C$.
\end{definition}
$\AVOID$ asks us to find a $y$ which is avoided by the circuit $C$. Notice that such a $y$ is guaranteed to exist by the dual pigeonhole principle. Many explicit construction problems (e.g. the construction of hard truth tables) reduce to $\AVOID$ \cite{korten2022hardest}. We will use an $\AVOID$ oracle in \cref{sec: fp_in_place} when we find that we need certain advice  strings which are guaranteed to exist by the probabilistic method, but are nontrivial to construct explicitly. We will also crucially rely on the fact that $\AVOID$ (and in fact $\svAVOID$) is in the class $\FSTP$, and consequently can be solved by its decision version ($\STP$).

\begin{definition}[\cite{chen2024symmetric}]
    A single valued $\FSTP$ algorithm $A$ is specified by a polynomial $\ell(\cdot)$ together with a polynomial time verification algorithm $V_A(x, \pi_1, \pi_2)$. On an input $x \in \zo^*$, we say $A$ outputs $y_x$, if the following hold:
    \begin{enumerate}
        \item There is a $\pi_1 \in \zo^{\ell(|x|)}$ such that for every $\pi_2 \in \zo^{\ell(|x|)}$, $V_A(x, \pi_1, \pi_2)$ outputs $y_x$.

        \item There is a $\pi_2 \in \zo^{\ell(|x|)}$ such that for every $\pi_1 \in \zo^{\ell(|x|)}$, $V_A(x, \pi_1, \pi_2)$ outputs $y_x$.
    \end{enumerate}
    And we say that $A$ solves a search problem $\Pi$ if on input $x$ it outputs a string $y_x$ and $y_x \in \Pi_x$ (where $\Pi_x$ is the set of possible solutions on input $x$).
\end{definition}

\begin{definition}[\cite{russell1998symmetric}]
    A language $\mathcal{L} \subseteq \zo^*$ is in $\STP$ if there exists a polynomial time verifier $V$ such that the following holds.
    \begin{enumerate}
        \item If $x \in \mathcal{L}$, there exists a $y$ such that for all $z$, $V(x, y, z) = 1$;
        \item If $x \notin \mathcal{L}$, there exists a $z$ such that for all $y$, $V(x, y, z) = 0$.
    \end{enumerate}
\end{definition}

Observe that $\STP$ is contained in $\mathsf{\Sigma_2^P} \cap \mathsf{\Pi_2^P}$. We will rely on the following breakthrough result to connect our use of an $\AVOID$ oracle to $\FSTP$.
\begin{lemma}[\cite{li2024symmetric}]
    There is a single-valued $\FSTP$ algorithm solving $\svAVOID$.
\end{lemma}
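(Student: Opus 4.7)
The plan is to build a single-valued $\FSTP$ algorithm that, on input a circuit $C: \zo^n \to \zo^{n+1}$, outputs a canonical avoidance $y_C \in \zo^{n+1} \setminus \text{Image}(C)$. The natural $\Sigma_2^P$ algorithm (``Y-prover guesses $y$, verifier universally checks $C(x) \neq y$'') is manifestly asymmetric, so the core difficulty is giving the N-prover a polynomial-size message that simultaneously refutes every possible Y-prover candidate and forces agreement on a single output string.

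My first move would be to commit to a canonical $y_C$ via the lexicographically smallest avoidance and try to extract it bit-by-bit. For each fixed prefix $p$ of the alleged output, the subprotocol would decide whether some avoidance has prefix $p \| 0$: if yes, the next output bit is $0$, otherwise $1$. Inside this subprotocol, the Y-prover would send a candidate $y$ with prefix $p \| 0$ together with auxiliary data certifying that $y$ avoids the image, while the N-prover would send a succinct certificate that every string with prefix $p\|0$ lies in the image. Even though the latter is \emph{a priori} $\Pi_2^P$, the counting structure is favorable: when $p\|0$ has length $k$, the N-prover is claiming that at least $2^{n+1-k}$ distinct image points exist, which is a quantitative assertion amenable to hashing.

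The main obstacle, and where I expect a novel idea is needed, is the single-prefix subprotocol itself. My attempted construction would be to have the N-prover send a pairwise-independent hash function $h$ together with the multiset $\{h(C(x))\}_{x \in \zo^n}$ in compressed form, from which the verifier can extract, for any Y-candidate $y$ with prefix $p\|0$, a preimage $x$ with $C(x) = y$; symmetrically, the Y-prover could send a hash-collision certificate invalidating N's compressed multiset if it is malformed. Pushing this to a clean $\STP$ protocol where \emph{both} provers can simultaneously force the correct decision appears to require a delicate balancing of hash parameters across $O(n)$ rounds --- this is the hard step, and I expect it is exactly where Li's key technical innovation enters.

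Given a robust single-prefix $\STP$ protocol, the final step is composition into a single $\FSTP$ algorithm with single-valued output. Rather than invoking a general oracle-composition theorem for $\STP$ (which is delicate), I would have both provers send the full alleged $y_C$ together with per-bit justifications upfront, and design the verifier $V_A$ to check, in one polynomial-time pass, that the claimed $y_C$ is consistent with applying the prefix-subprotocol verifier to every prefix. Single-valuedness then reduces to the single-valuedness of each individual prefix decision, which in turn follows from the $\STP$ structure of the subprotocol.
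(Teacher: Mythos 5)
There is a genuine gap here, and you have in fact flagged it yourself: the single-prefix subprotocol is exactly the content of the theorem, and your sketch of it does not work as stated. Deciding whether some avoided string begins with a given prefix $p\|0$ is a $\Sigma_2^P$-type question, and your proposed N-side certificate --- a pairwise-independent hash $h$ together with ``the multiset $\{h(C(x))\}_{x\in\zo^n}$ in compressed form'' --- is an object of exponential size with no known polynomial-size, polynomial-time-verifiable representation. If the N-prover could certify in this way that \emph{every} string with prefix $p\|0$ lies in the image of $C$, you would have a succinct refutation system for statements that are $\Pi_2^P$-hard in general; nothing in the counting structure of $\AVOID$ (the image covering at least $2^{n+1-k}$ strings) is known to make such certificates exist. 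So the plan reduces the lemma to an unproved and quite strong assumption, and the admitted ``hard step'' is precisely where the proof has to happen. A secondary issue is the choice of canonical output: committing to the lexicographically smallest avoided string builds in a $\Sigma_2^P$-complete-flavored prefix-decision problem at every bit, which is harder than what the lemma needs --- single-valuedness only requires \emph{some} canonical solution determined by the algorithm.

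For comparison, note that the paper does not prove this lemma at all; it imports it from Li. Li's actual argument takes a different route: it iterates Korten's reduction, which converts a hard truth table into a canonical solution of $\AVOID$, and the heart of the construction is a comparison (tournament-style) procedure by which the verifier, given the two provers' candidate objects, can in polynomial time select one that is correct whenever at least one prover is honest. The canonical value $y_C$ is the output of this iterated reduction, not the lex-first avoided string, and no hashing-based compression of the image is involved. Your final composition step (having both provers send the full output with per-bit justifications, checked in one pass) is fine in spirit and resembles how single-valuedness is exploited in \cref{cor: svAvoid_using_S2P}, but it cannot rescue the missing subprotocol.
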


\begin{corollary}
    \label{cor: svAvoid_using_S2P}
    There is an $\FP^{\STP}$ algorithm which solves $\svAVOID$.
\end{corollary}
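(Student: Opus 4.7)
The plan is to derive this from the preceding lemma by a standard bit-by-bit self-reduction, using the single-valuedness crucially to ensure the decision queries land in $\STP$ (not merely $\Sigma_2^{\P}$ or $\Pi_2^{\P}$).

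Let $A$ be the single-valued $\FSTP$ algorithm for $\svAVOID$ given by the lemma, with polynomial witness length $\ell(\cdot)$ and verifier $V_A$, so that on input $x$ it outputs a fixed string $y_x \in \zo^{m(|x|)}$ solving the $\svAVOID$ instance. First I would define, for each position $i \in [m(|x|)]$ and bit $b \in \zo$, the language
\[
L = \{(x,i,b) : \text{the } i\text{-th bit of } y_x \text{ equals } b\},
\]
and argue $L \in \STP$ via the verifier $V_L((x,i,b),\pi_1,\pi_2)$ which simulates $V_A(x,\pi_1,\pi_2)$ and accepts iff the $i$-th bit of its output is $b$. The key point is that the two clauses of the single-valued $\FSTP$ definition are perfectly symmetric: the same $\pi_1^\ast$ that forces $V_A$ to output $y_x$ against every $\pi_2$ witnesses the existential side of $\STP$ for $(x,i,y_x[i])$, and the same $\pi_2^\ast$ that forces $V_A$ to output $y_x$ against every $\pi_1$ witnesses the universal side for $(x,i,1-y_x[i])$. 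So $L \in \STP$.

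Then the $\FP^{\STP}$ algorithm for $\svAVOID$ on input $x$ simply queries $L$ on $(x,i,0)$ for $i = 1,\dotsc,m(|x|)$, records the answer as the $i$-th bit of the output, and prints the resulting string. This is polynomial time with polynomially many oracle calls, and by construction the printed string is $y_x$, which solves the $\svAVOID$ instance since $A$ does.

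The main (tiny) obstacle is the symmetry check above: one must observe that when the $i$-th bit of $y_x$ is $b$, the existential witness side of $\STP$ is satisfied by $\pi_1^\ast$, and when it is $1-b$, the universal refutation side is satisfied by $\pi_2^\ast$ — and that these two $\STP$ conditions are precisely what the definition of single-valued $\FSTP$ hands us. Nothing else in the reduction is subtle: we are just using an $\STP$ oracle to read off the bits of a single-valued $\FSTP$ function, which is the usual way to show $\FSTP \subseteq \FP^{\STP}$ for functions with polynomially bounded output length.
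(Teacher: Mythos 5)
Your proposal is correct and follows essentially the same route as the paper: both reduce to the decision problem of reading off individual bits of the canonical solution $y_x$, place that decision problem in $\STP$ using the single-valued verifier $V_A$ (with the same symmetry observation that one side of the $\FSTP$ guarantee supplies the existential witness and the other the universal refutation), and then recover $y_x$ bit-by-bit with polynomially many $\STP$ oracle queries.
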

\begin{proof}
    Let $V_A(x, \pi_1, \pi_2)$ be the verifier that confirms $\svAVOID$ is in $\FSTP$. Note that if we structure $\svAVOID$ as a decision problem, where on input $(C,i)$ the output is the $i$th bit of the canonical solution $y_C$, then this problem is in $\STP$. The verifier $V'((C, i), \pi_1, \pi_2)$ that proves this simply outputs whether bit $i$ of $V(C, \pi_1, \pi_2) = y_C$ is $1$. Therefore, there is an $\FP^{\STP}$ algorithm which on input $C$ outputs $y_C$.
\end{proof}

On occasion, it will be useful to think of the input and output of a circuit $C: [A] \rightarrow [B]$ we feed to an $\AVOID$ oracle as integers rather than bitstrings. We now argue that this is almost without loss of generality. We can always take a circuit $C: [A] \rightarrow [B]$ and turn it into a circuit which works over bits $C': \zo^{\lceil \log_2(A) \rceil} \rightarrow \zo^{\lfloor \log_2(B) \rfloor}$ by defining $C'(x) = \text{bitstring}(C(\text{int}(x)))$ for all $x$ where $\text{int}(x) < A$ and $C(\text{int}(x)) < 2^{\lfloor \log_2(B) \rfloor}$, and $C'(x) = 0^{\lfloor \log_2(B) \rfloor}$ otherwise (here $\text{bitstring}(\cdot)$ is the canonical map from integers to bitstrings and $\text{int}(\cdot)$ is its inverse). $C'$ may be slightly less expanding than $C$, but if $B/A \geq 8$, then $C'$ still expands by at least 1 bit and will therefore be a valid input to $\AVOID$. This also works for $C: [A_1] \times \dots \times [A_t] \rightarrow [B_1] \times \dots \times [B_t]$ if we simply reinterpret $C$ as $C: [A_1 \times \dots \times A_t] \rightarrow [B_1 \times \dots \times B_t]$.

Finally, we define the \emphdef{lossy coding} problem, which captures explicit constructions where we have explicit compression \textit{and} decompression algorithms:
\begin{definition}[\cite{korten2022derandomization}]
    For a circuit class $\mathcal{C}$, let $\mathcal{C}\text{-}\lossycode$ be the
    search problem whose input is a pair of circuits $C, D\in \mathcal{C}$ with
    $C \colon \bin^n \to \bin^{n-1}$ and $D \colon \bin^{n-1}\to\bin^n$, and whose valid outputs are $\{x \ | \ D(C(x)) \neq x\}$.
\end{definition}

\subsection{Cryptography}
Assuming cryptographic primitives can be useful for proving conditional lower bounds, of which we focus on one-way permutations:
\begin{definition}\label{def:OWP}
    We say a length-preserving function $f:\zo^*\ra\zo^*$ is a (uniform) \emphdef{one-way permutation} if for every $n\in \N$, $f$ restricted to $n$-bit inputs is a permutation, and for every randomized polynomial-time algorithm $\cA$,
    \[
    \Pr_{x}[\cA(x)=f^{-1}(x)] \le n^{-\omega(1)}.
    \]
\end{definition}

For us, the relevant machines for attempting to break one-way permutations are \textit{average-case} complexity classes.

\begin{definition}[avgP and avgZPP]
    We say $L$ is computable in (errorless) $\avgP_\eps$ (resp. $\avgZPP_\eps$) if there is a polynomial time (resp. randomized polynomial time) machine $\cM$ such that for every $n\in \N$, $\cM(x)\in \{L(x),\perp\}$ and $\cM$ outputs $\perp$ with probability at most $\eps$ over $x\la U_n$ (resp. over $x\la U_n$ and the internal randomness of $\cM$). 
\end{definition}

\subsection{Finite fields and linear algebra}\label{sec:algebra}

A number of our techniques use linear algebra over finite fields.
We will thus show, as prerequisites, how to store and manipulate field elements efficiently, and review past work which implies that matrix rank and inverse, and the product of a list of matrices, can be computed in \( \FCL \).

\begin{definition}\label{def:RepresentableField}
A finite field \( K \) is \emph{representable in space \( s \)} if there is an injective function \( r : K \to \zo^s \) and algorithms \( \mathsf{ADD} \), \( \mathsf{MULTIPLY} \) and \( \mathsf{VALID} \) which in space \( O(s)  \) compute the following.
\( \mathsf{ADD} \) and \( \mathsf{MULTIPLY} \) on input \( r(x), r(y) \) compute \( r(x+y), r( x \cdot y ) \) respectively.
\( \mathsf{VALID} \) determines whether its input is in the range of \( r \).
When $s = O(\log(n))$, we simply say $K$ is \emph{representable}.
If \( r \) is a bijection, we say \( K \) is \emph{exactly representable}.
\end{definition}
Note that \( \mathsf{MULTIPLY} \) and \( \mathsf{VALID} \) are enough to compute inverses by enumerating all possibilities.

For prime \( p \) and positive integer \( k \), we write \( \GF( p^k ) \) for the unique (up to field automorphisms) finite field with \( p^k \) elements.
\begin{lemma}\label{lem:RepresentableFields}
Every finite field \( \GF( p^k ) \) is representable in space \( k \lceil\log p \rceil \), and \( \GF( 2^k ) \) is exactly representable in space \( k \).

These representations are uniform in the sense that there are universal algorithms \( \mathsf{ZERO}^* \), \( \mathsf{ONE}^* \), \( \mathsf{ADD}^* \), \( \mathsf{MULT}^* \), \( \mathsf{VALID}^* \) which take \( p \) and \( k \) as parameters and produce the representations of \( 0 \) and \( 1 \) in \( \GF( p^k ) \), and implement \( \mathsf{ADD} \), \( \mathsf{MULTIPLY} \), and \( \mathsf{VALID} \) \( \GF( p^k ) \), respectively.
\end{lemma}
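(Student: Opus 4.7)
The plan is to use the standard representation of \(\GF(p^k)\) as \(\GF(p)[X]/(f(X))\) for a fixed monic irreducible polynomial \(f\) of degree \(k\) over \(\GF(p)\). An element is encoded by writing each of its \(k\) coefficients (in \(\{0,1,\ldots,p-1\}\)) as a \(\lceil\log p\rceil\)-bit integer and concatenating, giving an injection \(r \colon \GF(p^k)\to\{0,1\}^{k\lceil\log p\rceil}\). When \(p=2\) each coefficient is already a single bit, so the encoding is a bijection, which is precisely the claim of exact representability of \(\GF(2^k)\) in space \(k\).

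Given \(f\), the operations are routine. \(\mathsf{VALID}\) walks through the input in blocks of \(\lceil\log p\rceil\) bits and checks that each block represents an integer strictly less than \(p\). \(\mathsf{ADD}\) performs coefficient-wise addition modulo \(p\). \(\mathsf{MULTIPLY}\) carries out the schoolbook polynomial product (producing a polynomial of degree at most \(2k-2\)) and then repeatedly subtracts \(\GF(p)\)-scalar multiples of appropriate shifts of \(f\) to knock each high-order coefficient down to zero. All intermediate polynomials have length \(O(k\log p)\), and each elementary coefficient operation uses only \(O(\log p)\) auxiliary space, so each of these algorithms fits in \(O(k\log p)\) space. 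The universal versions \(\mathsf{ZERO}^*,\mathsf{ONE}^*,\mathsf{ADD}^*,\mathsf{MULT}^*,\mathsf{VALID}^*\) take \((p,k)\) as parameters and run exactly the same code, with \(\mathsf{ZERO}^*,\mathsf{ONE}^*\) simply returning the trivial encodings of the constants \(0\) and \(1\).

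The main obstacle is uniformly producing \(f\) within the allowed space budget. The plan is for each universal algorithm to regenerate \(f\) on demand by enumerating monic degree-\(k\) polynomials over \(\GF(p)\) in lexicographic order and running an irreducibility test on each candidate, keeping the first one that passes. I would use the classical criterion that \(f\) is irreducible iff \(X^{p^k}\equiv X\pmod{f}\) and \(\gcd(X^{p^{k/q}}-X,\,f)=1\) for every prime divisor \(q\) of \(k\). The power \(X^{p^k}\bmod f\) is computed by \(O(k\log p)\) successive squarings modulo \(f\), each squaring being a polynomial product followed by a mod-\(f\) reduction and so implementable in \(O(k\log p)\) space; the gcds are degree-\(\leq k\) Euclidean algorithms, also in \(O(k\log p)\) space. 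Some such \(f\) must be found, because \(X^{p^k}-X\) factors as the product of all monic irreducibles of degree dividing \(k\), yielding at least \(p^k/k - \sum_{d\mid k,\,d<k} p^d/d > 0\) irreducibles of degree exactly \(k\). The whole enumeration, together with the arithmetic on encoded elements, fits in \(O(k\log p)\) workspace, establishing the uniform representability claim.
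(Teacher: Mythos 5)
Your proposal is correct and follows essentially the same route as the paper: represent $\GF(p^k)$ as $\GF(p)[x]/(q(x))$ for the lexicographically first irreducible $q$ of degree $k$, pack coefficients into $\lceil \log p\rceil$-bit blocks (exact when $p=2$), implement the operations by polynomial arithmetic with reduction mod $q$, and regenerate $q$ uniformly by enumeration within the space bound. The only difference is cosmetic: you certify irreducibility via the Rabin-style criterion with repeated squaring and gcds, whereas the paper simply brute-forces over all possible factors --- both fit comfortably in the allowed space, and your explicit count of degree-$k$ irreducibles is a nice (if standard) addition.
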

\begin{proof}
    Denote the ring of polynomials with coefficients in \( \GF(p) \) by \( \GF(p)[x] \).
    For any irreducible polynomial \( q(x) \in \GF(p)[x] \) of degree \( k \), the field \( \GF( p^k ) \) is isomorphic to \( \GF(p)[x] / (q(x)) \).
    In particular, we can identify the elements of \( \GF( p^k ) \) with polynomials with coefficients in \( \GF(p) \) and degree less than \( k \).
    A polynomial can be stored as a list of \( k \) integers from \( 0 \) to \( p-1 \), each occupying \( \lceil \log p \rceil \) bits.
    If \( p=2 \), coefficients are in \( \zo \), so the representation is exact.
    To add field elements, add their corresponding polynomials, and to multiply them, first multiply their polynomials and then reduce modulo \( q(x) \) to get a polynomial of degree less than \( k \).
    \( \mathsf{ZERO}^* \) and \( \mathsf{ONE}^* \) produce the polynomials \( 0 \) and \( 1 \), respectively.
    The \( \mathsf{VALID}^* \) function checks that each coefficient is between \( 0 \) and \( p-1 \) (which will always be true if \( p=2 \)).
    All of this is straightforward to compute in space \( k^{ O(1) } \) once the polynomial \( q(x) \) is determined.
    
    In order for the representations to be well-defined, and to implement the universal algorithms \( \mathsf{ADD}^* \) and \( \mathsf{MULTIPLY}^* \), we need a consistent choice of irreducible polynomial \( q_k(x) \) for each \( k \).
    
    Define \( q_k(x) \) to be the lexicographically first irreducible polynomial of degree \( k \).
    This can be found in space \( k^{ O(1) } \) by trying polynomials until one is determined to be irreducible.
    (To test whether a polynomial is irreducible, test all possible factors.)
\end{proof}

\begin{lemma}[Matrix operations]\label{lem:LinAlg}
For any representable field \( K \), the following can be computed in \( \FCL \):
\begin{itemize}
\item Multiplying many matrices: compute \( \prod_{ i=1 }^{ \ell } A_i \) given \( \ell \) matrices \( A_i \in K^{ n \times n } \).
\item The rank of a matrix \( A \in K^{ m \times n } \).
\item The inverse \( A^{ -1 } \) of an invertible matrix \( A \in K^{ n \times n } \).
\end{itemize}
\end{lemma}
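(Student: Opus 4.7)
The plan is to reduce all three items to iterated matrix multiplication over \( K \) and then use the fact that iterated matrix multiplication lies in \( \FCL \). I would proceed in three stages.

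\textbf{Iterated matrix product in \( \FCL \).} Matrix powering was placed in \( \CL \) in the original work of Buhrman, Cleve, Kouck\'y, Loff, and Speelman via a reversible ``accumulator'' protocol on the catalytic tape, and the same protocol yields all entries of the power, putting matrix powering into \( \FCL \). The iterated product \( A_1 \cdots A_\ell \) is then one \( \FL \)-reduction away: form the \( (\ell+1) n \times (\ell+1) n \) block matrix \( M \) with \( A_i \) on the \( i \)-th block super-diagonal and zeros elsewhere, so that \( A_1 \cdots A_\ell \) appears as the top-right \( n \times n \) block of \( M^\ell \). Since \( K \) is representable, every field operation invoked by the protocol is \( \FL \)-computable, so the entire pipeline fits in \( \FCL \).

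\textbf{Rank and inverse via standard reductions.} The rank of \( A \in K^{m \times n} \) can be extracted from the characteristic polynomial of a suitable auxiliary square matrix (e.g.\ \( A^\top A \) in odd characteristic, or a symmetric replacement in small characteristic), and the characteristic polynomial, the determinant, and iterated matrix product are all \( \FL \)-interreducible via the Berkowitz and Mahajan--Vinay chain of reductions. For the inverse, I would use
\[
A^{-1} = (\det(A))^{-1} \cdot \Adj(A),
\]
where \( \det(A) \) and each entry of \( \Adj(A) \) (an \( (n-1) \)-sub-determinant) once more reduce to iterated matrix product, and the scalar division \( (\det(A))^{-1} \) is \( \FL \)-computable from representability of \( K \). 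A Cayley--Hamilton / Csanky-style \( \NC^2 \) inverse algorithm is a viable alternative, again bottoming out in iterated matrix product.

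\textbf{Main obstacle.} The reductions above are all classical over \( \mathbb{Z} \) or \( \GF(2) \); the actual content of the lemma is verifying that they use nothing of the field \( K \) beyond \( \FL \)-computable addition, multiplication, and inversion, which is exactly what \cref{def:RepresentableField} and \cref{lem:RepresentableFields} guarantee. The only mild subtlety is the characteristic-dependence of the rank reduction, which can be handled by working inside a constant-degree extension of \( K \), itself representable by \cref{lem:RepresentableFields}. Apart from this bookkeeping, the proof is a matter of citing these known reductions together with the base fact that matrix powering lies in \( \CL \).
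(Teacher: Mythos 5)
Your first and third items are essentially sound and match the paper's route: the paper also obtains the iterated product from a ring-agnostic register program (Ben-Or--Cleve, or Lemmas~4/8 of Buhrman et al.), noting exactly as you do that representability of \( K \) makes each field operation logspace-computable, and it also gets the inverse from \( (A^{-1})_{ij} = (-1)^{i+j}\Det(A_{-j,-i})/\Det(A) \) with the determinant reduced to iterated product (via the Valiant/Toda reduction as presented by Allender--Beals--Ogihara). Your block-superdiagonal trick for reducing the iterated product to powering is a harmless variant.

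The rank item, however, has a genuine gap. Over a finite field, \( \rk(A^\top A) \) need \emph{not} equal \( \rk(A) \): the real-field argument (\( A^\top Ax=0 \Rightarrow x^\top A^\top Ax = \norm{Ax}^2 = 0 \Rightarrow Ax=0 \)) breaks because finite fields have nonzero isotropic vectors. For instance, over \( \GF(5) \), taking \( A \) with columns \( (1,2)^\top \) and \( (0,0)^\top \) gives \( A^\top A = 0 \) while \( \rk(A)=1 \); by Chevalley--Warning this phenomenon occurs over every finite field once \( n\ge 3 \), so it is not a characteristic-\(2\) issue, and your proposed fix of passing to a constant-degree extension of \( K \) makes matters worse rather than better (extensions only acquire more isotropic vectors). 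This is precisely the failure that Mulmuley's rank algorithm was designed to circumvent (roughly, by working with a diagonal scaling by powers of an indeterminate before forming the symmetrized matrix), and the paper handles rank by citing Mulmuley and von zur Gathen, whose reductions to iterated matrix product are valid over an arbitrary field. To repair your proof you should replace the \( A^\top A \) step with one of those reductions (checking, as you do elsewhere, that they use only \( \FL \)-computable field arithmetic); as written, the rank bullet does not follow.
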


\begin{proof}
Past work proves the result for integer matrices, and by extension, fields with a prime number of elements, and in general reduces the second and third problems to the first problem over any ring with unity.
All that remains is to complete the proof for arbitrary finite fields.

The product \( \prod_{ i=1 }^{ \ell } A_i \) can be computed in \( \FCL \) using the technique of Ben-Or and Cleve \cite{BenorCleve92}, which computes a depth-\( d \) algebraic formula over any ring (in this case, a ring of matrices) using \( O( 4^d ) \) invertible operations.
Alternatively, Lemma~4 (applied recursively) or Lemma~8 of Buhrman, Cleve, Kouck\'y, Loff and Speelman \cite{buhrman2014computing} will accomplish the same thing.

From here, Allender, Beals and Ogihara~\cite{rankEtc} serve as a useful guide through past work.
Computing the determinant \( \Det( M ) \) of any \( M \in K^{ n \times n } \) reduces to computing \( \prod_{ i=1 }^{ \ell } A_i \) via their Proposition~2.2, which they attribute to Valiant~\cite{ValiantBoolDifficult} and \cite{TodaDet}.

This immediately allows us to compute matrix inverses, since for any matrix $Q$ the \( (i,j) \)-th entry of \( Q^{ -1 } \) equals $ (-1)^{ i+j } \Det( Q_{ -j,-i } ) / \Det( Q ) $, where \( Q_{ -j,-i } \in K^{ (n-1) \times (n-1) } \) is \( Q \) with the \( j \)-th row and \( i \)-th column deleted.

Finally, Mulmuley~\cite{mulmuleyRank} and von zur Gathen~\cite{gathenParallelLinAlg} reduce computing the rank of a matrix to computing \( \prod_{ i=1 }^{ \ell } A_i \).
\end{proof}

We also mention a folklore result that two blocks of memory can be efficiently
swapped with no additional memory. This will be useful in a number of
proofs.
\begin{lemma}\label{lem:swap}
    Let $\langle a,b \rangle \in \{0,1\}^m \times \{0,1\}^m$ for any $m \in \mathbb{N}$.
    Then SWAP$(\langle a,b \rangle) = \langle b,a \rangle$ can be computed in $\inplaceFL$.
\end{lemma}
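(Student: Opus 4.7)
The plan is to use the classical three-reversal trick. Observe that if $\mathsf{REV}$ reverses a bit string, then for any $a, b \in \{0,1\}^m$ we have $(a^R b^R)^R = b\,a$. So the algorithm performs three in-place reversals on subintervals of the read-write tape:
\begin{enumerate}
    \item Reverse positions $[1,m]$, changing $ab$ into $a^R b$.
    \item Reverse positions $[m+1,2m]$, changing $a^R b$ into $a^R b^R$.
    \item Reverse positions $[1,2m]$, changing $a^R b^R$ into $b\,a$.
\end{enumerate}
At the end, the machine writes $1$'s into its $O(\log n)$ workspace bits to satisfy the cleanup condition in the definition of $\inplaceFSPACE$.

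It remains to show that reversing a subinterval of length at most $n$ is in $\inplaceFL$. For this, the algorithm stores two head-position counters $p$ and $q$, each in $O(\log n)$ bits of workspace, initialized to the left and right endpoints of the interval. In a loop, while $p < q$, it reads the bit at position $p$, reads the bit at position $q$, writes each into the other position, then increments $p$ and decrements $q$. The only state the algorithm must maintain between swaps is the pair $(p,q)$, which fits in $O(\log n)$ bits; the bit exchange itself needs only a couple of additional bits in the finite-state control.

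Composing the three reversal phases in sequence uses the same $O(\log n)$ workspace (reused between phases), plus a constant-size flag in the finite-state control that tracks which phase is currently running and the corresponding endpoints, which are $1,m,m+1,2m$ and are easily computed from $n$. Since the length $n=2m$ is fixed by the tape length, the machine can recover $m$ by a single pass over the tape (or read it off the tape bounds in the standard multi-tape model). The total additional workspace is therefore $O(\log n)$, giving $\mathrm{SWAP} \in \inplaceFL$.

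There is essentially no conceptual obstacle here; the only thing to verify is that the three-reversal trick truly needs no scratch tape beyond the pointers, which it does not because each elementary reversal step only exchanges two bits on the working tape. This completes the proposed proof.
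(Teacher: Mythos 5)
Your proof is correct, but it takes a genuinely different route from the paper. The paper uses the classic XOR-swap trick: three coordinate-wise XOR sweeps, $R_a \gets R_a \oplus R_b$, then $R_b \gets R_a \oplus R_b$, then $R_a \gets R_a \oplus R_b$, each of which is trivially in-place since position $i$ of one block is combined only with position $i$ of the other, so a single $O(\log n)$-bit index suffices per sweep. You instead use the three-reversal (rotation) trick, $(a^R b^R)^R = b\,a$, with each reversal done by a two-pointer exchange; this is equally valid and also needs only $O(\log n)$ bits for the pointers plus constant finite-control state, and you correctly handle the bookkeeping (recovering $m$ from the tape length, setting the workspace to all ones at the end). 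The XOR approach is slightly easier to verify since each sweep touches aligned positions and never needs two simultaneously-moving heads over the same region, while your reversal approach has the small advantage of generalizing immediately to rotating the tape by an arbitrary offset, i.e.\ to swapping blocks of unequal length, which the aligned XOR trick does not directly give. No gap; both arguments establish the lemma.
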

\begin{proof}
    Letting $R_a$ and $R_b$ be the memory holding $a$ and $b$ respectively, and letting
    $\oplus$ denote coordinate-wise XOR, the following instructions compute SWAP:
    \begin{enumerate}
        \item $R_a = R_a \oplus R_b$
        \item $R_b = R_a \oplus R_b$
        \item $R_a = R_a \oplus R_b.$\qedhere
    \end{enumerate}
\end{proof}

\section{Separating \ts{$\FL$}{FL} and \ts{$\inplaceFL$}{inplaceFL}}\label{sec:FLsep}
\subsection{Permutations computable in-place but not with transducers}
We first show that there is a function in $\inplaceFL$ that does not lie in $\FL$. 
\begin{theorem}\label{thm:flsep}
    There is a permutation $f\in \inplaceFL$ such that $f\notin \FSPACE[n/\omega(1)]$.
\end{theorem}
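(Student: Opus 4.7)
The plan is to construct $f$ by applying a space-hard permutation to a short prefix of the input precisely when the suffix is entirely zero, so that the zero suffix becomes available as scratch space for the in-place algorithm while the permutation property is preserved.

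By a standard diagonalization (space hierarchy), fix a language $L \subseteq \bin^*$ of exact space complexity $\Theta(\ell)$, so $L \in \FSPACE[\ell]$ but $L \notin \FSPACE[s(\ell)]$ for every $s(\ell) = o(\ell)$. Define the involution $h : \bin^\ell \to \bin^\ell$ by $h(b, x) = (b \oplus L(x), x)$ for $b \in \bin, x \in \bin^{\ell-1}$; then $h \notin \FSPACE[o(\ell)]$, since reading the first bit of $h(0,x)$ recovers $L(x)$. Setting $n = 10\ell$, define $f : \bin^n \to \bin^n$ by
\[
f(y, z) = \begin{cases} (h(y), 0^{9\ell}) & \text{if } z = 0^{9\ell}, \\ (y, z) & \text{otherwise}, \end{cases}
\]
with $|y| = \ell, |z| = 9\ell$. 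This is a permutation because the images of the two cases are distinguished by whether the $z$-coordinate is zero. A standard padding argument shows that $f \in \FSPACE[s(n)]$ for any $s(n) = o(n)$ would yield $h \in \FSPACE[o(\ell)]$, contradicting the choice of $L$.

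To show $f \in \inplaceFL$, the algorithm first scans $z$ in $O(\log n)$ space; if any bit is nonzero, it halts, leaving the tape unchanged. Otherwise, it must mutate $y$ into $h(y)$ while restoring $z$ to $0^{9\ell}$. By the Lange--McKenzie--Tapp theorem that reversible logspace equals deterministic logspace (more generally, reversible space $= $ deterministic space), there is a reversible Turing machine $M_L$ deciding $L$ in space $O(\ell)$ which, via the standard compute-record-uncompute trick, can be arranged to halt with its workspace fully restored to zero and the answer $L(y_2 \cdots y_\ell)$ encoded in the halting state. The in-place algorithm simulates $M_L$ on input $y_2 \cdots y_\ell$ using an $O(\ell)$-bit region of $z$ as its workspace and tracking $M_L$'s head position and state in $O(\log n)$ of in-place scratch. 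After the simulation, the $z$-region is back to $0^{9\ell}$ and we read off $L(y_2 \cdots y_\ell)$ from the recorded halting state; XORing this bit into $y_1$ completes the transformation $y \mapsto h(y)$ in place.

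The main obstacle is the catalytic-like restoration of $z$: because $f$ is a permutation, we cannot afford to leave any residue on the tape, which rules out a naive space-$\ell$ simulation of a decider for $L$. The reversible space simulation is exactly the technical tool needed to clean up the workspace; once we have it, the rest is straightforward bookkeeping and the padding argument for hardness.
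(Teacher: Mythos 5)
Your construction is correct, and it shares the paper's core mechanism --- condition the hard behaviour on a large all-zero region of the input, so that region doubles as workspace and can be reset trivially --- but you execute it differently. The paper takes the hard language to be \emph{unary}, and only perturbs inputs of the form $0^{n-1}b$: there the hard instance is determined by the input length alone, the freed region is the all-zero prefix, and ``restoring'' it just means writing zeros back, so no cleanup machinery is needed. You instead keep a genuinely input-dependent hard language, apply the involution $h(b,x)=(b\oplus L(x),x)$ to a $\Theta(n)$-length prefix that must stay intact, and use the zero suffix as workspace; this is fine, and arguably gives a more ``honest'' hard function, but your appeal to Lange--McKenzie--Tapp reversible simulation to restore the suffix is unnecessary overkill. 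Unlike in catalytic computation, the scratch region here does not have to be returned to an unknown initial value: its required final content is the fixed string $0^{9\ell}$, so you can simply run an ordinary space-$O(\ell)$ decider for $L$ (reading the intact prefix, tracking heads in $O(\log n)$ bits) and then blank the region. One further point, equally present in the paper: you both assert a language in $\SPACE[n]$ lying outside $\SPACE[s]$ for \emph{every} $s=o(n)$ simultaneously, which needs a slightly more careful diagonalization than the textbook two-bound hierarchy theorem, but this is standard and matches the paper's level of detail. With the reversibility detour removed, your proof is a clean alternative to the paper's; as written it is still correct, just heavier than needed.
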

\begin{proof}
    Let $L_{hard}$ be a unary language in $\SPACE[n]\setminus \SPACE[n/\omega(1)]$, which exists per the space hierarchy theorem~\cite{stearns1965hierarchies}. Let 
    \[
    f(x) = \begin{cases}
        x & x\neq 0^{n-1}b \text{ for } b \in \bin\\
        0^{n-1}b & x=0^{n-1}b\text{ and } 0^n \not\in L_{hard}\\
        0^{n-1}\overline{b} & x=0^{n-1}b\text{ and } 0^n \in L_{hard}.
    \end{cases}
    \]
    We first claim $f\in \inplaceFL$. The $\inplaceFL$ algorithm for $f$ leaves the input intact unless all but the last bit are $0$s. Otherwise, the algorithm uses the $O(\log n)$ bits of extra space to remember the length of $x$, erases the first $n - 1$ bits, computes $L_{hard}$ on the specified input length, and either flips or preserves the final bit as appropriate. 
    
    We now claim that $f\notin \FSPACE[n/\omega(1)]$. Note that for $g\in \FSPACE[s]$ the language $L_{first}=\{x:g(x)_n=1\}$ is in $\SPACE[s]$, and hence if $f\in \FSPACE[n/\omega(1)]$ we would have $L_{hard}\in \SPACE[n/\omega(1)]$, violating the space hierarchy theorem.
\end{proof}

This immediately implies \cref{prop:inplacehax}, and the proof of \cref{prop:inplacehaxcl} is the same:

\begin{proof}[Proof of \cref{prop:inplacehaxcl}]
    Let $L_{hard}$ be a $\SPACE[n]$-complete language under logspace reductions. If $L_{hard}\in \CL$, we would have $\PSPACE=\CL$ via a simple padding argument, so this does not occur by assumption.
    We define $f$ as in \cref{thm:flsep}; once again this function is in $\inplaceFL$ (and hence $\inplaceFCL$), but if $f \in \FCL$ then $L_{hard} \in \CL$.
\end{proof}

In fact, because the space hierarchy theorem relativizes, both of these results hold relative to every oracle $O$ as well.

\subsection{Average-case inversion of permutations in \ts{$\inplaceFL$}{inplaceFL}}\label{sec:inversion}
It will be convenient to define configuration graphs of in-place FL machines:
\begin{definition}[inplaceFL configuration graph]
    Given a machine $\cM$ computing $f$ in $\inplaceFL$, for every $n\in \N$ we define the \emphdef{configuration graph $\cG_{\cM}$} of $\cM$ to be the graph with vertices in $(\tau,\mu)$, where $\tau\in \zo^n$ holds the $n$ bits of the in-place tape and $\mu\in \zo^{O(\log n)}$ holds all other bits of the configuration. We assume WLOG that the machine starts in configuration $(x,\vz)$ and halts in configuration $(f(x),\vo)$ for every input $x$. We let $\Gamma^{-1}[(\tau,\mu)]$ be the set of configurations that reach $(\tau,\mu)$.
\end{definition}
We can easily extend this definition to the configuration graph of catalytic machines:
\begin{definition}[inplaceFCL configuration graph]
    Given a machine $\cM$ computing $f$ in $\inplaceFCL$, for every $n\in \N$ we define the \emphdef{configuration graph $\cG_{\cM}$} of $\cM$ to be the graph with vertices in $(\tau,\mu,w)$, where $\tau\in \zo^n$ holds the $n$ bits of the in-place tape, $w$ holds the catalytic tape, and $\mu\in \zo^{O(\log n)}$ holds all other bits of the configuration. We assume without loss of generality that the machine starts in configuration $(x,\vz,w)$ and halts in configuration $(f(x),\vo,w)$ for every input $x$ and initial tape $w$. We let $\Gamma^{-1}[(\tau,\mu,w)]$ be the set of configurations that reach $(\tau,\mu,w)$.
\end{definition}
Moreover, observe that this configuration graph has out-degree one (except for halt states which have out-degree zero), $2^n$ start and halting configurations, and $2^n\cdot \poly(n)$ total vertices. We note that in- and out-configurations can be computed easily.

\begin{fact}\label{fct:enum}
    There is a logspace algorithm that, given $(\tau,\mu)$ (resp. $(\tau,\mu,w)$), enumerates all neighboring (in and out) configurations.
\end{fact}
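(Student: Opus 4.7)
The plan splits cleanly according to the direction of the edge. For out-neighbors there is nothing to do beyond direct simulation: the machine is deterministic, so every non-halting configuration has a unique successor, and the window of bits that a single step depends on (the contents of each tape head's current cell, the internal state, the head positions) is read off from $(\tau,\mu)$ using $O(\log n)$ bits, the transition function is then consulted, and the successor configuration is produced by editing the constant-sized window of $(\tau,\mu)$ in place. The entire operation uses $O(\log n)$ auxiliary space.

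For in-neighbors the point is that a single Turing-machine transition modifies only a bounded number of bits, so any predecessor of $(\tau,\mu)$ must agree with it outside a constant-size \emph{window} consisting of the cells that were just written, plus the internal state and head-position registers. I would therefore iterate, using an $O(\log n)$-bit counter, over every candidate configuration obtained by choosing (i) a position for each tape head in the predecessor, (ii) a previous internal state, and (iii) previous contents of the constantly many cells the machine could have written; since only item (i) ranges over something of size $\poly(n)$ and items (ii)-(iii) are of constant size, there are only $\poly(n)$ candidates. For each candidate I reconstruct the full predecessor configuration (just a constant-size edit to $(\tau,\mu)$), simulate one step of $\cM$ using \cref{fct:enum}'s forward direction, and output the candidate if and only if the simulation lands exactly on $(\tau,\mu)$. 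Correctness is immediate from the observation above, and each check runs in logspace since both the edit and the one-step simulation do.

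The catalytic variant is identical in structure: the window is now spread across the input/work/catalytic tapes but is still of constant size, so the same enumerate-and-verify scheme goes through with the catalytic tape $w$ treated no differently from $\tau$. I do not expect any real obstacle here; the only mild care required is that the enumeration of head positions and transition choices be laid out in a canonical order so that it can be driven by a single logspace counter and so that the reconstructed predecessor differs from $(\tau,\mu)$ or $(\tau,\mu,w)$ only in the designated window, which is what keeps reconstruction and verification both in $O(\log n)$ space.
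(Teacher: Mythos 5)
Your proof is correct: the paper states this Fact without proof (treating it as folklore), and your enumerate-and-verify argument --- forward direction by one-step simulation of the deterministic machine, backward direction by enumerating the constantly many local windows (predecessor state, head positions, and overwritten cell contents) and checking that one forward step reproduces $(\tau,\mu)$ resp.\ $(\tau,\mu,w)$ --- is exactly the intended reasoning. One tiny simplification you could note is that the predecessor's head positions must lie within distance one of the current ones, so even the head-position enumeration is constant-size rather than $\poly(n)$, though enumerating all positions is harmless for the logspace bound.
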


We can then prove the main result:\footnote{All of these results relativize, where the inversion is then in $\avgP_{n^{-c}}^O$.}
\begin{lemma}
    Let $f\in \inplaceFL$ be a family of permutations. Then $g=f^{-1}$ can be computed in $\avgP_{n^{-c}}$ for every $c$.
\end{lemma}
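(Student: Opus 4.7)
The plan is to use the in-place configuration graph $\cG_\cM$ of the machine $\cM$ computing $f$, and exploit its very restrictive structure: out-degree exactly one (away from halting), $2^n \cdot \poly(n)$ total configurations, and --- because $f$ is a permutation --- exactly $2^n$ distinct halting configurations $(y,\vo)$, one per output $y$. Since forward trajectories are deterministic, each configuration reaches at most one halting configuration, so the reverse-reachability sets $T_y \defeq \Gamma^{-*}[(y,\vo)]$ are pairwise disjoint and satisfy
\[
\sum_{y \in \zo^n} |T_y| \;\le\; |V(\cG_\cM)| \;=\; 2^n \cdot \poly(n).
\]
Hence the average size of $T_y$ over a uniformly random $y$ is only $\poly(n)$, and Markov's inequality gives $\Pr_{y \la U_n}[|T_y| > n^c \cdot \poly(n)] \le n^{-c}$.

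The inversion algorithm, on input $y$, performs a backwards BFS from $(y,\vo)$: it maintains a queue of visited configurations and repeatedly uses \cref{fct:enum} to enumerate in-neighbors. It aborts with output $\perp$ once more than $T \defeq n^c \cdot \poly(n)$ configurations have been explored; otherwise, as soon as it discovers a configuration of the form $(x,\vz)$ it outputs $x$. Since $f$ is a permutation, there is a unique $x$ with $f(x)=y$, and the corresponding start configuration $(x,\vz)$ lies in $T_y$; so whenever $|T_y| \le T$ the algorithm finds it and outputs the correct $f^{-1}(y)$. Storing the visited set and running BFS takes time $\poly(n) \cdot T = \poly(n)$, giving a polynomial-time errorless algorithm whose failure probability over $y \la U_n$ is at most $n^{-c}$.

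The main thing to get right is the double-counting step: one must verify that the fact that $\cG_\cM$ has out-degree $1$ together with the permutation assumption really yields $\sum_y |T_y| \le |V(\cG_\cM)|$. This requires noting that any configuration either runs forever, halts at some non-output-shaped state (which cannot happen for a well-formed in-place machine), or halts at a unique $(y,\vo)$; then disjointness of the $T_y$ and the vertex-count bound follow. The remaining ingredients --- that $y$ uniform is the same as $x$ uniform because $f$ is a bijection, that \cref{fct:enum} gives efficient in-neighbor enumeration, and that the BFS fits in polynomial time --- are essentially immediate, so I do not anticipate any serious technical obstacle.
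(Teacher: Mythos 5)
Your proposal is correct and follows essentially the same argument as the paper: both use the out-degree-one configuration graph, the disjointness of the reverse-reachability sets of the $2^n$ halting configurations to bound their average size by $\poly(n)$, and a capped backwards search from $(y,\vo)$ that outputs the unique start configuration found. The only cosmetic difference is that you spell out the Markov/averaging step and the BFS bookkeeping slightly more explicitly than the paper does.
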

\begin{proof}
    Let $\cM$ be the $O$-oracle machine that computes $f$ and let $\cG_{\cM}$ be the configuration graph of this machine. Since there are $2^n$ halting configurations $(y,\vo)$ and \( 2^n \cdot n^{ O(1) } \) total configurations $(\tau,\mu)$, and since every configuration reaches at most one halting configuration, we can conclude that with probability at least $1-n^{-c}$ the size of $\Gamma^{-1}[(y,\vo)]$ is $n^{c+O(1)}$.
    
    Then our inversion algorithm, on input $y$, computes the component of configurations reaching $(y,\vo)$ using \cref{fct:enum}. If we discover more than $n^{c + O(1)}$ reachable states, we halt and fail to invert. Otherwise we will exhaust the entire component. Since $f$ is a permutation, and $\cM$ successfully computes it on every input, there must be exactly one start vertex $(x, \vz)$ in this component and we must have $f(x) = y$. So, as long as the component has size at most $n^{c+O(1)}$ we will find an inverse.
\end{proof}

The result extends immediately to $\inplaceFCL$, at the cost of the inversion algorithm becoming randomized as we must draw a random catalytic tape.
\begin{lemma}\label{lem:ipFCLinvert}
    Let $f\in \inplaceFCL$ be a family of permutations. Then $g=f^{-1}$ can be computed in $\avgZPP_{n^{-c}}$ for every $c$.
\end{lemma}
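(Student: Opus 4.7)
The plan is to directly adapt the proof of the preceding lemma for $\inplaceFL$, using the initial catalytic tape $w$ as the source of internal randomness for the inversion procedure. Given input $y$, I would sample $w \la U_{n^k}$ uniformly at random, perform a backward BFS from the halting configuration $(y,\vo,w)$ in the configuration graph $\cG_\cM$ of the catalytic machine $\cM$, abort with $\perp$ if the discovered component exceeds a polynomial threshold, and otherwise locate the unique start vertex $(x,\vz,w)$ in the component and output $x$.

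First I would set up the counting argument that guarantees success with high probability. The configuration graph has vertices $(\tau,\mu,w)\in \zo^n\times \zo^{O(\log n)}\times \zo^{n^k}$, giving $|V(\cG_\cM)|\le 2^n\cdot \poly(n)\cdot 2^{n^k}$. Since every non-halting configuration has out-degree exactly one, each contributes to exactly one backward-reachable set, so
\[
\sum_{y\in\zo^n,\, w\in \zo^{n^k}} |\Gamma^{-1}[(y,\vo,w)]| \;\le\; |V(\cG_\cM)|.
\]
Dividing by the $2^n\cdot 2^{n^k}$ halting configurations shows that the average size of $\Gamma^{-1}[(y,\vo,w)]$ over uniform $(y,w)$ is at most $\poly(n)$. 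Markov's inequality then gives, for any constant $c$,
\[
\Pr_{y\la U_n,\, w\la U_{n^k}}\!\bigl[\,|\Gamma^{-1}[(y,\vo,w)]|>n^{c+O(1)}\,\bigr]\;\le\; n^{-c},
\]
which is exactly the failure bound required by $\avgZPP_{n^{-c}}$ (combining the randomness over the input distribution with the algorithm's internal randomness).

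Second, I would implement the backward BFS in polynomial time. A catalytic analogue of \cref{fct:enum} applies, since a single Turing-machine step alters only $O(1)$ bits of the configuration; thus predecessors of any $(\tau,\mu,w)$ can be enumerated by ranging over the finite-control transitions and the at most one bit possibly overwritten, all in time $\poly(n)$. A BFS over a component of size at most $n^{c+O(1)}$ with $\poly(n)$ cost per vertex thus runs in polynomial time and space. If more than $n^{c+O(1)}$ vertices are ever discovered, the procedure aborts with $\perp$; otherwise it has enumerated the whole component.

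Finally, for correctness I would invoke the standing WLOG assumption that $\vz$ is the unique start state and $\vo$ the unique halt state, each visited only once per valid run. Any vertex of the form $(x,\vz,w)$ inside the backward-reachable component of $(y,\vo,w)$ is then a genuine start vertex, so running $\cM$ from it produces $(f(x),\vo,w)$; matching this to $(y,\vo,w)$ forces $f(x)=y$, and because $f$ is a permutation this $x$ is unique. Hence whenever the BFS does not abort, the algorithm recovers $x=f^{-1}(y)$. The main conceptual point to check carefully is the counting step in the catalytic setting — one must confirm that the catalytic tape $w$ really does appear symmetrically in the start and halt configurations, so that the average in-degree of halting vertices remains polynomial even though the total configuration space is exponential in the catalytic length; once this is verified, everything else follows by the same template as the $\inplaceFL$ proof.
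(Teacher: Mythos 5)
Your proposal is correct and follows essentially the same route as the paper's proof: draw a random catalytic tape $w$, enumerate $\Gamma^{-1}[(y,\vo,w)]$ backwards via \cref{fct:enum}, abort past a polynomial threshold, and use the counting/Markov bound over halting configurations $(y,\vo,w)$ to get failure probability $n^{-c}$. The extra details you spell out (each configuration reaching at most one halting configuration, and correctness of any start vertex found) are exactly the implicit steps in the paper's shorter argument.
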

\begin{proof}
    Let $\cM$ be the catalytic machine that computes $f$, let $\cG_{\cM}$ be the configuration graph of this machine, and let $s=\poly(n)$ be the size of the catalytic tape. As there are $2^n\cdot 2^s$ states of the form $(y,\vo,w)$ and each such state is a halt state, the size of $\Gamma^{-1}[(y,\vz,w)]$ is at most $m=n^{c+O(1)}$ with probability at least $1-n^{-c}$ over a random $y$ and $w$.  
    
    Our algorithm, given $y$, draws a random $w$ and enumerates elements of $\Gamma^{-1}[(y,\vo,w)]$ using \cref{fct:enum} until we either find an element of the form $(x,\vz,w)$ in which case we halt and return $x$ (note that $f(x)=y$ by the correctness of $\cM$) or find at least $m$ elements, where we abort and return $\perp$. It is clear that this inversion algorithm succeeds with the claimed probability. 
\end{proof}

Using this result, we establish that natural cryptographic conjectures imply there is $f\in \FL\setminus \inplaceFL$:
\begin{theorem}
    Assume there is a one-way permutation $f: \zo^n \rightarrow \zo^n$ computable in logspace-uniform $\TC^1$. Then $\FCL\not\subseteq \inplaceFCL$.
\end{theorem}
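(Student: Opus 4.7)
The plan is a short proof by contradiction combining two ingredients already available in the paper: the inclusion of (uniform) $\TC^1$ in $\CL$ from~\cite{buhrman2014computing}, and Lemma~\ref{lem:ipFCLinvert}, which shows that every permutation in $\inplaceFCL$ has its inverse in $\avgZPP_{n^{-c}}$ for every constant $c$.

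First I would lift the decision-class containment $\TC^1 \subseteq \CL$ to its functional analogue $\TC^1 \subseteq \FCL$: to produce $f(x)$ on input $x$, iterate $i$ from $1$ to $n$ and apply the $\CL$ algorithm to the $i$-th output gate of the uniform $\TC^1$ circuit for $f$, writing each resulting bit onto the output tape and reusing the same catalytic tape across iterations. Hence $f \in \FCL$. Now suppose for contradiction that $\FCL \subseteq \inplaceFCL$. Then $f \in \inplaceFCL$, and since each $f_n\colon \zo^n \to \zo^n$ is a permutation, Lemma~\ref{lem:ipFCLinvert} (applied with, say, $c=1$) yields a randomized polynomial-time algorithm $\mathcal{A}$ such that $\mathcal{A}(y) \in \{f^{-1}(y), \perp\}$ and $\Pr_{y \leftarrow U_n}[\mathcal{A}(y) = \perp] \le n^{-1}$. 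Because $f$ is a length-preserving permutation, $y \leftarrow U_n$ is distributed identically to $f(x)$ for $x \leftarrow U_n$, so modifying $\mathcal{A}$ to emit an arbitrary string whenever it would output $\perp$ gives a randomized polynomial-time inverter succeeding with probability at least $1 - n^{-1}$ over a uniformly random input, contradicting the one-wayness bound $n^{-\omega(1)}$ from Definition~\ref{def:OWP}.

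There is essentially no serious technical obstacle here; the argument is a direct corollary of Lemma~\ref{lem:ipFCLinvert} applied to a cryptographic witness. The only mild subtlety is the lift from the decision containment $\TC^1 \subseteq \CL$ to its functional counterpart $\TC^1 \subseteq \FCL$, which is immediate from producing output bits sequentially and reusing the catalytic tape. The theorem should be read as the cryptographic instantiation of a more general principle already implicit in the paper: any permutation family computable in $\FCL$ that is hard to invert on average (in the sense of $\avgZPP$) must lie outside $\inplaceFCL$.
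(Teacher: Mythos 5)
Your proposal is correct and follows essentially the same route as the paper: both obtain $f \in \FCL$ from the $\TC^1 \subseteq \CL$ simulation of \cite{buhrman2014computing} and then invoke \cref{lem:ipFCLinvert} to conclude that a permutation in $\inplaceFCL$ would be invertible in $\avgZPP_{1/n}$, contradicting one-wayness. The extra details you supply (producing output bits one at a time while reusing the catalytic tape, and converting the errorless $\avgZPP$ inverter into a standard inverter over uniform inputs) are exactly the steps the paper leaves implicit.
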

\begin{proof}
    By~\cite{buhrman2014computing}, $f$ is computable in $\FCL$. However, by the assumption that $f$ is a OWP we have $f\notin \avgZPP_{1/n}$, so by \cref{lem:ipFCLinvert} we have $f\notin \inplaceFCL$.
\end{proof}
We recall a result of \cite{applebaum2006cryptography}:
\begin{theorem}[Theorem 5.4 \cite{applebaum2006cryptography}]\label{thm:aik}
    Suppose there is a OWP computable in $\FL$. Then there is a OWP computable in $\unifNC^0_4$.
\end{theorem}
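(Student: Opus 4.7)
The plan is to invoke the randomized encoding framework of Applebaum, Ishai, and Kushilevitz (AIK). Their main structural result shows that every function in $\oplus L/\Poly$, and hence every function in $\FL$, admits a \emph{perfect randomized encoding} computable by a logspace-uniform $\NC^0_4$ circuit. Recall that a perfect randomized encoding of $f$ is a function $\hat{f}(x,r)$ with auxiliary randomness $r$, equipped with a decoder $B$ and simulator $S$ satisfying $B(\hat{f}(x,r)) = f(x)$ for all $(x,r)$, and $\hat{f}(x, U_m) \equiv S(f(x), U_{m'})$ as distributions. First I would apply this to the given $\FL$-computable OWP $f$, producing an $\NC^0_4$ function $\hat{f}$; a standard reduction then shows $\hat{f}$ is one-way, since any polynomial-time inverter of $\hat{f}(x,r)$ returning $(x',r')$ forces $f(x') = B(\hat{f}(x',r')) = f(x)$, hence $x' = x$ because $f$ is a permutation.

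The main obstacle, and the heart of AIK's Theorem 5.4, is upgrading $\hat{f}$ from a OWF to a OWP. In general $\hat{f}$ has output length strictly larger than $n + m$, with many output coordinates functionally redundant in $r$. The specific AIK encoding derived from branching programs has a rigid affine structure: for each fixed $x$, the map $r \mapsto \hat{f}(x,r)$ is injective and affine over $\F_2$, and its image is cut out by public linear relations independent of $x$. One therefore exhibits a fixed invertible $\F_2$-linear projection $T$ such that $T \circ \hat{f}$ has output length exactly $n + m$, giving a map $g: \bin^{n+m} \to \bin^{n+m}$ of the form $g(x,r) = (f(x),\, \pi(x,r))$ up to a public bijection, with $\pi$ efficiently invertible in $r$ given $f(x)$. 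Since $f$ is a permutation so is $g$, and since $T$ is public and invertible, any inverter for $g$ yields an inverter for $\hat{f}$, so one-wayness carries over from $\hat{f}$ to $g$.

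The final subtlety is locality. Naive composition with $T$ could destroy the $\NC^0_4$ bound, so one must instead embed the linear post-processing into the encoding itself, producing a new $\NC^0_4$ circuit for $g$ directly; since the linear forms appearing in $T$ already have bounded-support representations in the AIK construction, the output bits of $g$ remain functions of at most four input bits. Logspace-uniformity is inherited at every stage: $f \in \FL$ yields a logspace-uniform branching program, AIK's transformation is itself logspace-uniform, and the fixed linear post-processing is logspace-computable from the branching program. This yields a one-way permutation in logspace-uniform $\NC^0_4$, as desired.
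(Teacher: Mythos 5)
Your overall plan (AIK randomized encodings of branching programs, locality $4$, one-wayness preserved via the decoder, logspace-uniformity inherited from the construction) matches how this theorem is actually established; indeed the paper itself does not reprove it but cites AIK Theorem 5.4 as a black box, adding only the footnote observation about logspace-uniformity, which your last paragraph handles correctly. However, your middle paragraph — the step that actually upgrades "one-way function" to "one-way permutation" — is where there is a genuine gap. You assert that $\hat f$ in general has output length strictly larger than $n+m$ and then repair this with an invertible $\F_2$-linear projection $T$, justified by the claims that for fixed $x$ the map $r \mapsto \hat f(x,r)$ is \emph{affine} and that its image is cut out by linear relations independent of $x$. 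Neither claim holds for the AIK branching-program encoding: the output entries of $R_1(r^{(1)})\,L(x)\,R_2(r^{(2)})$ contain products of coordinates of $r^{(1)}$ and $r^{(2)}$, so the map is quadratic in $r$ for fixed $x$, and its image is precisely the simulator's support on $f(x)$, which depends on $f(x)$ rather than being an $x$-independent linear subspace. Moreover, even granting such a $T$, composing an $\NC^0_4$ circuit with a generic invertible linear map destroys constant locality, and your appeal to "bounded-support representations" of the linear forms is unsubstantiated.

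The correct argument needs none of this detour. AIK's point is that the branching-program encoding, followed by their locality reduction to locality $4$, is a \emph{perfect} randomized encoding in their technical sense: perfectly correct, perfectly private, with a \emph{balanced} simulator and \emph{stretch preservation}, the last of which means the output length is exactly $n+m$ already (for each output bit the encoding uses $m_i$ random bits and produces exactly $m_i+1$ output bits, so a length-preserving $f$ yields a length-preserving $\hat f$ on $\zo^{n+m}$). Their general lemma then states that a perfect randomized encoding of a permutation is itself a permutation, and (by privacy/balance) one-wayness transfers as in your first paragraph. So the fix is to replace your projection-$T$ construction with the observation that perfectness — which you already invoked by name but then contradicted by asserting redundant output coordinates — directly gives $\hat f \colon \zo^{n+m} \to \zo^{n+m}$ bijective, with no post-processing and hence no locality issue. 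Since $\FL \subseteq \oplus\LOG/\Poly$ and every stage of the construction is computable in logspace from the branching programs, the resulting OWP lies in $\unifNC^0_4$, as the paper's footnote notes.
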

We remark that their result does not state the latter OWP is in logspace uniform, however this can be seen from the construction.\footnote{Given a (deterministic, read-many) set of branching programs $B_1,\ldots,B_n$ such that $x\ra (B_1(x),\ldots,B_n(x))$ computes an OWP, their result constructs a degree-$3$ randomized encoding of each $B_i$, then computes a local encoding of each output bit. Both constructions are clearly computable in logspace.}

Thus, we obtain that the existence of cryptography implies hardness of evaluating $\NC^0_4$ functions and integer multiplication:
\owphard*
\begin{proof}
    By \cref{thm:aik}, the existence of a one-way permutation computable in $\FL$ implies the existence of a one-way permutation computable in $\unifNC^0_4$. However, by \cref{lem:ipFCLinvert} we know that $\inplaceFCL$ cannot compute any one-way permutation.
\end{proof}   

We note one specific corollary in the case of multiplication:
\begin{corollary}\label{cor:rsa}
    Let $\mathsf{Mult}\colon \bin^* \ra\bin^*$ be the length-preserving function that takes a concatenated pair of $n$-bit integers $x$ and $y$, and returns their $2n$-bit product $x * y$. Assuming that the RSA cryptosystem is secure, $\mathsf{Mult} \not\in \inplaceFCL$.
\end{corollary}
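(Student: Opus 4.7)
The plan is to adapt the inversion argument of \cref{lem:ipFCLinvert} to give a randomized polynomial-time algorithm that factors random semiprimes with noticeable probability; since factoring a random RSA modulus is no harder than breaking the cryptosystem, this contradicts the security assumption. Assume for contradiction that $\mathsf{Mult}\in \inplaceFCL$ via a machine $\cM$ with catalytic space $s=\poly(n)$, and consider its configuration graph $\cG_\cM$ on $2n$-bit inputs. The difficulty compared to \cref{lem:ipFCLinvert} is that $\mathsf{Mult}$ is not a permutation, so halting configurations can have very large in-degree and the lemma does not apply as a black box.

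The graph has $2^{2n+s}$ halting configurations $(z,\vo,w)$ and at most $2^{2n+s}\cdot\poly(n)$ total vertices, each of which reaches exactly one halt. Hence $\sum_{z,w} |\Gamma^{-1}(z,\vo,w)| \le 2^{2n+s}\cdot \poly(n)$, and by Markov at most a $\poly(n)/n^c$ fraction of $z\in\bin^{2n}$ satisfy $\sum_w |\Gamma^{-1}(z,\vo,w)|>2^s\cdot n^c$. Random semiprimes $z=pq$ (with $p,q$ uniform $n$-bit primes) are supported on $\Theta(2^{2n}/n^2)$ values with pointwise mass $O(n^2/2^{2n})$, so the probability that a random semiprime lies in this ``bad'' set is $O(\poly(n)/n^c)$, which is negligible for sufficiently large $c$. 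For any good $z$, Markov over uniform $w$ gives $|\Gamma^{-1}(z,\vo,w)|\le\poly(n)$ with probability $1-o(1)$.

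The factoring algorithm then mirrors \cref{lem:ipFCLinvert}: on input semiprime $z$, sample $w\la U_s$ and perform a reverse BFS from $(z,\vo,w)$ using \cref{fct:enum}, aborting if the explored set exceeds $n^{c+O(1)}$ vertices. With noticeable probability the BFS terminates and exhaustively enumerates $\Gamma^{-1}(z,\vo,w)$; since $\cM$ correctly computes $\mathsf{Mult}$ on every initial catalytic tape, every start configuration $((x,y),\vz,w)$ with $xy=z$ must appear in the tree, and for a semiprime $z=pq$ the pair $(p,q)$ is recovered, giving a nontrivial factorization. This yields a polynomial-time algorithm that factors random RSA moduli with noticeable probability, contradicting the security of RSA.

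The main obstacle is controlling how the preimage mass of $\cG_\cM$ interacts with the non-uniform distribution over semiprimes, since unlike in \cref{lem:ipFCLinvert} we cannot appeal to uniformity of the output distribution. This is possible only because semiprimes have density $\Theta(1/n^2)$, comfortably above the $1/n^c$ slack afforded by Markov for any constant $c$.
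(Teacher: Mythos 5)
Your proposal is correct and follows essentially the same route as the paper: both adapt the configuration-graph inversion argument of \cref{lem:ipFCLinvert}, using that semiprimes of two $n$-bit primes have density $\Omega(1/n^2)$ among $2n$-bit outputs so that the component of a random semiprime output under a random catalytic tape has polynomial expected size, and then factor by traversing that component back to a start configuration. Your two-step Markov bookkeeping (first over $z$, then over $w$) is just a slightly more explicit rendering of the paper's single averaging computation, so there is no substantive difference.
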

\begin{proof}
    We can solve factoring in the average-case by inverting $\mathsf{Mult}$. We cannot directly apply \cref{lem:ipFCLinvert}, as $\mathsf{Mult}$ is not a permutation. However, it is close enough to a permutation for our purposes. In order to break RSA, it suffices to give an expected polynomial time algorithm to factor $pq$ for $p$ and $q$ each uniform random primes between $2^{n-1}$ and $2^n$. By the prime number theorem, we know that there are $\Omega(2^n/n)$ primes between $2^{n-1}$ and $2^n$, and so there are $\Omega(2^{2n}/n^2)$ possible RSA semiprimes $pq$. As before, each of these outputs must belong to disjoint components of the configuration graph, so for a random $pq$ and a random setting of the catalytic tape the corresponding component of the configuration graph will have size at most $\big(2^{2n + O(\log n)}\big)\big({\Omega(2^{2n}/n^2)}\big)^{-1} = n^{O(1)}$ in expectation. We can thus find $p$ and $q$ in expected polynomial time by traversing the entire component until we find an input state.
\end{proof}

\section{In-place algorithms for small-width circuits}\label{sec:smallwidth}

In this section we show in-place algorithms for evaluating restricted circuit
classes. To see where the circuit structure may play a role in in-place computation,
we observe the following easy simulation:

\begin{proposition}\label{prop:only-depend-on-earlier-computable}
    For any $f \in \unifNC^1$, if $f$ is not length-extending, and, for all $i$, $f$'s $i$th output bit depends only on its first $i+O(\log n)$ input bits, then $f \in \inplaceFL$.
\end{proposition}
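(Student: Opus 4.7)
The plan is to produce the output bits in reverse order $i = m(n), m(n)-1, \ldots, 1$, maintaining a sliding window of the most recently overwritten input bits in workspace, so that each output bit can be computed from information still available on tape or in that window. Write $c\log n$ for the locality bound, so $f(x)_i$ depends only on $x_1, \ldots, x_{i+c\log n}$; since $\unifNC^1 \subseteq \LOG$, there is a standard $O(\log n)$-space algorithm $\cA$ that, given oracle access to $x$ and an index $i$, outputs $f(x)_i$.

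Maintain the invariant that just before producing $f(x)_i$ the tape holds $x_1\ldots x_i$ in positions $1,\ldots,i$, the previously produced bits $f(x)_{i+1}\ldots f(x)_{m(n)}$ in positions $i+1,\ldots,m(n)$, and the untouched suffix $x_{m(n)+1}\ldots x_n$ in positions $m(n)+1,\ldots,n$, while the workspace stores, as a circular buffer, exactly the overwritten bits still in the dependency window, namely $x_{i+1},\ldots,x_{\min(i+c\log n,\,m(n))}$. To execute step $i$, I simulate $\cA$ with $O(\log n)$ additional scratch bits, routing each input query $x_j$ to tape position $j$ when $j \le i$ or $j > m(n)$, and to the appropriate buffer slot when $i < j \le m(n)$. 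Because $\cA$ only queries indices up to $i+c\log n$, every needed bit is accessible. Then save $x_i$ (still present at tape position $i$) into the circular buffer, overwriting the now-unneeded entry $x_{i+c\log n}$, and write $f(x)_i$ into tape position $i$; this reestablishes the invariant for $i-1$.

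When the loop terminates, positions $1,\ldots,m(n)$ hold $f(x)$ and positions $m(n)+1,\ldots,n$ still hold the untouched suffix, so I finish by sweeping those positions with $1$s and then filling the workspace with $1$s to match the $\inplaceFSPACE$ output convention. The total workspace usage is $c\log n$ bits for the circular buffer, $O(\log n)$ bits for the $\cA$-simulation, and $O(\log n)$ bits for the loop index and buffer pointer, giving $O(\log n)$ overall. The only real implementation care needed—rather than a genuine obstacle—is translating each $\cA$-query to the correct tape position or buffer slot as the window shifts, which reduces to simple circular-buffer arithmetic that fits comfortably in $O(\log n)$ bits.
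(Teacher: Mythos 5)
Your proposal is correct and takes essentially the same route as the paper: compute the output bits from last to first using the fact that each individual output bit is logspace-computable, overwriting the input as you go while keeping the displaced-but-still-needed input bits in $O(\log n)$ workspace (your circular buffer just makes the paper's terse memory accounting explicit, including the final $1$-filling convention). One small caveat: a logspace algorithm $\cA$ for the $i$th output bit need not literally confine its queries to positions $\le i + c\log n$, but since $f(x)_i$ is functionally independent of the later bits you may answer any out-of-window query with a dummy $0$ (i.e.\ simulate $\cA$ on an input agreeing with $x$ on the first $i + c\log n$ bits), so the argument goes through unchanged.
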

\begin{proof}
    Because $\unifNC^1 \subseteq \LOG$, we can compute any given bit of the output in logspace. We will compute the output bits in order from last to first. Once we've computed $k$ bits of the output, we only need to remember the first $n + O(\log n) - k$ bits of the input, so we can erase the end of the input as we go, and thus we need only keep $n + O(\log n)$ bits in memory at all times.
\end{proof}

For the rest of this section we will study restricted width circuits. These are a natural class of circuits for $\inplaceFL$ since one can hope that the natural strategy of computing the circuit layer-by-layer in-place might work. Note that it is unclear if such a layer-by-layer evaluation strategy works in the case that the circuit $C: \zo^n \rightarrow \zo^n$ we wish to evaluate has width $n + \omega(\log n)$ since our program will not have enough space to write down the outputs at $C$'s largest layer. Therefore, we only concern ourselves with circuits of width $n + O(\log n)$ and show that such circuits of fan-in $2$ can indeed be computed in $\inplaceFL$.
\begin{theorem}\label{thm:small-width-computable}
    For any $f\colon \bin^*\to\bin^*$ with logspace uniform, width-$(n + O(\log n))$, fan-in $2$ circuits, $f \in \inplaceFL$.
\end{theorem}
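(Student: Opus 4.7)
The plan is to reduce to \cref{thm:locality-2-inplace}, which gives $\unifNC^0_2 \subseteq \inplaceFL$, by evaluating the circuit one layer at a time. First I would convert the given width-$(n+O(\log n))$ fan-in-$2$ circuit into a \emph{layered} circuit of the same width (up to an additive constant) by inserting identity pass-through gates wherever a wire skips a layer; a standard construction preserves both fan-in $2$ and logspace uniformity. Label the resulting layers $L_0, L_1, \ldots, L_d$, each of width $w = n + O(\log n)$, with $L_0$ the input (padded with blanks to width $w$) and $L_d$ the output. Every transition $L_i \mapsto L_{i+1}$ is then a length-preserving map on $\bin^w$ in which each output bit depends on at most two input bits, i.e., an $\NC^0_2$ map of width $w$.

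Next I would observe that the family of transition maps, indexed by $(n, i)$, is itself logspace uniform: from $(n, i, j)$ one can in logspace identify the (at most two) parents and the $2$-bit truth table of the $j$-th gate of $L_{i+1}$ by walking the description of the original logspace-uniform circuit. Applying \cref{thm:locality-2-inplace} to each transition yields an $\inplaceFL$ algorithm that overwrites $L_i$ with $L_{i+1}$ in place on a tape of length $w$ with $O(\log w)=O(\log n)$ additional scratch. The overall algorithm for $f$ runs these transitions sequentially, maintaining only an $O(\log n)$ layer counter and reusing the same scratch across phases; since $w = n + O(\log n)$, this fits within the in-place budget for inputs of length $n$, and after the final phase the tape holds $L_d$, which after a brief in-place normalization of any residual padding cells matches the output format prescribed in the definition of $\inplaceFL$.

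The main thing to verify is that the per-layer transitions really do assemble into a single logspace-uniform $\NC^0_2$ family, so that the invocation of \cref{thm:locality-2-inplace} fires uniformly across all $d$ layers rather than requiring a fresh construction per $i$. This is a bookkeeping consequence of logspace uniformity of the original circuit together with the explicit pass-through layering: given $(n, i, j)$ the overall machine reads off the two parents and the truth table of the $j$-th gate at layer $i$ in logspace, and can feed this description to the $\inplaceFL$ subroutine computing that layer's transition. With that in hand, composing the per-layer in-place routines gives the desired $\inplaceFL$ algorithm for $f$.
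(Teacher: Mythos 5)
Your outer reduction---making the circuit layered with pass-through gates, viewing each transition $L_i \mapsto L_{i+1}$ as a length-preserving $\NC^0_2$ map on $w = n + O(\log n)$ bits, checking that the family of transitions indexed by $(n,i)$ is logspace uniform, composing the per-layer in-place routines, and absorbing the extra $O(\log n)$ of width into the work tape---is exactly the wrapper the paper uses: its proof assumes the circuit is layered with width $w$ at every level and applies a single-layer subroutine $d$ times. The problem is what you treat as a black box. You invoke \cref{thm:locality-2-inplace} for the single-layer step, but within the paper that theorem has no independent proof: it is never restated or proved separately, and its content is established only inside the proof of \cref{thm:small-width-computable} itself, via \cref{lem: nc0_final_perm}, the dependency-graph ``valid transformations,'' and the logspace subroutines of \cref{lem: nc0_subroutines}. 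So, read as a proof of the statement in question, your argument is circular in context: it defers essentially all of the difficulty to a result whose only proof in the paper is the one you were asked to supply.

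And that single-layer step is genuinely nontrivial; it is not enough that each output bit depends on only two inputs. One must exhibit an order in which to compute output bits and erase input bits so that only $O(1)$ extra values are ever live, and---this is the bulk of the paper's argument---show that the position of any given output gate (edge) and input bit (vertex) in such an elimination order is computable in $O(\log n)$ space. The paper does this by processing connected components of the dependency graph in order of average degree, computing feathers first via type-i moves, then bones ordered by prominence, then the skull, with all indices computed using minimum-spanning-forest and undirected-reachability routines, and finally undoing the resulting shuffle via the logspace-computable permutations $\pi_1,\pi_2$ of \cref{lem: nc0_final_perm}. Without either reproving this or pointing to a proof of \cref{thm:locality-2-inplace} that does not go through \cref{thm:small-width-computable}, your proposal establishes only the easy composition step.
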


We use this result to show that $\LOSSY$ for $O(\log n)$ depth, small-width circuits lies in $\CL$.
{\renewcommand{\footnote}[1]{}
\lossysmallwidth*
}
\begin{proof}[Proof assuming \cref{thm:small-width-computable}]
    Recall that the input to the problem is $(C,D)$, where $C:\zo^n\ra\zo^{n-1}$ and $D:\zo^{n-1}\ra\zo^n$ are circuits of depth $O(\log n)$, fan-in-$2$, and width $n+O(\log n)$. 

    The catalytic algorithm works as follows. Let $\tau$ be the initial tape, and divide the tape into $(\tau_1,\ldots,\tau_n)\in (\zo^n)^n$. For each $i\in [n]$, we first test if $D(C(\tau_i))\neq \tau_i$. We can perform this test without modifying $\tau$, as we can evaluate logarithmic depth circuits in $\LOG$. If this holds for some $i$, we return $\tau_i$ without modifying the tape.

    Otherwise, for $i\in [n]$ we invoke the algorithm of \cref{thm:small-width-computable} with the circuit $C$,\footnote{The theorem is stated as taking a logspace-uniform function $f$, we virtually define the function $f_C$ which is trivially defined on inputs of length $m\neq n$ and on inputs of length $n$ applies $C$. Note that the circuit for $f$ is logspace uniform given the input $C$.} and unified tape $\tau_i||0^{O(\log n)}$, where the first $n$ bits correspond to the section of catalytic tape holding $\tau_i$ and we use $O(\log n)$ bits on the worktape for the remainder. The algorithm halts with that section of catalytic tape in configuration $C(\tau_i)||0$. Once we do this for every $i\in [n]$, we shift the tape so that we have $0^n$ at the end and brute force over $y\in \zo^n$ to find some output where $D(C(y))\neq y$, and print the first such $y$. Afterwards, we again use \cref{thm:small-width-computable} to replace $C(\tau_i)||0$ with $D(C(\tau_i))=\tau_i$ on the tape for every $i$, and afterwards halt.
\end{proof}

To prove \cref{thm:small-width-computable}, we will first show that the computation of one layer of a logspace uniform, width-$(n + O(\log n))$, fan-in $2$ circuit can be done in $\inplaceFL$. To do so, we begin by associating any layer of such a circuit with a dependency graph.
\begin{definition}
    Let $C: \bin^n \to \bin^n$ be a logspace uniform, width $w = (n + O(\log n))$, depth $d$, fan-in $2$ circuit. We assume without loss of generality that $C$ has width $w$ at each layer (by padding). We define a dependency graphs $\mathcal{G}_C^\ell$ for $\ell \in [d]$ as follows. $\mathcal{G}_C^\ell$ has vertex set $[w]$ and exactly $w$ edges. An edge exists between vertices $(x, y)$ in $\mathcal{G}_C^\ell$  if and only if some gate in layer $\ell+1$ of $C$ is connected to gates $x$ and $y$ in layer $\ell$ of $C$. Each vertex and edge of $\mathcal{G}_C^\ell$ is in exactly one of two states: a computed state or an uncomputed state.
\end{definition}

$\mathcal{G}_C^\ell$ tells us the dependencies between layer $\ell$ and $\ell+1$ of $C$. It also tells us what information we have gained/lost during our in-place transformation. An edge being in the computed state means that the output of the gate in layer $\ell+1$ corresponding to that edge has been computed and is in memory. Similarly, a vertex being in the computed state means that the output of the gate in layer $\ell$ corresponding to that vertex is currently in memory. We say that we process an edge if we compute it and we process a vertex if we uncompute it. We say that a vertex $v$ is isolated (resp. a leaf) in $\mathcal{G}_C^\ell$ if $v$ is isolated (resp. a leaf) in $\mathcal{G}_C^\ell$ restricted to all vertices and uncomputed edges. An isolated cycle in $\mathcal{G}_C^\ell$ is defined analogously.

$\mathcal{G}_C^\ell$ starts out  with all vertices in the computed state and each edge in the uncomputed state. Our goal is to design an algorithm that implicitly manipulates $\mathcal{G}_C^{\ell}$ in logarithmic space to ultimately arrive at a dependency graph where all vertices are uncomputed and all edges are computed. This corresponds to the state in our algorithm where the tapes (input tape combined with work tape) contain the evaluation of $C$ on the input, up to layer $\ell+1$ of $C$. To achieve this, we rely on the following valid transformations to our dependency graph.

\begin{figure}
    \centering
    \begin{tikzpicture}[scale=1, every node/.style={scale=1}]
  \foreach \i in {1,2,3,4,5,6,7,8,9,10} {
    \node[circle, draw, minimum size=0.8cm] (l\i) at (\i*1.5,0) {$x_{\i}$};
  }
  \foreach \i in {1,2,3,4,5,6,7,8,9, 10} {
    \node[circle, draw, minimum size=0.8cm, fill=gray!20] (u\i) at (\i*1.5,3.5) {$g_{\i}$};
  }
    \draw[->] (l1) -- (u1);
    \draw[->] (l2) -- (u1);
    
    \draw[->] (l2) -- (u2);
    \draw[->] (l4) -- (u2);
    
    \draw[->] (l2) -- (u3);
    \draw[->] (l3) -- (u3);
    
    \draw[->] (l4) -- (u4);
    \draw[->] (l5) -- (u4);
    
    \draw[->] (l1) -- (u5);
    \draw[->] (l5) -- (u5);
    
    \draw[->] (l4) -- (u6);
    \draw[->] (l6) -- (u6);
    
    \draw[->] (l4) -- (u7);
    \draw[->] (l7) -- (u7);
    
    \draw[->] (l1) -- (u8);
    \draw[->] (l7) -- (u8);
    
    \draw[->] (l8) -- (u9);
    \draw[->] (l9) -- (u9);
    
    \draw[->] (l9) -- (u10);
    \draw[->] (l10) -- (u10);

    \draw[->] (l9) -- (u10);
    \draw[->] (l10) -- (u10);

  \node at (0,0) {Layer $\ell$};
  \node at (0,2) {Layer $\ell+1$};
\end{tikzpicture}

\vspace{1cm}

\begin{tikzpicture}[scale=1, every node/.style={circle, draw, minimum size=0.8cm}]
    \node (1) at (0,0) {1};
    \node (2) at (2,1) {2};
    \node (3) at (4,1) {3};
    \node (4) at (2,3) {4};
    \node (5) at (0,3) {5};
    \node (6) at (3,4) {6};
    \node (7) at (1,4) {7};
    \node (8) at (6,0) {8};
    \node (9) at (6,2) {9};
    \node (10) at (6,4) {10};

    \draw (1) -- (2);
    \draw (2) -- (3);
    \draw (2) -- (4);
    \draw (4) -- (5);
    \draw (1) -- (5);
    \draw (4) -- (6);
    \draw (4) -- (7);
    \draw (1) -- (7);
    \draw (8) -- (9);
    \draw (9) -- (10);

\end{tikzpicture}
    \caption{Layer $\ell$ of a circuit $C$ and the corresponding dependency graph $\mathcal{G}_C^{\ell}$}
    \label{fig: circuit_G}
\end{figure}
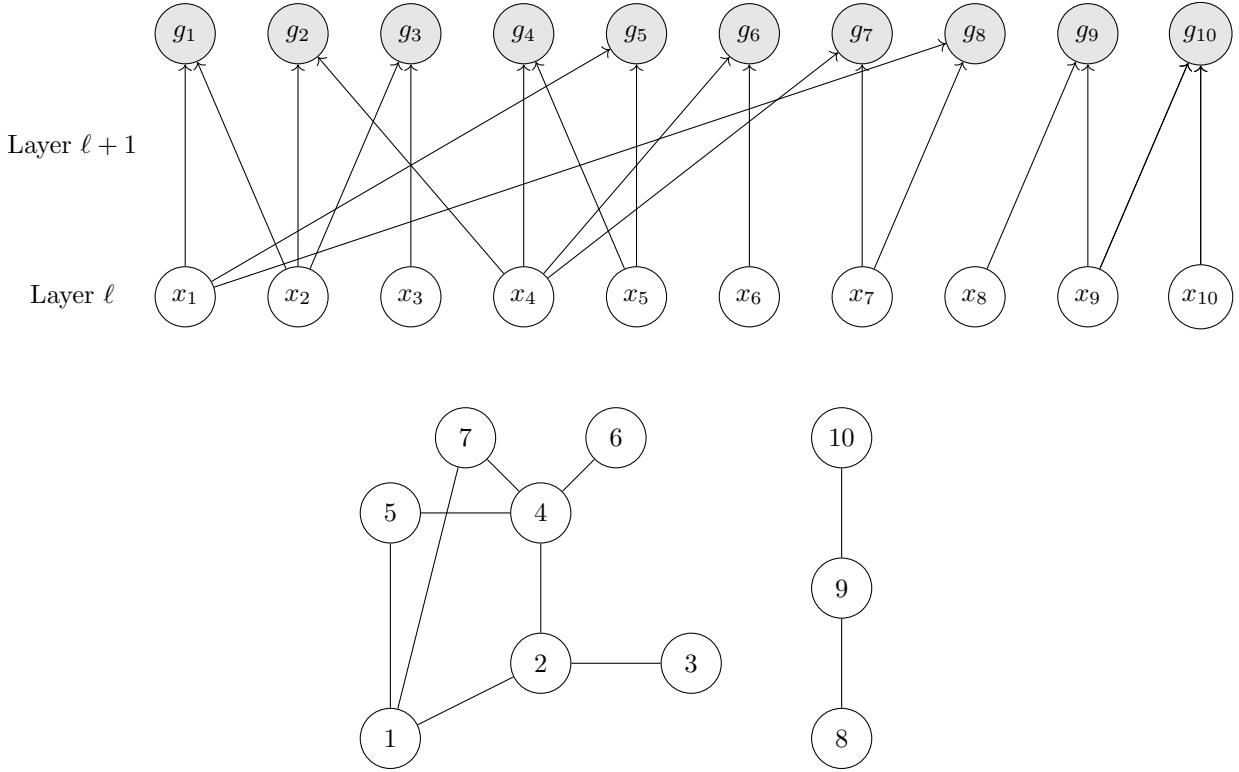

\begin{definition}
    \label{def: valid_transformation}
    Let $\mathcal{G} = (V, E)$ be a dependency graph and $m \in \N$. A valid transformation to $\mathcal{G}$ is an ordered set $\{ o_1, \dots, o_m \}$ where $o_i \in V \cup E$ for all $i \in [1, m]$ such that processing $o_1, \dots, o_m$ (in order) corresponds to one of the following.
    \begin{enumerate}
        \item Uncompute a vertex which has no incident uncomputed edges (an isolated vertex), and then compute an arbitrary edge.
        \item Compute the edge incident to a leaf vertex, and then uncompute that vertex.
        \item For an isolated cycle in $\mathcal{G}$, compute two adjacent edges in the cycle, and then alternate uncomputing vertices and computing edges around that cycle until all vertices are uncomputed and all edges are computed.
\end{enumerate}
\end{definition}

Note that in any series of valid transformations, a vertex is only uncomputed after all its incident edges have been computed. This is a desirable property since it means that we do not uncompute any information from layer $\ell$ which will be necessary in the computation of layer $\ell+1$ gates later. 

Ultimately, these transformations of the dependency graph $\mathcal{G}_C^\ell$ will be performed implicitly by an $\inplaceFL$ algorithm. \cref{lem: nc0_final_perm} shows that being able to implicitly compute a series of valid transformations in logarithmic space will allow us to compute one layer of $C$ in $\inplaceFL$.

\begin{lemma}
    \label{lem: nc0_final_perm}
    Let $f\colon \bin^*\to\bin^*$ be a function with logspace uniform, $\mathsf{NC^0_2}$ circuits $C: \zo^n \rightarrow \zo^n$. Let $\mathcal{G}$ be the dependency graph for $C$. There exists an $\inplaceFL$ algorithm which computes $f$ if there is a series of valid transformations $T_1, \dots, T_t$ which result in all edges of $\mathcal{G}$ being computed and all vertices being uncomputed and $\FL$ algorithms $\pi_1, \pi_2: [n] \rightarrow [n]$ such that the following hold.
    \begin{enumerate}
        \item $\pi_1[i] = j$ if vertex $j$ is the $i^{\text{th}}$ vertex uncomputed in $T_1, \dots, T_t$.
        \item $\pi_2[i] = j$ if edge $j$ is the $i^{\text{th}}$ edge computed in $T_1, \dots, T_t$.
    \end{enumerate}
\end{lemma}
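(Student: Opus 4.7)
The plan is to proceed in three phases. First, I will rearrange the input in-place so that for each $i$, position $i$ holds $x_{\pi_1(i)}$, i.e., the input bit for the vertex that the transformation sequence uncomputes at the $i$-th vertex step. Then I will run a main loop that simulates the transformation sequence one atomic action at a time, interleaving vertex uncomputations and edge computations. Finally, I will permute the tape by $\pi_2^{-1}$ so that position $j$ holds the value of the $j$-th layer-$(\ell+1)$ gate in the natural order.

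The first and third phases will be standard $\inplaceFL$ permutations. Both $\pi_1$ and $\pi_2$ are $\FL$-computable, hence so are their inverses (by brute-force search), and any $\FL$-computable permutation of the tape can be applied in-place by iterating over starting positions, checking each for being the smallest element of its cycle, and then rotating each cycle in turn using $O(\log n)$ scratch memory.

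The heart of the argument will be the main loop. I will maintain counters $k_v, k_e$ (the numbers of uncomputations and edge computations done so far) along with at most two scratch cells for ``in-flight'' edge values. The invariant will be: for $i > k_v$, position $i$ still holds $x_{\pi_1(i)}$; for $i \le \min(k_v,k_e)$, position $i$ holds the value of edge $\pi_2(i)$; at most one position in $\{\min(k_v,k_e)+1,\ldots,k_v\}$ is unfilled; and any edge computed but not yet placed (of which there are at most $k_e - k_v \le 2$) sits in the scratch cells. At each step I will apply a greedy rule: test whether vertex $\pi_1(k_v+1)$ currently has all its incident edges computed, which is $\FL$-decidable by enumerating the $O(1)$ incident edges in the logspace-uniform circuit $C$ and checking whether each has $\pi_2^{-1}$-index at most $k_e$. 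If so, I will uncompute it (zero out position $k_v+1$, move any pending value of edge $\pi_2(k_v+1)$ from a scratch cell into position $k_v+1$, then increment $k_v$); otherwise I will compute edge $\pi_2(k_e+1)$ by reading its two inputs from positions $\pi_1^{-1}(a), \pi_1^{-1}(b)$, evaluating the appropriate layer-$(\ell+1)$ gate, writing the result at position $k_e+1$ if unfilled or to a scratch cell otherwise, and incrementing $k_e$.

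The main obstacle will be verifying that this greedy schedule agrees action-for-action with the given transformation sequence $T_1,\ldots,T_t$; once that is done, the rest is bookkeeping. A case analysis on the three transformation types should yield the match: in type 1 the chosen vertex is isolated at the transformation's start, so greedy uncomputes immediately; in type 2 the chosen leaf still has an uncomputed incident edge, so greedy computes first and uncomputes after the leaf becomes isolated; in type 3 no cycle vertex is uncomputable until two adjacent edges have been computed, after which greedy zig-zags around the cycle and closes it out by uncomputing the final vertex. This same analysis will certify $|k_e - k_v| \le 2$ throughout and that whenever greedy computes an edge both of its endpoints are still live on the tape. Since all position and index computations are logspace-computable from $k_v, k_e, \pi_1, \pi_2$, and the circuit description of $C$, the entire procedure will fit within the $\inplaceFL$ bound.
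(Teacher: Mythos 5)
Your algorithm is workable, but the justification you yourself flag as the crux does not go through as stated. The greedy schedule does \emph{not} in general agree action-for-action with the given sequence $T_1,\dots,T_t$: consider two consecutive type~i transformations whose uncomputed vertices $u_1,u_2$ are both already isolated before either is processed (e.g.\ their incident edges were all computed earlier, or they have degree zero). The sequence interleaves ``uncompute $u_1$, compute some arbitrary edge, uncompute $u_2$, \dots'', whereas your greedy rule uncomputes $u_1$ and then immediately $u_2$ before computing any edge. So the promised case analysis cannot certify the match, and your invariant ``at most one position in $\{\min(k_v,k_e)+1,\dots,k_v\}$ is unfilled'' can also fail (here two positions become unfilled). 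What actually saves the algorithm is weaker and must be argued differently: (a) greedy never erases a bit that is still needed -- this is immediate from the greedy rule itself, since a vertex is only uncomputed after all its incident edges have been computed by greedy, so any later edge $\pi_2(k_e+1)$ has both endpoints still live; and (b) $k_e-k_v\le 2$ throughout, so two scratch cells suffice. Point (b) does not follow from schedule agreement; it follows by comparing indices against the original sequence: in any valid sequence at most two more edges are ever computed than vertices uncomputed, and a vertex is uncomputed only after its incident edges, so every edge incident to the $(k_v+1)$-st vertex in the $\pi_1$ order has $\pi_2$-index at most $k_v+2$; hence whenever $k_e=k_v+2$ the next vertex is already uncomputable and greedy uncomputes rather than computing a third in-flight edge. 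This is exactly the counting observation the paper's proof is built on, and your write-up needs it explicitly in place of the (false) agreement claim. A small additional slip: a vertex of the dependency graph can have degree up to $n$, not $O(1)$; the isolation test is still in $\FL$, but by scanning all $n$ edges, not ``the $O(1)$ incident edges''.

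Once repaired, your route is a somewhat more elaborate version of the paper's: the paper skips both the pre-permutation by $\pi_1$ and the interleaved greedy bookkeeping entirely. It buffers the values of gates $\pi_2[1],\pi_2[2]$ in the workspace, then for $i=1,\dots,n-2$ computes gate $\pi_2[i+2]$ and overwrites position $\pi_1[i]$ directly (safe precisely by the counting observation above), and finally applies a single logspace-computable unshuffling permutation in place. Your version needs two in-place permutation passes and a correctness argument for the greedy interleaving, so the paper's variant is the cleaner path to the same lemma.
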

\begin{proof}
Observe that $\pi_1, \pi_2$ satisfy the property that for any $i$, neither input gate feeding into output gate $\pi_{2}[i+2]$ appears among $\pi_{1}[1], \dots, \pi_{1}[i]$. Equivalently, neither of the vertices incident to edge $\pi_{2}[i+2]$ appear in $\pi_{1}[1], \dots, \pi_{1}[i]$. This follows from the fact that our orderings correspond to uncomputing/computing vertices/edges in a valid sequence of i, ii, iii operations in \cref{def: valid_transformation}, and we observe that in any such sequence, at any point, we've always computed at most $2$ more edges than the number of vertices we have uncomputed, so by the time we compute the $(i+2)$th edge we have uncomputed at least $i$ vertices. None of those $i$ vertices can be incident to $\pi_{2}[i+2]$ since we always uncompute the vertices incident to an edge after we have computed that edge. 

We'll use $\pi_{1}$ and $\pi_{2}$ to describe how to get the input tape in-place from the (in-order) sequence of output bits of the gates at level $1$ to the (in-order) sequence of output bits of the gates at level $2$.
We first compute the outputs of gates $\pi_{2}[1]$ and $\pi_{2}[2]$, which we store on the side. Then, for $i \in [1, n-2]$, we'll compute the output of gate $\pi_{2}[i+2]$ and overwrite the $\pi_{1}[i]$th bit of the input tape with the result. By the facts observed above, we know all these values can be computed with logarithmic overhead, and that we never overwrite a bit of the input before it is used to compute a gate's output. So, this process will successfully compute the values of all $\pi_{2}[i]$. They will not end up on the tape in-order --- but, because we can compute $\pi_{1}$ and $\pi_{2}$, we can compute the permutation by which they've been shuffled, so we can then simply perform that permutation in reverse.
\end{proof}

We now show that there exists a series of valid transformations to $\mathcal{G}_C^\ell$ that result in all edges being computed. In fact, we prove the following even stronger claim.
\begin{claim}
    For any dependency graph $\mathcal{G}_C^\ell$, any series of valid transformations to $\mathcal{G}_C^\ell$ result in a dependency graph where all vertices are uncomputed and all edges are computed.
\end{claim}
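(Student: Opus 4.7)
The plan is to introduce the potential $\phi(H) = |E(H)| - |V(H)|$, where $H$ is the subgraph of $\mathcal{G}_C^\ell$ consisting of the still-computed vertices together with the still-uncomputed edges. Since $\mathcal{G}_C^\ell$ has exactly $w$ vertices and $w$ edges, we start with $\phi(H) = 0$. A direct inspection shows that each transformation type preserves $\phi$: types 1 and 2 each remove exactly one vertex and one edge from $H$, and type 3 applied to a cycle of length $k$ removes $k$ vertices and $k$ edges. Hence $\phi(H) = 0$ is maintained throughout the process.

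The substantive step is to show that whenever $H$ is non-empty, some valid transformation is applicable. Combined with the fact that each transformation strictly decreases $|V(H)| + |E(H)|$, this will imply that any maximal sequence of valid transformations must terminate at $H = \emptyset$, which is precisely the state in which every vertex is uncomputed and every edge is computed. To establish applicability, I would decompose $H$ into connected components: a component with $v_i$ vertices and $e_i$ edges satisfies $e_i \ge v_i - 1$, with equality iff it is a tree, and the invariant $\sum_i (e_i - v_i) = \phi(H) = 0$ forces any tree deficit to be balanced by components containing extra cycles.

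A short case analysis completes the argument. If any component contains a leaf --- which happens for any non-trivial tree component and any cyclic component with a pendant tree --- then type 2 applies. Otherwise every component is either an isolated vertex or a pure cycle; type 3 handles any pure cycle, and if an isolated vertex exists then the invariant $\phi(H) = 0$ guarantees some edge remains (necessarily inside a pure cycle), making type 1 applicable. The main obstacle is making the case analysis genuinely exhaustive and using the invariance of $\phi$ to rule out degenerate configurations where no transformation would apply --- once $\phi(H) = 0$ is established, the structural dichotomy between trees and cycles becomes tight enough that every non-terminal state admits some move.
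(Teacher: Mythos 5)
Your proposal is correct and takes essentially the same route as the paper: the invariant $|E(H)|=|V(H)|$ (which the paper phrases as ``average degree $2$''), the observation that any nonempty $H$ admits a type i, ii, or iii move, and termination because every move strictly shrinks $H$. One intermediate claim is overstated---in the leafless case it is not true that every component is an isolated vertex or a pure cycle, since an isolated vertex can coexist with a leafless component having $e_i > v_i$ (e.g.\ a theta-shaped component)---but your handling of that case only uses that some uncomputed edge remains, and the pure-cycle dichotomy is genuinely forced exactly when no isolated vertex is present, so the argument goes through unchanged.
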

\begin{proof}
    Let $\hat{\mathcal{G}}$ denote the current state of the dependency graph when restricted to computed vertices and uncomputed edges. Notice first that $\hat{\mathcal{G}}$ initially has average degree 2 since $\hat{\mathcal{G}}$ has exactly the same number of edges and vertices. Furthermore, since in each valid transformation, exactly the same number of vertices are uncomputed as edges are computed, this invariant hold after any series of valid transformations are applied to $\mathcal{G}$. 
    
    Observe that if $\hat{\mathcal{G}}$ is non-empty, then one of the following must hold: either a type i or ii valid transformation can be performed on $\hat{\mathcal{G}}$, or $\hat{\mathcal{G}}$ is a graph with average degree 2 (by the above argument) where every vertex has degree at least 2, which implies that $\hat{\mathcal{G}}$ is the union of disjoint cycles, which implies that a type iii operation can be performed on $\hat{\mathcal{G}}$.

    Therefore, one can always apply a valid transformation to $\mathcal{G}_C^\ell$ which results in $\hat{\mathcal{G}}$ containing fewer vertices and edges. Therefore, any series of valid transformations will eventually result in an empty $\hat{\mathcal{G}}$, implying that all vertices will be uncomputed and all edges will be computed.
\end{proof}

Having shown that such a sequence of operations \emph{exists}, the final step of the argument is to show that we can compute such a sequence with only logarithmic working space. 
That is, we would like to be able to compute in logarithmic space, given an index $i$, the $i$th edge computed in a valid sequence of operations (and, likewise, the $i$th vertex uncomputed).

The first step of our algorithm will be to find a spanning forest of the graph; by results of Reingold and of Nisan and Ta-Shma, this is possible in logarithmic space~\cite{nisan1995symmetric, reingold2008undirected}.
\begin{lemma}[\cite{nisan1995symmetric, reingold2008undirected}]
    \label{lem: msf_comp}
    There exists a logarithmic space algorithm $\mathcal{A}$ which when given oracle access to a graph $\mathcal{G}$ and input $e \in \mathcal{G}$ outputs a value in $\zo$ such that $E = \{ e : \text{$\mathcal{A}^{\mathcal{G}}(e) = 1$}\}$ forms a minimum spanning forest for $\mathcal{G}$. We refer to $E$ as $\text{MSF}(\mathcal{G})$.
\end{lemma}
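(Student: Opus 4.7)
The plan is to reduce computing a minimum spanning forest to polynomially many USTCON queries, and then invoke Reingold's theorem that USTCON is in $\LOG$~\cite{reingold2008undirected}. Fix a canonical total ordering on the edges of $\mathcal{G}$---for instance, the lexicographic order on endpoint pairs---which gives every edge a unique rank and picks out a single well-defined minimum spanning forest via Kruskal-style tie-breaking.

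The characterization to exploit is the cut/cycle property: an edge $e = (u,v)$ belongs to the unique MSF defined above if and only if $u$ and $v$ are \emph{not} already connected in the subgraph $\mathcal{G}_{<e}$ consisting of all edges of strictly smaller rank than $e$. So on input $e$, the algorithm $\mathcal{A}^{\mathcal{G}}$ runs Reingold's USTCON algorithm on $\mathcal{G}_{<e}$ with source $u$ and target $v$, and returns the negation of the answer. The set $E = \{ e : \mathcal{A}^{\mathcal{G}}(e) = 1\}$ is then precisely the canonical MSF.

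Oracle access to $\mathcal{G}_{<e}$ is cheaply simulated: every time Reingold's algorithm queries a neighbor in $\mathcal{G}_{<e}$, forward the query to $\mathcal{G}$ and filter out any incident edge whose rank is at least that of $e$. Comparing edge ranks is a logspace operation, and composing logspace oracle machines incurs only a constant-factor blow-up in space, so the whole procedure runs in logspace.

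The main potential obstacle is bookkeeping rather than conceptual: one must verify that the oracle-access conventions for Reingold's algorithm are compatible with the filtered-subgraph oracle constructed on the fly, and observe that the spanning \emph{forest} case (as opposed to spanning tree) is handled by the same cut-property formulation applied to each connected component separately. Both are routine once the USTCON-in-$\LOG$ tool is in hand.
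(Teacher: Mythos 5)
Your proof is correct, and it is essentially the standard argument behind the cited result: the paper itself does not prove this lemma but imports it from Nisan--Ta-Shma and Reingold, whose approach is exactly the Kruskal-style cut-property characterization (an edge is in the canonical spanning forest iff its endpoints are disconnected in the subgraph of lower-ranked edges) combined with undirected connectivity in $\LOG$ and logspace composition. Nothing further is needed.
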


We now consider the constituent parts of $\mathcal{G}_C^\ell$ and show that determining to which part of $\mathcal{G}_C^\ell$ a particular edge belongs can be done in logarithmic space.
\begin{definition}
    For a connected component of $\mathcal{G}_C^\ell$ we define the following. The \defn{skeleton} to be all edges belonging to $\text{MSF}(\mathcal{G}_C^\ell)$ plus the lexicographically last edge not belonging $\text{MSF}(\mathcal{G}_C^\ell)$. The \defn{feathers} of the connected component are all edges not belonging to the skeleton. The \defn{skull} of the \defn{skeleton} is the cycle in the \defn{skeleton} (if it exists). All edges in the skeleton which are not part of the skull are the \defn{bones}.   
\end{definition}

\begin{definition}
    For each bone in $\text{MSF}(\mathcal{G}_C^\ell)$ , we will define the \defn{prominence} to be the minimum distance in the skeleton from one of its endpoints to a leaf vertex.
\end{definition}

\begin{lemma}
    \label{lem: nc0_subroutines}
    Let $C: \zo^n \rightarrow \zo^n$ be a logspace uniform, $\mathsf{NC^0_2}$ circuit and $\mathcal{G}$ be the dependency graph for $C$. There exist logarithmic space subroutines to determine, for $\mathcal{G}$, which connected component a vertex/edge belongs to, the prominence of an edge and if an edge is a feather, bone, or part of the skull.
\end{lemma}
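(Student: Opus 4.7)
The plan is to reduce each subroutine to $O(\log n)$ calls to Reingold's undirected reachability algorithm and the MSF oracle $\mathcal{A}$ of \cref{lem: msf_comp}, combined with enumerations over vertices or edges that each fit in $O(\log n)$ space. Connected components are handled directly: given two vertices (or the endpoints of two edges), a single Reingold query on $\mathcal{G}$ decides whether they are in the same component, and if a canonical label per component is needed we enumerate vertices and return the lex-smallest one reachable from the input. For classifying an edge $e$, we first query $\mathcal{A}^{\mathcal{G}}(e)$. If $e \notin \text{MSF}(\mathcal{G})$, we enumerate edges $e'$ lex-greater than $e$ and test (using Reingold and $\mathcal{A}$) whether any such $e'$ lies in the same component as $e$ and is also outside the MSF; if none exist then $e$ is the lex-last non-MSF edge of its component and is therefore the non-tree closing edge of the skull, and otherwise $e$ is a feather. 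If $e \in \text{MSF}(\mathcal{G})$ then $e$ belongs to the skeleton, and it lies in the skull iff $e$ lies on the unique MSF path between the endpoints of the component's lex-last non-MSF edge $e^\star = (u^\star,v^\star)$ (which we locate as above); equivalently, iff deleting $e$ from $\text{MSF}(\mathcal{G})$ disconnects $u^\star$ from $v^\star$, a single Reingold query on the locally computable edge-deleted subgraph. Otherwise $e$ is a bone.

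For the prominence of a bone edge $e=(u,v)$, leaves of the skeleton are the vertices of skeleton-degree one, which we detect by iterating over each vertex's incident edges and calling the classifier above to count skeleton edges. Distances in the skeleton reduce to tree distances in the MSF: because each component's skeleton is the MSF of that component plus at most one extra edge $e^\star=(u^\star,v^\star)$ closing a cycle, for any $a,b$ in the same component we have
\[
d_{\text{skel}}(a,b) \;=\; \min\bigl(d_{\text{MSF}}(a,b),\; d_{\text{MSF}}(a,u^\star)+1+d_{\text{MSF}}(v^\star,b),\; d_{\text{MSF}}(a,v^\star)+1+d_{\text{MSF}}(u^\star,b)\bigr),
\]
with the obvious simplification when the component is acyclic. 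We then iterate over leaves $\ell$ of the skeleton and output $\min_{\ell}\min(d_{\text{skel}}(u,\ell),\,d_{\text{skel}}(v,\ell))$.

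The main obstacle is that the tree distances $d_{\text{MSF}}(\cdot,\cdot)$ must themselves be computable in $\LOG$. We resolve this by rooting each MSF component at its lex-smallest vertex: the parent of a non-root vertex $v$ is the unique MSF-neighbor $u$ such that deleting $\{u,v\}$ from the MSF disconnects $v$ from the root, which is decided by one Reingold call on the edge-deleted MSF. Iterating along parent pointers computes depths with a counter, and the LCA of $a,b$ is identified by enumerating vertices and selecting the deepest one that is an ancestor of both (ancestry being decided by a further Reingold call on appropriately edge-deleted subgraphs). The identity $d_{\text{MSF}}(a,b) = \text{depth}(a) + \text{depth}(b) - 2\,\text{depth}(\text{LCA}(a,b))$ then yields the desired distance. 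Every loop, walk, and oracle call uses only $O(\log n)$ bits of state, so all three subroutines sit in $\LOG$.
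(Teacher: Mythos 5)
Your proposal is correct, and it uses the same basic toolkit as the paper (the MSF oracle of \cref{lem: msf_comp} plus Reingold's connectivity algorithm, with the lexicographically last non-MSF edge of a component pinning down the skull), but several of the subroutines are implemented differently. For deciding whether a tree edge $e$ lies on the skull, the paper walks around the cycle: it locates the closing edge $z$, then repeatedly steps to an adjacent MSF edge until it hits either $e$ (skull) or returns to $z$ (bone); you instead test whether deleting $e$ from the MSF disconnects the endpoints of $e^\star$, a single connectivity query on an edge-deleted subgraph, which is arguably cleaner and avoids reasoning about traversing the cycle. For prominence, the paper is terse --- it asserts that the bones form a tree and that the minimum leaf depth can be found by a logspace DFS --- whereas you give an explicit reduction: skeleton distances equal MSF tree distances adjusted by the one closing edge, and tree distances are computed by rooting at the lex-smallest vertex, recovering parent pointers via cut-edge tests, and using depths and an LCA. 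This is more machinery than the paper invokes, but it makes the logspace claim fully explicit; the only slightly under-specified step is your ancestry test for the LCA, which can be closed trivially (e.g., walk parent pointers from $a$ toward the root and check whether the candidate $w$ is encountered), so nothing essential is missing. Both proofs establish the lemma; yours trades the paper's compact appeals to cycle-walking and tree DFS for explicit connectivity-based tests, and in doing so fills in details the paper leaves to the reader.
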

\begin{proof}
    One can determine which connected component an edge $e$ belongs to simply by identifying each connected component with its lexicographically first edge and then using the logarithmic space algorithm of \cite{reingold2008undirected} to find the lexicographically smallest $v$ such that an endpoint of $e$ is connected to $v$.
    
    The prominence of a bone can be calculated in logarithmic space: the connected component of bones to which a given bone belongs forms a tree; we can determine the minimum depth of a leaf in that tree in logarithmic space by performing a depth-first search.
    
    Determining if an edge is a feather simply requiring confirming that $e \notin \text{MSF}(\mathcal{G}_C^\ell)$ or is the lexicographically last edge not in $\text{MSF}(\mathcal{G}_C^\ell)$. Recall that computing if an edge $e$ is in $\text{MSF}(\mathcal{G})$ can be done in logarithmic space by \cref{lem: msf_comp}. To check if an edge $e$ is in the skull, we start by finding the lexicographically last edge not in $\text{MSF}(\mathcal{G}_C^\ell)$, call it $z$. We then find $e_1 \in \text{MSF}(\mathcal{G}_C^\ell)$ sharing a vertex with $z$, $e_2 \in \text{MSF}(\mathcal{G}_C^\ell)$ sharing a vertex with $e_1$, and so on, until we find that either $e_i = e$, in which case we conclude $e$ is in the skull, or $e_i = z$, in which case we conclude $e$ is a bone.
\end{proof}

With these definitions in place, we are ready to describe how to compute $C$ in $\inplaceFL$.
\begin{proof}[Proof of \cref{thm:small-width-computable}]
Let $w$ be the width of $C$, we assume without loss of generality that $n = w$ and use $O(\log n)$ bits of our work tape to simulate the case whenever we find $w > n$ since $w = n + O(\log n)$. For a fixed $\ell$ consider $\mathcal{G}_C^\ell$. By \cref{lem: nc0_final_perm}, it suffices to show a logarithmic space algorithms which when given $i$, tells us which vertex/edge of $\mathcal{G}_C^\ell$ is the $i^{\text{th}}$ uncomputed/computed via a valid transformations such that after all such transformations have been performed, $\mathcal{G}_C^\ell$ has only computed edges and uncomputed vertices. Our algorithm will then simply apply this subroutine along with \cref{lem: nc0_final_perm} to compute one layer of the circuit $d$ times where $d$ is the depth of the $C$. At any given point, we let $\hat{\mathcal{G}}$ denote the uncomputed edges and computed vertices of the dependency graph.

We will implicitly process the connected components in order of initial average degree. Note that we can find the connected component with the $j$th smallest average degree in logarithmic space. 
That is, our process will compute all edges from one connected component before moving on to the next. Also, whenever we perform a type i operation (i.e. uncomputing an isolated vertex), the edge we will choose to compute will be the lexicographically-first feather in the first connected component with uncomputed feathers.

This ensures that, when we begin processing a component, the uncomputed edges of the component is exactly its skeleton. No edges in the skeleton will have been computed, because the only way edges could have been computed is through type i operations, which only compute feathers. All feathers will have been computed. To see why, consider $\hat{\mathcal{G}}$ at this point and let $c$ denote the average degree of computed vertices in the current connected component at this point. Notice that $c \leq 2$ since if it were not, then $\hat{\mathcal{G}}$ all remaining connected components of $\hat{\mathcal{G}}$ would have average degree greater than 2. But notice that the average degree of $\hat{\mathcal{G}}$ is invariant under valid transformations and started out as $2$. If $c < 2$, then the current connected component is a tree, implying that the current connected component is exactly the skeleton. If $c = 2$, then the current connected component in $\hat{\mathcal{G}}$ is a cycle, which is exactly the skull.

So, we only need to specify how to process the skeleton. The sequence will be: first perform type ii operations to compute all bones, ordered first by prominence and then lexicographically. Then, perform a type iii operation to compute the skull if it exists (starting with the lexicographically first edge in the skull, then its lexicographically first neighbor, and proceeding in that direction around the skull). Finally, perform a type i operation to delete the remaining isolated vertex. Observe that this gives a valid sequence of operations: since we order bones by prominence, we ensure that each bone does indeed touch a leaf at the time of its deletion, and since we handle all bones before the skull we ensure that the skull is an isolated cycle at the time of its deletion.

The crucial point is that now, given an edge of the graph, it's possible to compute in logarithmic space that edge's index in our sequence of edge deletions. 
If the edge is a feather, we can compute its index $j$ among feathers (by counting how many feathers belong to earlier connected components and how many feathers in the same connected component are lexicographically earlier). We therefore know that this feather will be deleted exactly when the $j$th connected component is done being processed, so can find its index in the overall deletion sequence by adding up the total number of vertices in the first $j$ connected components.
If the edge is a bone, we can simply add the number of vertices in earlier connected components plus the number of earlier bones in the same connected components.
If the edge belongs to a skull, we add the number of vertices in earlier connected components, plus number of bones in the same connected component, plus the index of the edge around the skull. 
The same arguments will allow us to determine for any given \emph{vertex} its index in our sequence of \emph{vertex} deletions.
\end{proof}

\section{Computing \ts{\( \FZPP \)}{FP} in-place with an oracle}
\label{sec: fp_in_place}
We show that every function in $\FZPP$ can be computed in-place if our in-place algorithm is given access to a sufficiently powerful oracle.
\TheoremFPInPlace*

We observe that this immediately implies
\cref{cor:sat-in-l-means-inplace-happy} and gives a barrier to proving that $\FL  \not\subseteq \inplaceFL$ and $\FCL \not\subseteq \inplaceFCL$.

\begin{proof}[Proof of \cref{cor:sat-in-l-means-inplace-happy}]
    To show that if $\FCL \not\subseteq \inplaceFL$, then $\NP\neq\LOG$, we will show the contrapositive. If $\NP = \LOG$, then the polynomial hierarchy collapses to $\LOG$, implying $\STP = \LOG$. Then we have $\FCL \subseteq \FZPP \subseteq \inplaceFL^{\STP} = \inplaceFL^{\LOG} = \inplaceFL$ where $\FZPP \subseteq \inplaceFL^{\STP}$ follows from \cref{thm:FPInPlace} and $\STP = \LOG$ follows by assumption.
\end{proof}

\subsection{Good matrices for routing and pseudorandom sets}
\subsubsection{Routing matrices}
At a high level, our strategy will be to transform $x$ into $f(x)$ in-place using a change of basis matrix $Q$. In particular, we will change each coordinate of $Qx$ into $Q f(x)$ one at a time. However, we must be careful to ensure that we can reconstruct $x$ (since it is no longer on the input tape) at each point in the algorithm. To help with that, we introduce a hashing matrix $H$ which will output a short hash of $x$. Still, there may exist two distinct inputs $x, x'$ such that even with $Q$ and $H$ to distinguish between them, our change of basis algorithm becomes confused about whether it is trying to compute $f(x)$ or $f(x')$ at some point in its execution. We call this a conflict.
\begin{restatable}{definition}{fpconflict}
    \label{def: FPConflict}
    Given a function \( f \) and matrices $Q \in \zo^{n \times n}, H \in \zo^{\lceil 2 \log n \rceil \times n}$, we say inputs \( x,x' \in \zo^n \) \emphdef{conflict} if \( f(x') \not= f(x) \) but \( H f(x') = H f(x) \) and for some \( i \in [n] \), $[Q f(x')]_{[1, i]} = [Q f(x)]_{[1, i]}$ and $[Q x']_{[i+1, n]} = [Q x]_{[i+1, n]}$. (Matrix-vector products are modulo 2.)
\end{restatable}

For each function $f$ and input $x$, we want to use $(Q, H)$ such that no conflict occurs. We now formally define conflict avoiding.
\begin{restatable}[conflict-avoiding]{definition}{goodQH}
    \label{def: good_QH}
    Given a function \( f \) and input \( x \in \zo^n \), matrices $Q \in \zo^{n \times n}, H \in \zo^{\lceil 2 \log n \rceil \times n}$ are \emphdef{conflict-avoiding for \( f \) on input \( x \)} if $Q$ is invertible and \( x \) does not conflict with any \( x' \in \zo^n \).
\end{restatable}

Although we cannot guarantee that there exist $(Q, H)$ such that for all $x$, $(Q, H)$ are conflict-avoiding, we can do something almost as good. We can guarantee that there exist a list of $(Q_1, H_1), \dots, (Q_{\poly(n)}, H_{\poly(n)})$ such that for each $x$, one of the $(Q_t, H_t)$ are conflict avoiding for $x$ (we will see how in \cref{subsec: conflict_avoid}). Note that checking which $(Q_t, H_t)$ are conflict avoiding for a given $x$ can be done using only an $\NP$ oracle. This will prove useful later as our oracle algorithm will use an $\NP$ oracle call to decide which $(Q_t, H_t)$ to use as change of basis and hashing matrices by using $\NP$ oracle calls.
\begin{restatable}[universally conflict-avoiding]{definition}{universallyGoodQH}
    \label{def:GoodAdviceForFPInPlace}
    Let $a$ be series of advice strings $a: \N \rightarrow \zo^*$, where we interpret $a(n)$ as a sequence $(Q_1, H_1), \dotsc, (Q_{\poly(n)}, H_{\poly(n)} )$ where $Q_i \in \zo^{n \times n}, H_i \in \zo^{\lceil 2 \log n \rceil \times n}$.
    We say that $a$ is \emphdef{universally conflict-avoiding for \( f \)} if for all sufficiently large \( n \in \N \) and all $x \in \zo^n$, there is a $t \in [1, \poly(n)]$ such that $(Q_t, H_t)$ is conflict-avoiding for \( f \) on input $x$.
\end{restatable}

\subsubsection{Pseudorandom sets for $\FZPP$}
As we are dealing with $\FZPP$ functions which we wish to put into deterministic complexity classes, we will also make use of pseudorandom sets.
\begin{definition}
    We say that a sequence $R = (x_1, \dots, x_m)$ of $n$-bit strings is a pseudorandom set if, for all $n$-input circuits of size $n$:
    \[ \left|\underset{x \sim R}{\mathbb{P}}[C(x) = 1] - \underset{y \sim \zo^n}{\mathbb{P}}[C(y) = 1]\right| \leq 1/n .\]
\end{definition}

Standard probabilistic arguments show that polynomially bounded pseudorandom sets (those where $m = \poly(n)$) exist. The following lemma shows that these can be constructed using an $\AVOID$ oracle call.
\begin{lemma}[\cite{korten2022hardest}]
    \label{lem: PRG_to_Avoid}
    Let PRG be the following search problem: Given $1^n$, output a pseudorandom set $(x_1, \dots, x_m)$ where $x_i \in \zo^n$. Then PRG reduces in polynomial time to $\AVOID$. 
\end{lemma}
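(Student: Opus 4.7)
The plan is to use the $\AVOID$ oracle to construct the truth table of a hard-to-compute Boolean function, then feed that function through the standard hardness-to-pseudorandomness pipeline to obtain the required pseudorandom set. This gives a polynomial-time Turing reduction that makes a single $\AVOID$ oracle query.

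\textbf{Step 1: hard truth table from $\AVOID$.} Fix constants $C' \geq 1$ and $C \gg C'$ and set $k = C \log n$, $s = n^{C'}$. Consider the circuit-evaluator $U_{k,s}$ that interprets its input as the description of a size-$s$ Boolean circuit on $k$ inputs (representable in $O(s \log s) = O(n^{C'} \log n)$ bits) and outputs the full truth table of that circuit on $\zo^k$ (of length $2^k = n^C$). Since $2^k \gg O(s \log s)$ once $C$ is chosen much larger than $C'$, the circuit $U_{k,s}$ strictly expands its input, so it is a valid $\AVOID$ instance. A single query returns a string that is the truth table of a function $f\colon \zo^k \to \zo$ not computable by any size-$s$ circuit.

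\textbf{Step 2: hard function to PRG.} Feed the truth table of $f$ into the standard hardness-to-pseudorandomness pipeline (Impagliazzo--Wigderson worst-case-to-average-case hardness amplification followed by the Nisan--Wigderson generator) to construct a pseudorandom generator $G\colon \zo^\ell \to \zo^n$ with seed length $\ell = O(\log n)$ that $(1/n)$-fools size-$n$ Boolean circuits. Both transformations run in $\poly(n)$ time using only the truth table of $f$. Then enumerate all $2^\ell = \poly(n)$ seeds to produce $R = (G(s))_{s \in \zo^\ell}$. For any size-$n$ distinguisher $D$, the distribution $x \sim R$ coincides with $G(s)$ for a uniformly random seed $s$, so
\[
\left| \Pr_{x \sim R}[D(x)=1] - \Pr_{y \sim U_n}[D(y)=1] \right|
\]
equals the PRG's distinguishing advantage and is at most $1/n$, meeting the pseudorandom-set requirement.

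\textbf{Main obstacle.} The delicate point is the parameter balance. We must pick $k, s$ so that simultaneously (i) the $\AVOID$ instance $U_{k,s}$ strictly expands, i.e.\ $2^k > \Omega(s \log s)$; (ii) the hard function has enough hardness to yield, after amplification, a PRG against size-$n$ circuits with error $1/n$, which requires $s = n^{\Omega(1)}$; and (iii) the resulting seed length is $O(\log n)$, so that only $\poly(n)$ seeds need to be enumerated. Taking for instance $C' = 3$ (so $s = n^3$) and $C = 10$ (so $k = 10 \log n$) satisfies all three constraints. Each ingredient is standard, but their composition has to be threaded carefully through the parameter regime above.
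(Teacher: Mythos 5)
Your proposal is correct and is essentially the argument behind the result the paper cites from Korten: use the truth-table-generator circuit as an $\AVOID$ instance to obtain a truth table of circuit complexity $n^{\Omega(1)}$ on $O(\log n)$ inputs, then run the standard hardness-to-randomness construction and enumerate all $\poly(n)$ seeds. The paper gives no in-text proof (it is stated by citation), and your reconstruction matches that proof, up to the unverified but harmless claim that the specific constants $C'=3$, $C=10$ suffice rather than just ``sufficiently large constants.''
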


We can then specialize this definition to computing pseudorandom set for a specific language on a specific input length.
\begin{definition}\label{def:PRG_for_f}
    Let $c$ and $c'$ be constants. Let $f$ be a $\FZPP$ function and let $A(x; r)$ be the randomized algorithm computing $f$ where $|x| = n$ and $|r| = n^c$ for some constant $c$. Let $C_n: \zo^{\poly(n)} \rightarrow \zo$ be a size $n^{c'}$ circuit which on input $r' \in \zo^{n^{c'}}$ outputs whether $A(x; r'_{[1, n^c]}) \neq \bot$ for $x\in \zo^n$. We say that a sequence $R = (x_1, \dots, x_m)$ is a pseudorandom set for $f$ (on length $n$) if the following holds:
    \[ \left|\underset{x \sim R}{\mathbb{P}}[C_n(x) = 1] - \underset{y \sim \zo^{n^{c'}}}{\mathbb{P}}[C_n(y) = 1]\right| \leq 1/n^{c'} \]
\end{definition}

\begin{lemma}
    \label{lem: PRG_for_f}
    Let $\text{PRG}^f$ be the following search problem: Given $1^n$, output a pseudorandom set for $f$. $\text{PRG}^f$ reduces in polynomial time to $\AVOID$. 
\end{lemma}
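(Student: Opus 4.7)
The plan is to reduce $\text{PRG}^f$ on input $1^n$ to the generic $\text{PRG}$ problem on input $1^N$ for $N$ a suitable polynomial in $n$, and then invoke \cref{lem: PRG_to_Avoid}. The key observation is that \cref{def:PRG_for_f} only requires fooling the single circuit $C_n$, which has input length $n^{c'}$ and size $n^{c'}$, with error $1/n^{c'}$; these are precisely the parameters that the generic pseudorandom set of \cref{lem: PRG_to_Avoid} provides when instantiated at input length $N = n^{c'}$.

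Concretely, given $1^n$, I would first compute $N = n^{c'}$ and invoke the polynomial-time reduction from \cref{lem: PRG_to_Avoid} on $1^N$. This produces an $\AVOID$ instance whose solution yields a set $R = (x_1, \ldots, x_m)$ of $N$-bit strings such that for \emph{every} circuit $C : \zo^N \to \zo$ of size at most $N$,
\[
\left|\Pr_{x \sim R}[C(x) = 1] - \Pr_{y \sim \zo^N}[C(y) = 1]\right| \le 1/N.
\]
Since $C_n$ is one such circuit (input length $N = n^{c'}$, size $n^{c'} = N$), the set $R$ is in particular $(1/n^{c'})$-pseudorandom against $C_n$, which is exactly the condition in \cref{def:PRG_for_f}. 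The whole reduction runs in time $\poly(N) = \poly(n)$.

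The only minor bookkeeping issue is aligning parameters: if the input length to $C_n$ and the size bound of $C_n$ differ from each other, we just take $N$ to be the maximum, padding either the input length or treating a smaller circuit as a size-$N$ circuit trivially; and if we ever wanted error smaller than $1/n^{c'}$, we would enlarge $N$ further. Neither adjustment changes the fact that $N = \poly(n)$, so the reduction remains polynomial time. I do not foresee a real obstacle here; the content of the lemma is essentially that the pseudorandom set guaranteed by \cref{lem: PRG_to_Avoid} is strong enough to fool the specific circuit family arising from an $\FZPP$ algorithm, and this follows by direct parameter-matching.
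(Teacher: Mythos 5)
Your proposal is correct and matches the paper's approach: the paper's entire proof is ``Follows directly from \cref{lem: PRG_to_Avoid},'' and your argument simply spells out the parameter-matching (instantiating the generic pseudorandom-set problem at input length $n^{c'}$ so that the resulting set fools the size-$n^{c'}$ circuit $C_n$ with error $1/n^{c'}$) that the paper leaves implicit.
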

\begin{proof}
    Follows directly from \cref{lem: PRG_to_Avoid}.
\end{proof}

\subsection{Computing \ts{\( f \)}{f} in-place using advice}\label{sec:RandomBasisAlgorithmCore}
For our result, we will use the slightly non-standard notion of Turing machine with advice where we parameterize by the advice. Note that the following definition $\NP/a$ is clearly a subset of $\NP/\poly$ since we are simply restricting ourselves to considering \emph{one} sequence of advice strings rather than all polynomially bounded advice.
\begin{definition}\label{def:NPWithParticularAdvice}
    Let $a: \N \rightarrow \zo^*$ be any function such that $|a(n)| = \poly(n)$. We define the class $\NP/a$ as follows. A language $\mathcal{L}$ is in $\NP/a$ if and only there exists a non-deterministic, polynomial time Turing machine $M$ such that $x \in \zo^n$ is in $\mathcal{L}$ if and only if $M(x, a(n)) = 1$.
\end{definition}

\begin{lemma}
    \label{lem: inplace-f-in-NP/a}
    Let $f$ be a length-preserving function in \( \FZPP \) and $a: \N \rightarrow \zo^*$ be universally conflict-avoiding for $f$ (\cref{def:GoodAdviceForFPInPlace}) and $R : \N \rightarrow \zo^*$ be a sequence of pseudorandom sets for $f$ (\cref{def:PRG_for_f}).
    Then $f \in \inplaceFL^{\NP/(a, R)}$.
\end{lemma}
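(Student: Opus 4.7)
The plan is to use an $\NP$ oracle to first select the smallest index $t^{*}$ such that $(Q_{t^{*}}, H_{t^{*}})$ is conflict-avoiding for $f$ on $x$; such a $t^{*}$ exists and lies in $[\poly(n)]$ by the universal conflict-avoidance hypothesis on $a$. Writing $Q = Q_{t^{*}}$ and $H = H_{t^{*}}$, the algorithm then proceeds in three phases: transform the tape from $x$ to $Qx$, then replace the tape contents coordinate-by-coordinate into $Qf(x)$, then apply $Q^{-1}$ in place to leave $f(x)$ on the tape. The $O(\log n)$-bit workspace will store $t^{*}$, a loop counter, and the $\lceil 2\log n \rceil$ bits of $H f(x)$.

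I would first query the oracle to write $t^{*}$ and $Hf(x)$ into workspace. Each requested bit is an $\NP$ predicate on $(x, a(n), R(n))$: the verifier guesses any needed inputs $x'$ together with $r \in R(n)$, runs the $\FZPP$ algorithm $A(x'; r)$ to evaluate $f$, and checks the desired condition. The pseudorandom-set property of $R$ guarantees that for every $x'$ that arises there exists $r \in R(n)$ with $A(x'; r) \neq \bot$, in which case $A(x'; r) = f(x')$. Next I would apply $Q$ in place by using the oracle to produce a decomposition of $Q$ as a product of $O(n^2)$ transvections $T_{ij} = I + e_i e_j^{\top}$ (which exists because $Q$ is invertible over $\GF(2)$) and applying each by XOR-ing tape bit $j$ into tape bit $i$. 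The core loop then iterates $i$ from $1$ to $n$: when the tape is in state $(Qf(x))_{[1,i-1]} \circ (Qx)_{[i,n]}$, the conflict-avoiding property ensures that this state together with the stored $Hf(x)$ uniquely determines $f(x)$, since any alternate $x'$ consistent with both the tape and the hash must satisfy $f(x') = f(x)$. Thus the oracle can return $(Qf(x))_i$ by guessing such an $x'$, verifying it through $A$ and $R$, and outputting the $i$th bit of $Qf(x')$. Finally, applying $Q^{-1}$ by the reverse transvection sequence leaves $f(x)$ on the tape.

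The main obstacle is establishing that the oracle's answer in the core loop is well-defined, i.e., that every $x'$ consistent with the partial tape state and the stored hash produces the same value of $(Qf(x'))_i$. This is precisely the content of \cref{def: good_QH} and is the reason $H$ is included in the advice; without the hash, distinct inputs with different $f$-values could share the partial tape state and make the next bit ambiguous. A secondary technical matter is that \cref{def: our_oracle_def} only allows oracle queries over contiguous subintervals of the tape, so I would arrange the workspace bits ($t^{*}$, counter, and $Hf(x)$) adjacent to the data region so that a single query interval covers all information the oracle needs, together with a short query-type tag the verifier parses to decide which $\NP$ predicate it is evaluating.
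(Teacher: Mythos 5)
Your proposal is correct and follows essentially the same approach as the paper's proof: use the oracle to select a conflict-avoiding pair $(Q_t,H_t)$ from the advice, store the hash $Hf(x)$ in workspace, and walk through the coordinate-by-coordinate hybrid in the $Q$-basis, with the $\NP$ oracle supplying each next coordinate of $Qf(x)$ (well-defined exactly by conflict-avoidance) and the pseudorandom set $R$ letting the verifier evaluate the $\FZPP$ function $f$ on guessed inputs. The only divergence is mechanical: the paper keeps the tape in the standard basis, holding the hybrid vectors directly and realizing each step by adding $(z_i-y_i)$ times the $i$-th column of $Q^{-1}$, whereas you explicitly convert the tape to $Qx$ via a transvection decomposition of $Q$ and apply $Q^{-1}$ at the end---both are valid in-place realizations of the same hybrid argument.
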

\begin{proof}
We describe an algorithm for computing \( f \) in-place, given access to an \( \NP/(a,R) \) oracle.
Let \( x \in \zo^n \) be any input.
Let $A(x; r)$ be randomized algorithm for $f$ where $|x| = n$ and $|r| = n^c$. Let $R(n) = (r_1, \dots, r_m)$. We first determine an $i \in [m]$ such that
$A(x; {r_i}_{[1, n^c]}) \neq \bot$. From now on, we will view $f$ as being computable in polynomial time with the knowledge that we can do so by computing $A(x; {r_i}_{[1, n^c]})$.

Recall that we interpret the advice string \( a(n) \) as encoding a sequence of \( \poly(n) \) pairs \( ( Q_t, H_t ) \), where we will attempt to use \( Q_t \in \zo^{ n \times n } \) as a change of basis and \( H_t \in \zo^{\lceil 2 \log n \rceil \times n} \) as a hash function, and that one of these pairs is guaranteed to be \emph{conflict-avoiding} for \( f \) on input \( x \).

Our algorithm begins by finding such a conflict-avoiding \( (Q_t,H_t) \); it can do this using the \( \NP/a \) oracle because the problem of determining whether \( (Q_t,H_t) \) is conflict-avoiding is in \( \CoNP \).
Henceforth, let \( (Q,H) = (Q_t,H_t) \) be the conflict-avoiding pair the algorithm found.

Next, the algorithm stores \( H f(x) \in \zo^{\lceil 2 \log n \rceil} \) in memory.
This serves as a hash to help make sure we always have enough information to reconstruct \( f(x) \).

Now, for \( i = 0, \dotsc, n \), define \( v_i(x) \in \zo^n \) to be the unique vector such that the first \( i \) coordinates of $Q v_i(x)$ match the first \( i \) coordinates of \( Q f(x) \), and the remaining coordinates match the corresponding coordinates of \( Q x \).
Then \( v_0(x) = x \) and \( v_d(x) = f(x) \), and the algorithm proceeds in \( n \) steps, transforming \( v_i(x) \) into \( v_{i+1}(x) \) for \( i = 0, \dotsc, n-1 \).

At each step, the fact that \( (Q,H) \) is conflict-avoiding for \( f \) on input \( x \) means the algorithm has enough information to determine \( v_{i+1}(x) \) given \( v_i(x) \) and the hash \( H f(x) \).
What remains is to show how to change \( v_i(x) \) into \( v_{i+1}(x) \) \emph{in place}.
Note that their difference is \( ( z_i - y_i ) (Q^{-1})_i \), where \( z_i \) and \( y_i \) are the \( i \)-th coordinates of \( Q f(x) \) and \(  x \) respectively and \( (Q^{-1})_i \) is the \( i \)-th column of \( Q^{-1} \).
So, we make the change as follows.
First, compute \( z_i \) and \( y_i \) (the next paragraph shows how) and store their difference on the work tape.
Then add that value times \( (Q^{-1})_i \) to the value stored on the input/output tape, using the \( \NP \) oracle to compute matrix inverse.

Finally we show how to compute \( z_i \) and \( y_i \).
The latter is just the \( i \)-th coordinate of \(  v_i(x) \), which can be computed in \( \FL \).
We can compute \( z_i \) with the help of the \( \NP \) oracle: to compute the \( k \)-th bit of \( z_i \), we ask the oracle the following question: does there exist some \( x \) which is consistent with the available information (\( v_i(x) \) and the stored hash \( H f(x) \)) and such that the \( k \)-th bit of \( v_{i+1}(x) \) is \( 1 \)?
The answer is ``yes'' iff the \( k \)-th bit of \( z_i \) is one, because we have ensured there is exactly one \( v_{i+1}(x) \) which is consistent with the available information.
\end{proof}

\subsection{Conflict-avoidance from an \ts{$\AVOID$}{AVOID} oracle}
\label{subsec: conflict_avoid}

Having shown that given access to advice $a$ which is universally conflict-avoiding for $f$ and a pseudorandom set $R$ we can compute $f$ in $\inplaceFL^{\NP/(a,R)}$, we now consider the complexity of computing the advice $a$ for ourselves. To that end, we define the following problem and show that it can be solved given access to an $\AVOID$ oracle.
\begin{definition}
    Let $f$ be a length preserving function. We define the search problem $\ROUTE^f$ as follows. Compute any function $a': \{ 1 \}^* \rightarrow \zo^*$ such that $a(n): \N \rightarrow \zo^*$, $a(n) = a'(1^n)$ is universally conflict-avoiding for $f$.
\end{definition}
\begin{lemma}
    \label{lem: advice_to_avoid}
    Let $f$ be a length preserving function. $\ROUTE^f$ can be solved in polynomial time given access to an $\AVOID$ oracle and a family of polynomial-size circuits computing $f$.\footnote{We obtain this family of circuits from hardwiring the randomness $R$, which we produce separately.}
\end{lemma}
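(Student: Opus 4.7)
The plan is to exhibit the advice list $a(n) = ((Q_1, H_1), \ldots, (Q_T, H_T))$ by a probabilistic argument and then derandomize via the $\AVOID$ oracle, in the spirit of Korten's reduction (\cref{lem: PRG_to_Avoid}). First, fix any input $x \in \zo^n$ and consider a uniformly random pair $(Q, H) \in \mathrm{GL}_n(\F_2) \times \F_2^{\lceil 2\log n\rceil \times n}$. For each candidate conflict witness $(x', i)$ with $x' \neq x$ and $f(x') \neq f(x)$, \cref{def: FPConflict} unpacks to $n$ linear constraints on the rows of $Q$ (the top $i$ rows annihilate $f(x)-f(x')$ and the bottom $n-i$ rows annihilate $x-x'$) together with $\lceil 2\log n\rceil$ linear constraints on the rows of $H$; since both difference vectors are nonzero, these conditions are satisfied with probability $O(2^{-n - \lceil 2\log n\rceil}) = O(2^{-n}/n^2)$ over the random choice. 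Union-bounding over the at most $2^n \cdot n$ candidate pairs gives total conflict probability $O(1/n)$, so $T$ independent samples all fail on $x$ with probability at most $n^{-T}$; a further union bound over $2^n$ inputs shows a random list of length $T = \Theta(n/\log n)$ is universally conflict-avoiding for $f$ with probability $1-o(1)$.

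To derandomize, I construct a polynomial-time circuit $C\colon \zo^{N-1} \to \zo^N$ whose image contains every list $L$ that is \emph{not} universally conflict-avoiding and then invoke $\AVOID(C)$; any returned $L^* \notin \mathrm{Image}(C)$ is then a valid $a(n)$. The crucial design choice is to encode each $Q_t$ as an index into $\mathrm{GL}_n(\F_2)$ via the polynomial-time row-by-row bijection with $\prod_{j=1}^{n}[2^n - 2^{j-1}]$, so invertibility is automatic. The input to $C$ consists of (i) an uncovered $x \in \zo^n$, (ii) per-$t$ conflict witnesses $(x'_t, i_t)$, and (iii) per-$t$ free bits that index an analogous polynomial-time enumeration of invertible solutions $(Q_t, H_t)$ to the linear constraints induced by $(x, x'_t, i_t)$, obtained by choosing row $j$ of $Q_t$ from the appropriate hyperplane $u^\perp$ or $v^\perp$ (with $u = f(x)-f(x')$ and $v = x-x'$) modulo the span of previously chosen rows, and by selecting each row of $H_t$ uniformly from $u^\perp$. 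A short bit-count shows each conflict witness saves roughly $\lceil \log n\rceil$ bits per $t$: the witness costs $n + \lceil \log n\rceil$ bits but fixes $n + \lceil 2\log n\rceil$ dimensions of $(Q_t, H_t)$. Hence the total input size is $N - T\lceil \log n\rceil + n + O(T)$, which is less than $N$ once $T$ exceeds $2n/\log n$; choosing $T = \Theta(n/\log n)$ satisfies both this expansion requirement and the probabilistic bound. By construction every bad $L$, together with its witness $x$ and per-$t$ conflict witnesses, lies in $\mathrm{Image}(C)$; therefore $\AVOID(C)$ returns a universally conflict-avoiding list for length $n$, and iterating over $n$ produces the full advice $a$.

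The main obstacle is the bit accounting, and in particular the invertibility requirement on $Q_t$. If I instead allowed $Q_t \in \F_2^{n\times n}$, then a bad list could have a singular $Q_t$ whose only succinct witness is a null vector $w_t \in \zo^n$; this witness costs $n$ bits while fixing only $n$ dimensions of $Q_t$, yielding no net saving per $t$ and destroying the expansion of $C$. The resolution is to pre-enumerate $\mathrm{GL}_n(\F_2)$ in polynomial time so that invertibility is automatic; the analogous polynomial-time enumeration of invertible solutions to the conflict constraints within $u^\perp$ and $v^\perp$ is a routine Gauss--Jordan construction, and is the only place where careful linear algebra is required.
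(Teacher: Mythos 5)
Your proposal is correct and is, at its core, the same argument as the paper's: you compress a non-conflict-avoiding pair by writing down the conflict witness $(x',i)$, parameterizing the rows of $Q$ inside the hyperplanes orthogonal to $f(x)\oplus f(x')$ and $x\oplus x'$, constraining $H$ by $Hf(x)=Hf(x')$, encoding invertibility via the row-by-row bijection of \cref{lem:EncodeInvertible}, and letting $\AVOID$ hand back a good list. The packaging differs: the paper builds, for each $x$, a small circuit $C_x$ that outputs a \emph{single} pair $(Q,H)$ (this already has codomain/domain ratio $\Omega(n)$), and then applies the generic batching reduction of \cref{lem: list_avoid} to one big circuit $C'(x,z_1,\dots,z_{n+1})$, yielding a list of length $n+1$; you inline that batching into one monolithic list-circuit and do the amortized accounting by hand, getting $T=\Theta(n/\log n)$ — both work, and your opening probabilistic calculation is redundant, since expansion plus coverage already guarantee that any avoided string is a universally conflict-avoiding list. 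The one spot where the paper's route is cleaner is invertibility: you insist on unranking only the \emph{invertible} solutions of the witness-induced constraints, which makes the per-block index set witness-dependent (the intersection dimension shifts when the row constraint switches from $u^\perp$ to $v^\perp$, and the set can be empty, e.g.\ when $i=n$), so the ``routine Gauss--Jordan'' enumeration needs real care to state as a fixed-length, poly-time bijection; the paper instead enumerates \emph{all} hyperplane rows via fixed maps $\phi_\alpha:\zo^{n-1}\to\alpha^{\perp}$ and outputs $0$ whenever the assembled $Q$ is singular, enforcing invertibility only in the output encoding, which keeps the domain a fixed product and makes both coverage and the bit-counting immediate.
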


To prove this lemma, we will describe a family of expanding circuits \( (C_n)_{ n \in \N } \) whose output we interpret as a sequence of pairs \( (Q_t, H_t) \).
We design the circuits so that their range includes every sequence which is not universally collision-avoiding for \( f \), and use the \( \AVOID \) oracle to solve our problem.
Before going into the details, we explain some intuition behind why this should be possible.

A first requirement for a sequence \( (Q_t, H_t) \) to be universally conflict-avoiding is that each matrix \( Q_t \) is invertible.
We handle this using by having the circuit \( C_n \) output each \( Q_t \) using a compact encoding given by \cref{lem:EncodeInvertible} which admits only invertible matrices.

The more difficult requirement is that for every possible input \( x \in \zo^n \), some pair \( (Q_t, H_t) \) ensures that there is no conflicting \( x' \in \zo^n \).
For \( x \) and \( x' \) to conflict means that \( H_t f(x) = H_t f(x') \) and for some \( i \), the first \( i \) coordinates of \( Q_t f(x) \) and \( Q_t f(x') \) match, and the last \( n-i \) coordinates of \( Q_t x \) and \( Q_t x' \) also match.
To complete the proof of \cref{lem: advice_to_avoid}, we show that these linear constraints imply there is a way to efficiently encode any such \( (Q_t, H_t) \) as inputs to the circuit \( C_n \), forcing the \( \AVOID \) oracle to output at least one conflict-avoiding pair \( (Q_t, H_t) \).

Before providing the proof in detail, we introduce two useful lemmas.
The first one will help us reduce the problem of finding a universally conflict-avoiding sequence to finding a pair \( (Q,H) \) which is conflict-avoiding on some given input \( x \). We note that essentially the same reduction appeared in \cite{chen2023range} (as a reduction from the problem of generating partially hard truth tables to $\AVOID$).

\begin{lemma}
    \label{lem: list_avoid}
    Let $C_x: \zo^{n} \rightarrow \zo^{n+1}$ be a family of efficiently computable (and index-able) circuits indexed by $x \in \zo^{\poly(n)}$. There exists an $\FP^{\AVOID}$ algorithm which takes as input $1^n$ and outputs $(y_1, \dots, y_{\poly(n)})$ such that for all $x$, there exists an $i$ such that $y_i \notin \text{Range}(C_x)$.
    
\end{lemma}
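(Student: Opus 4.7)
The plan is to solve this by a single call to an \( \AVOID \) oracle on a cleverly constructed ``aggregate'' circuit that stacks many copies of the \( C_x \) circuits in parallel. Let \( p(n) \) denote the polynomial bound on the length of the index \( x \), and set \( m := p(n) + 1 \). Define the circuit
\[
\tilde C_n \colon \zo^{p(n)} \times (\zo^n)^m \to \zo^{m(n+1)}, \qquad (x, z_1, \dotsc, z_m) \mapsto (C_x(z_1), C_x(z_2), \dotsc, C_x(z_m)).
\]
Since the family \( \{C_x\} \) is efficiently index-able, \( \tilde C_n \) can be constructed in polynomial time. Its input length is \( p(n) + mn \) and its output length is \( m(n+1) = p(n) + mn + 1 \), so \( \tilde C_n \) strictly expands (by one bit), making it a legal \( \AVOID \) instance.

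The main observation is a direct contrapositive calculation: a vector \( (y_1, \dotsc, y_m) \in \zo^{m(n+1)} \) lies in the range of \( \tilde C_n \) if and only if there exists some index \( x \in \zo^{p(n)} \) and preimages \( z_i \in C_x^{-1}(y_i) \) for every \( i \in [m] \); equivalently, if and only if there exists some \( x \) such that every \( y_i \) is in \( \mathrm{Range}(C_x) \). Therefore, a string \( (y_1, \dotsc, y_m) \notin \mathrm{Range}(\tilde C_n) \) is precisely a list with the property that for every \( x \), at least one \( y_i \) avoids \( \mathrm{Range}(C_x) \), which is exactly the output required by the lemma.

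The algorithm is then to build \( \tilde C_n \) in polynomial time, submit it to the \( \AVOID \) oracle, and parse the returned \( m(n+1) \)-bit string as \( m \) blocks of \( n+1 \) bits each, yielding \( (y_1, \dotsc, y_m) \). Correctness is immediate from the observation above, and the total running time is polynomial because \( m = \poly(n) \) and each \( C_x \) is polynomial-size.

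There is no genuinely hard step here: the only thing to verify carefully is the expansion bookkeeping (that \( m(n+1) > p(n) + mn \), which needs \( m > p(n) \)) and the biconditional between ``outside the range of \( \tilde C_n \)'' and ``for every \( x \), some \( y_i \) avoids \( \mathrm{Range}(C_x) \).'' Both are elementary. The construction follows the same ``stack in parallel to force an \( \AVOID \) witness to cover every parameter \( x \)'' template used in prior reductions of list-style range avoidance to the single-circuit version, and I would simply cite \cite{chen2023range} as the nearest antecedent.
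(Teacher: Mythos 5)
Your proposal is correct and is essentially identical to the paper's proof: the paper also stacks $p(n)+1$ parallel copies of $C_x$ into a single circuit $C'(x, z_1, \dots, z_{p(n)+1}) = (C_x(z_1), \dots, C_x(z_{p(n)+1}))$, verifies the one-bit expansion, and makes a single $\AVOID$ call, arguing correctness by the same range characterization (phrased there as a contradiction rather than a biconditional).
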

\begin{proof}
    Say $x \in \zo^{f(n)}$. Define a new circuit $C': \zo^{f(n)} \times (\zo^n)^{f(n)+1} \rightarrow (\zo^{n+1})^{f(n)+1}$ by $C'(x, z_1, \dots, z_{f(n)+1}) = (C_x(z_1), \dots, C_x(z_{f(n)+1}))$. The reduction to $\AVOID$ calls $\AVOID$ on $C'$ to obtain $(y_1, \dots, y_{f(n)+1})$ which it then outputs.

    The reduction clearly proceeds in polynomial time. $C'$ is expanding since $(n+1)(f(n)+1) > f(n) + n (f(n)+1) $. To see correctness, say for the sake of contradiction that there exists an $x$ such that for all $y_i$, $y_i \in \text{Range}(C_x)$. For each \( i \), choose some $z_i \in C_x^{-1}(y_i)$. Clearly, $C(x, z_1, \dots, z_{f(n)+1}) = C_x(z_1, \dots, z_{f(n)+1}) = (y_1, \dots, y_{f(n)+1})$. This contradicts the fact that $(y_1, \dots, y_{f(n)+1})$ is avoided by $C'$.
\end{proof}

One requirement of a conflict-avoiding pair \( (Q,H) \) is for \( Q \) to be invertible.
The following lemma helps us meet this requirement.
\begin{lemma}\label{lem:EncodeInvertible}
    There is a bijection between the set of invertible (over \( \GF(2) \)) matrices in \( \zo^{ n \times n } \) and the set of bounded integer sequences in \( R_n := [2^n - 2^0] \times [2^n - 2^1] \times \dotsb \times [2^n - 2^{ n-1 }] \), for every \( n \in \N \).
    Moreover, this family of bijections and their inverses are in \( \FP \).
\end{lemma}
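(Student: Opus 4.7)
The plan is to define the bijection row by row. An invertible matrix $M \in \zo^{n\times n}$ over $\GF(2)$ is characterized by having rows $r_1, \dots, r_n$ with $r_i \notin V_{i-1} := \Span(r_1, \dots, r_{i-1})$ for each $i$. Since $|\zo^n \setminus V_{i-1}| = 2^n - 2^{i-1}$, I would encode $M$ as $(a_1, \dots, a_n)$, where $a_i \in [2^n - 2^{i-1}]$ is the lexicographic rank of $r_i$ inside $\zo^n \setminus V_{i-1}$. The product $\prod_{i=1}^{n}(2^n-2^{i-1})$ equals both the number of invertible matrices over $\GF(2)$ and $|R_n|$, so the encoding is automatically a bijection. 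Decoding reverses the procedure: given $(a_1,\dotsc,a_n)$, recover $r_i$ as the $a_i$-th element of $\zo^n \setminus V_{i-1}$ in lex order using the previously decoded rows.

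Both directions reduce to a single subroutine: given a basis of a subspace $V \subseteq \zo^n$ of dimension $k$, (a) compute the lex-rank of $v \in \zo^n \setminus V$ among $\zo^n \setminus V$, and (b) recover a vector from its rank. The key algorithmic tool is to first put the basis into reduced row-echelon form in polynomial time, yielding pivot columns $p_1 < \dotsb < p_k$. In RREF, each element of $V$ is uniquely determined by its non-pivot coordinates, with pivot coordinates given as explicit $\GF(2)$-linear combinations of the preceding non-pivot coordinates. This makes ``how many $w \in V$ share a given prefix'' easy to compute position by position: a non-pivot bit is always consistent and doubles the count of completions, while a pivot bit is consistent iff it matches the prescribed linear function of the previous non-pivot bits.

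For (a), I would count $N_v := |\{w \in V : w <_{\text{lex}} v\}|$ by a single left-to-right sweep: at every position where $v$ has a $1$ and a $w \in V$ could have $0$ (consistent with the prefix so far), add $2^{(\text{remaining free bits})}$ to $N_v$; then the desired rank is $\text{int}(v) - N_v$. For (b), I would invert bit by bit, at each position determining the next bit by comparing the remaining target rank against the RREF-based count of vectors in $\zo^n \setminus V$ that extend each candidate prefix. Chaining these two subroutines with the row-by-row encoding, and reducing each $V_{i-1}$ to RREF in polynomial time using the already-fixed earlier rows, produces $\FP$ algorithms for both directions of the bijection.

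The main potential obstacle is ensuring the prefix counts can be computed in polynomial time: a naive enumeration of $V$ would cost $2^k$ per query and fail whenever $k = \Omega(n)$. The RREF representation is precisely what circumvents this, since it converts each prefix count into a constant-cost linear-consistency check plus a doubling per non-pivot position, keeping the entire procedure polynomial in $n$.
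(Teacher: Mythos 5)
Your construction is correct in outline, but it takes a genuinely different route from the paper. You encode an invertible matrix row by row via the lexicographic rank of $r_i$ inside $\zo^n\setminus\Span(r_1,\dots,r_{i-1})$, and make this effective by RREF-based prefix counting (ranking and unranking in the complement of a subspace). The paper sidesteps all counting: at step $i+1$ it extends the already-chosen columns $A_1,\dots,A_i$ to a deterministically chosen invertible matrix $Q$, interprets $a_{i+1}+2^{i}$ as a bit vector $b$ whose last $n-i$ bits are not all zero, and sets $A_{i+1}=Qb$; since $Q$ maps $\Span(e_1,\dots,e_i)$ onto $\Span(A_1,\dots,A_i)$, the vector $Qb$ is automatically outside the span, and the bijectivity at each step is immediate, with the inverse given by $b=Q^{-1}A_{i+1}$. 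The paper's encoding is thus simpler to implement (one basis completion and one matrix inversion per column, no large-integer prefix sums), while yours yields the more canonical lex-rank encoding; both are clearly in $\FP$.

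One slip needs fixing before your subroutine is right: for the row space of an RREF basis it is the \emph{pivot} coordinates that are free, and each \emph{non-pivot} coordinate of an element of $V$ is a fixed $\GF(2)$-linear combination of the \emph{preceding pivot} coordinates --- you state the reverse. (For the single RREF row $(1,0)$, $V=\{00,10\}$, and its elements are not determined by the non-pivot coordinate.) The fix is mechanical: in your sweep, a pivot position carries a free bit and doubles the number of completions, a non-pivot position carries a forced bit and only requires a consistency check, and the number of completions of a consistent prefix ending at position $j$ is $2^{\#\{\text{pivot positions} > j\}}$ rather than a power counting non-pivot positions. With that correction, your ranking and unranking subroutines, and hence the whole bijection and its inverse, go through in polynomial time.
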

\begin{proof}
Given \( (a_1, \dotsc, a_n) \in R_n \), we construct the corresponding invertible \( A \in \zo^{ n \times n } \) one column at a time.
Suppose we have already constructed the first \( i \) columns \( A_1, \dotsc, A_i \).
Then we must choose \( A_{ i+1 } \) from the set \( S_{ i+1 } \zo^n \setminus \Span \{ A_1, \dotsc, A_i \} \) based on \( a_{ i+1 } \in [2^n - 2^i] \).

To do this, first temporarily append \( n-i \) columns \( A'_{i+1}, A'_{i+2}, \dotsc, A'_n \) to \( A \) to produce an invertible matrix \( Q = A_1, \dotsc, A_i, A'_{i+1}, \dotsc, A'_n \).
Any deterministic way of doing this will work; for example, we could at each point take the next column to be the first standard basis vector which is not yet spanned.
Now, interpret \( a_{ i+1 } + 2^{ i+1 } \) as a base-2 number \( b = (b_1, \dotsc, b_n) \in \zo^n \), from least to most significant digit.
Because we added \( 2^{ i+1 } \), the last \( n-i \) bits \( b_{ i+1 }, \dotsc, b_n \) can't all be zero.
Therefore, the vector \( Q b \) is not in the span of \( A_1, \dotsc, A_i \): that is, \( Q b \in S_{ i+1 } \). 
So, we may take \( A_{ i+1 } = Q b \).
\end{proof}

Now we are ready to prove \cref{lem: advice_to_avoid}.

\begin{proof}[Proof of \cref{lem: advice_to_avoid}]
    Fix $x \in \zo^n$. We will first construct an \( \AVOID \) instance $C$ which outputs $Q \in \zo^{n \times n}, H \in \zo^{\lceil 2 \log n\rceil \times n}$ which are conflict-avoiding for $f$ on input $x$.
Define $v_i(Q, x) \in \zo^n$ so that its first $i$ elements are $Q f(x)$ and its last $n-i$ elements are $Q x$.
Recall that we say that $Q \in \zo^{n \times n}, H \in \zo^{\lceil 2 \log n\rceil \times n}$ are conflict-avoiding for $f$ on input $x$ if $Q$ is invertible and no \( x' \in \zo^n \) conflicts with \( x \): that is, for all \( x' \in \zo^n \), either $f(x) = f(x')$, or $H f(x) \neq H f(x')$, or $v_i(Q, x) \neq v_i(Q, x')$.
    
Let \( R = [2^n - 2^0] \times [2^n - 2^1] \times \dotsb \times [2^n - 2^{ n-1 }] \).
Using \cref{lem:EncodeInvertible}, we may freely switch between elements of \( R \) and their corresponding matrices with the confidence that such maps can be implemented efficiently. We are now ready to define the circuit $C$ which we will feed to our $\AVOID$ oracle.
    \[ C: \zo^n \times [n] \times \zo^{{\lceil 2 \log n\rceil \times (n-1)}} \times \zo^{n \times (n-1)} \rightarrow R \times \zo^{{\lceil 2 \log n\rceil \times n}} \ldotp \]

    $C$ interprets its input as some $x' \in \zo^n$, $i \in [n]$, $H^* \in \zo^{\lceil 2 \log n\rceil \times (n-1)}$, $b_1, \dots, b_n \in \zo^{n-1}$.
   One should think of $H^*$ as an incomplete matrix missing one column which our circuit will fill in, and $b_1, \dots, b_n$ as indexing rows of the matrix $Q$.

    If $f(x) = f(x')$, $C$ outputs $0$.
    Otherwise, $f(x) \not= f(x')$, and \( C \) will construct a pair \( (Q,H) \) with respect to which \( x \) and \( x' \) conflict (\cref{def: FPConflict}), meaning \( v_i( Q, x) = v_i( Q, x' ) \) and \( H f(x) = H f(x') \).
    We will design \( C \) carefully so that \emph{every} pair \( (Q,H) \) which is not conflict-avoiding for \( f \) on input \( x \) will arise in this way.

    On input $(x', j, H^*, (b_1, \dots, b_n))$, \( C \) proceeds as follows.
    Let \( m \) be the index of the first non-zero bit of \( f(x) \oplus f(x') \) (where \( \oplus \) denotes coordinate-wise addition modulo \( 2 \)).

    To construct \( H \), take the \( n-1 \) columns of the input \( H^* \), and insert as the \( m \)-th column a specially-constructed vector \( r \in \zo^{\lceil 2 \log n \rceil} \) so that \( H f(x) = H f(x') \).
    There is exactly one choice of \( r \) which satisfies this: \( C \) takes \( r = H^* \cdot z_{[n] \setminus \{ m \}} \) where \( z = f(x) \oplus f(x') \).
    
    Next, \( C \) determines the matrix \( Q \) row by row based on the input vectors \( b_1, \dotsc, b_n \).
    For nonzero \( \alpha \in \zo^n \), let $\phi_\alpha: \zo^{n-1} \rightarrow \zo^n$ be any efficiently computable bijective map whose image is all vectors orthogonal to $\alpha$.
    (For example, determine the index of the first nonzero coordinate of \( \alpha \) and have \( \phi_\alpha \) use its input as the other \( n-1 \) coordinates.)
    If $i \leq j$, \( C \) takes $Q_i = \phi_{f(x) \oplus f(x')}(b_i)$.
    If $i > j$, \( C \) takes $Q_i = \phi_{x \oplus x'}(b_i)$.
    If $Q$ is not invertible, $C$ outputs $0$. Otherwise, $C$ outputs $(H, Q)$.

    Now, we show that the range of \( C \) includes all pairs \( (Q,H) \) which are not conflict-avoiding (but where \( Q \) is invertible).
    Indeed, suppose not: then some \( x' \in \zo^n \) conflicts with \( x \) with respect to \( (Q,H) \).
    Let \( m \) be the index of the first nonzero bit in \( f(x) \oplus f(x') \), and set the input \( H^* \) to be \( H \) with its \( m \)-th column deleted.
    Then \( C \) will output \( H \) because it is the only way to extend \( H^* \) that satisfies \( H f(x) = H f(x') \).
    Similarly, let the input \( i \) be an index such that \( v_i(Q,x) = v_i(Q,x') \).
    Then the first \( i \) rows of \( Q \) are orthogonal to \( f(x) \oplus f(x') \), and so they are each in the range of the map \( \phi_{ f(x) \oplus f(x') } \), and similarly, the remaining rows are orthogonal to \( x \oplus x' \), and therefore in the range of \( \phi_{ x \oplus x' } \).
    So, we may set \( b_1, \dotsc, b_n \) to be the pre-images of the rows under \( \phi_{ f(x) \oplus f(x') } \) or \( \phi_{ x \oplus x' } \) as appropriate, and \( C \) will output \( Q \).

    We now show that $C$ is expanding by at least 1 bit when its input and output are interpreted as bits. As noted in \cref{subsec: compression_classes}, it suffices to show that, when interpreted as integer, the size of the codomain of $C$ is at least 8 times the size of the domain of $C$.
    \begin{align*}
        \frac{|R| \cdot 2^{\lceil 2 \log n \rceil \times n}}{2^n \cdot n \cdot 2^{\lceil 2 \log n \rceil \times (n-1)} \cdot 2^{n \times (n-1)}}
        &= \frac{|R| \cdot 2^{\lceil 2 \log n \rceil}}{2^{n^2} \cdot n}\\
        &= \frac{\prod_{j=0}^{n-1} (2^n - 2^j) \cdot 2^{\lceil 2 \log n \rceil}}{2^{n^2} \cdot n}\\
        &\geq \frac{n \cdot \prod_{j=0}^{n-1} (2^n - 2^j)}{2^{n^2}}\\
        &= n \cdot \prod_{k=1}^{n} \left( 1 - \frac{1}{2^k} \right)\\
        &= \Omega(n)
    \end{align*}
    The second equality follows from the size of $R$ established in \cref{lem:EncodeInvertible}. Since $\Omega(n)$ is clearly larger than 8 for sufficiently large $n$, the circuit $C$ is sufficiently expanding.

    Having constructed a circuit which has in its range all \( (Q,H) \) which are not conflict-avoiding, we are ready to complete the proof by solving the \( \ROUTE^f \) problem.
    Note that our circuit \( C \) depends on the input \( x \).
    We invoke \cref{lem: list_avoid} to compute a sequence of \( \Poly(n) \) pairs \( (Q_t,H_t) \) such that one is conflict-avoiding for every \( x \): that is, the sequence is universally conflict-avoiding for \( f \), as required.
\end{proof}

\subsection{Putting it all together}
We are now ready to prove that we can compute $f$ in-place given access to sufficiently powerful oracles.
\begin{proof}[Proof of \cref{thm:FPInPlace}]
    We first show that $f$ can be computed in $\inplaceFL^{\NP/\poly}$. By \cref{lem: inplace-f-in-NP/a}, we know $f$ can be computed in $\inplaceFL^{\NP/(a, R)}$ for any function $a: \mathbb{N} \rightarrow \zo^n$ which outputs universally conflict-avoiding values for $f$ and sequence of pseudorandom sets for $f$ $R:\N\ra\zo^*$. We also know that such an $a$ exists by \cref{lem: advice_to_avoid} and that such an $R$ exists by \cref{lem: PRG_for_f}. Therefore, $f$ can be computed in $\inplaceFL^{\NP/\poly}$.

    To show that $f \in \inplaceFL^{ \STP }$, we begin by observing that by \cref{lem: inplace-f-in-NP/a}, \cref{lem: advice_to_avoid}, and \cref{lem: PRG_for_f}, $f$ can be computed in $\inplaceFL$ given oracle access to some language $\mathcal{L}$ which can be decided by a polynomial-time algorithm using oracle access to a unique valued $\AVOID$ (one which outputs the same answer for any input) oracle and $\NP$ oracle. We show that any such language $\mathcal{L}$ is in $\STP$. Since $\STP$ is Turing closed, it suffices to show that $\mathcal{L}$ can be decided using oracle access to $\STP$. On input $x \in \zo^n$, we decide if $x \in \mathcal{L}$ as follows. \cref{cor: svAvoid_using_S2P} tells us that there exists a $\FP^{\STP}$ algorithm that solves $\AVOID$. By \cref{lem: PRG_for_f}, we can use this algorithm to obtain a pseudorandom set for $f$, and hence build a polynomial-size circuit $C_n'$ where $C_n'(x)=f(x)$ (this circuit enumerates over the elements of the pseudorandom set $R$ and simply takes the first non-$\perp$ output).
    
    Next, by \cref{lem: advice_to_avoid}, we can use this algorithm to solve $\ROUTE^f$ (where we provide the circuit $C_n'$) and get advice $q \in \zo^{\poly(n)}$. Finally, we use our final $\STP$ oracle call to simulate the $\NP$ call (which uses the advice $q$). This whole computation is in $\P^{\STP} = \STP$~\cite{russell1998symmetric}, as desired.
\end{proof}

\section{Computing linear transformations in-place}\label{sec:in_place_matrix}

Here we describe how to do matrix-vector multiplication in-place: given read-only access to a matrix \( A \) over a representable field (\cref{def:RepresentableField}), we replace a vector \( v \) stored in memory with \( Av \).
The algorithm uses \( O( \log n ) \) free space and uses an oracle that can compute certain functions which are in \( \CL \); this puts matrix-vector multiplication in \( \inplaceFL^{ \CL } \) and therefore in \( \inplaceFCL \).

\matvecprod*

We describe our algorithm in the following sections.

In \cref{sec:InPlaceMatrixAlmostTriangular}, we handle a special case where \( A \) is what we call an \emph{almost upper-triangular} matrix.
In \cref{sec:InPlaceMatrixBasisChange}, we relax this requirement to \( Q^{-1} A Q \) being almost upper-triangular for some invertible \( Q \).
In \cref{sec:InPlaceMatrixQ} we describe such a matrix \( Q \), and in \cref{sec:InPlaceMatrixComputeQ} we show it can be computed in \( \FCL \), so that our algorithm can make use of it.

\subsection{Multiplying by almost-triangular matrices}\label{sec:InPlaceMatrixAlmostTriangular}

If \( U \in K^{ n \times n } \) is upper-triangular, we can transform \( x \) into \( Ux \) one coordinate at a time, using the fact that each coordinate \( (Ux)_i \) only depends on coordinates \( v_j \) for \( j \ge i \).
For example, if
\[ U = \begin{pmatrix} 2 & 1 & 5 \\ 0 & 4 & 2 \\ 0 & 0 & 3 \end{pmatrix} \]
our in-place computation might follow these steps:
\[
	\newcommand{\Hl}[1]{\textcolor{red}{#1}}
	x = \begin{pmatrix} 1 \\ 4 \\ 5 \end{pmatrix}
	\rightarrow
	\begin{pmatrix} \Hl{31} \\ 4 \\ 5 \end{pmatrix}
	\rightarrow
	\begin{pmatrix} \Hl{31} \\ \Hl{26} \\ 5 \end{pmatrix}
	\rightarrow
	\begin{pmatrix} \Hl{31} \\ \Hl{26} \\ \Hl{15} \end{pmatrix}
	=
	Ux
\]
At each step, the newly computed coordinate does not depend on any of the coordinates that have already been changed.
(For this example, assume the field is \( K = \GF(p) \) for some prime \( p > 31 \).)

We extend this to \emph{almost upper-triangular} matrices:
\begin{definition}
A matrix \( U \) is \emphdef{almost upper-triangular} if every nonzero entry \( U_{i,j} \) has \( i \le j + 1 \).
\end{definition}
The following algorithm replaces \( x \) with \( Ux \) when \( U \) is almost upper-triangular.
The variables \( v_1, \dotsc, v_n \) are the values currently held in memory; at the start, \( v = x \) and at the end \( v = Ux \).

To see that this algorithm is correct, observe that at the start of every iteration of the outer for loop:
\begin{itemize}
\item the first \( i-1 \) coordinates of \( v \) match the first \( i-1 \) coordinates of \( Ux \);
\item the remaining coordinates match the corresponding coordinates from \( x \); and
\item if \( i > 1 \), the variable ``previous'' holds \( x_{ i-1 } \).
\end{itemize}

\begin{algorithm}[H]
\( \mathrm{carry} \gets 0 \)\;
\For{\(i=1..n\)}{
	\( \mathrm{current} \gets v_i \)\tcc*[l]{\( \mathrm{current} = x_i \)}
	\( v_i \gets 0 \)\;
	\tcc{Handle entries at or above the diagonal: \( U_{i,j} \) where \( i \le j \).}
	\For{\(j=i..n\)}{
		\( v_i \gets v_i + U_{i,j} v_j \)\;
	}
	\tcc{Handle the below-diagonal entry \( U_{i-1,i} \).}
	\If{\( i>1 \)}{
		\( v_i \gets v_i + U_{ i-1, i } \cdot \mathrm{previous} \)\;
	}
	\( \mathrm{previous} \gets \mathrm{current} \)\;
}
\caption{\label{alg:AlmostUpperInPlace}Multiplying an almost-triangular matrix \( A \) in-place.}
\end{algorithm}

\subsection{Changing bases}\label{sec:InPlaceMatrixBasisChange}

Now, suppose have access to an invertible matrix \( Q \) such that \( Q^{ -1 } A Q \) is almost upper-triangular.
Our goal is to somehow use the fact that \cref{alg:AlmostUpperInPlace} can correctly multiply by the matrix \( U = Q^{ -1 } A Q \) in-place to instead multiply by \( A \) in-place.

Let \( w \in K^n \) be the values currently held on the catalytic tape: at the start, \( w=x \), and at the end, our goal is to guarantee \( w=Ax \).
We achieve this by running a simulation of \cref{alg:AlmostUpperInPlace}, with all of its accesses to the matrix \( U \) and vector \( v \) transformed into accesses to \( A \) and \( w \) via a change of basis.

The vector \( v \) presented to \cref{alg:AlmostUpperInPlace} will always correspond to \( Q^{ -1 } w \), and the matrix \( U \) will be \( Q^{ -1 } A Q \).
At the beginning, \( w = x \), and so the vector \( v \) that the algorithm sees is \( Q^{ -1 } w = Q^{ -1 } x \).
Then, since \cref{alg:AlmostUpperInPlace} is correct, it transforms \( v \) from \( Q^{ -1 } x \) to \( U Q^{ -1 } x = Q^{ -1 } A Q Q^{ -1 } x = Q^{ -1 } A x \).
Since the final value of \( v = Q^{ -1 } w \) is \( Q^{ -1 } A x \), it follows that the final value of \( w \) is \( A x \).

To do this, the simulation operates as follows:
\begin{itemize}
\item When \cref{alg:AlmostUpperInPlace} wishes to read an entry \( U_{ i,j } \), we supply it with the \( (i,j) \)-th entry of \( Q^{ -1 } A Q \).
\item When it wishes to read \( v_i \), we supply it with the \( i \)-th coordinate of \( Q^{ -1 } w \).
\item When it wishes to write a value \( a \in K \) to \( v_i \), we instead modify \( w \) in such a way that the \( i \)-th coordinate of \( Q^{ -1 } w \) becomes \( b \) and all other coordinates are unchanged. To do this: let \( b \) be the previous value of the \( i \)-th coordinate, and then add \( a-b \) times the \( i \)-th column of \( Q \) to \( w \). This has the effect of adding \( (a-b) Q^{ -1 } Q e_i = (a-b) e_i \) to \( Q^{ -1 } w \), where \( e_i \) is the \( i \)-th standard basis vector.
\end{itemize}
All of the above operations can be done by a logspace machine with a \( \CL \) oracle, since by \cref{lem:LinAlg} matrix inverses can be computed in \( \FCL \).

\subsection{A particular change of basis}\label{sec:InPlaceMatrixQ}

Next we describe a particular basis change \( Q \) such that \( Q^{ -1 } A Q \) is almost upper-triangular.
Although we eventually want to show \( Q \) can be computed in \( \FCL \), for now we will find it simpler to describe an algorithm that generates the columns of \( Q \) one at a time, each based on the previous columns.

\subsubsection{Warm-up special case}

The problem becomes simpler if, for some ``starting'' vector, say the standard basis vector \( e_1 \), the vectors \( e_1, A e_1, A^2 e_1, \dotsc, A^{ n-1 } e_1 \) are linearly independent.
In this case, we use these vectors as the \( n \) columns of \( Q \):
\[
	Q =
	\begin{pmatrix}
	e_1 & A e_1 & A^2 e_1 & \cdots & A^{ n-1 } e_1
	\end{pmatrix}
\]
Then for \( i<n \), the \( i \)-th column of \( Q^{ -1 } A Q \) is \( e_{ i+1 } \), the \( (i+1) \)-st standard basis vector.
To see this: note that \( Q^{ -1 } A Q e_i = Q^{ -1 } A ( A^{ i-1 } e_1 ) = Q^{ -1 } A^i e_1 \).
Observe that \( A^i e_1 \) is the \( (i+1) \)-st column of \( Q \), so multiplying by \( Q^{ -1 } \), we get \( e_{ i+1 } \).
So,
\[
	Q^{ -1 } A Q
	=
	\begin{pmatrix}
	e_2 & e_3 & e_4 & \cdots & e_n & A e_n
	\end{pmatrix}
\]
which is almost upper-triangular.

\subsubsection{In general}

Unfortunately, it is possible that the matrix \( Q \) from the warm-up is not invertible.
This happens if \( A^t b_0 \) is a linear combination of the previous vectors for some \( t < n \): \( A^t e_1 = \sum_{ i=1 }^{ t-1 } c_i A^i e_1 \) for some coefficients \( (c_i) \).

If this happens, we will keep the first \( t \) columns, and then start the process again, with a new starting vector.
For example, if \( e_2 \) is not spanned by \( \{ A^i e_1 \}_{ i=0 }^{ t-1 } \), then we will try this as our matrix \( Q \):
\[
	Q =
	\begin{pmatrix}
	e_1 & A e_1 & \cdots & A^{ t-1 } e_1 & e_2 & A e_2 & \dotsc & A^{ n-t-1 } e_2
	\end{pmatrix}
\]
If these columns are all linearly independent, then we will have succeeded: \( Q^{ -1 } A Q \) is almost upper-triangular, by an argument similar to the one in the warm-up.
But there could be a linear dependence here too.

We solve this generally by switching to a new standard basis vector every time a linear dependence would have been introduced.
Here is the algorithm:

\begin{algorithm}[H]
Initialize \( L \) to be an empty list.\;
\For{\(i=1..n\)}{
	\( j \gets 0 \)\;
	\While{\( A^j e_i \) is not spanned by vectors in \( L \)}{
		Append \( A^j e_i \) to \( L \).\label{line:AppendAVectorToQ}\;
		\( j \gets j+1 \)\;
	}
}
Set \( Q \) to be the matrix whose columns are the vectors in \( L \), in the same order.\;
\caption{\label{alg:ComputeQSequentially}Computing the matrix \( Q \) sequentially.}
\end{algorithm}

It remains to show that \( Q^{ -1 } A Q \) is almost upper-triangular.
For this, it suffices to show that for any \( k \in \{ 1, \dotsc, n \} \), the last \( n-k-1 \) coordinates of \( Q^{ -1 } A Q e_k \) are zero.
Now, \( Q e_k \) equals \( A^j e_i \) where \( i,j \) have the values they did the \( k \)-th time \cref{line:AppendAVectorToQ} was executed.
So, \( Q^{ -1 } A Q e_k = Q^{ -1 } A^{ j+1 } e_i \).
There are two cases to consider: either \( A^{ j+1 } e_i \) is spanned by the first \( k \) columns of \( Q \), or it isn't.
If it is, then this vector only has nonzero entries in the first \( k \) coordinates.
(These values correspond to the way \( A^{ j+1 } e_i \) appears as a linear combination of the previous columns.)
If it is not spanned, then the next execution \cref{line:AppendAVectorToQ} appends \( A^{ j+1 } e_i \) to \( L \), so the \( (k+1) \)-st column of \( Q \) is \( A^{ j+1 } e_i \), and so \( Q^{ -1 } A Q e_k = e_{ k+1 } \).

\subsection{Computing a good basis in \ts{\( \FCL \)}{FCL}}\label{sec:InPlaceMatrixComputeQ}

It remains to be shown that any entry of the matrix computed by
\cref{alg:ComputeQSequentially} can be computed in \( \FCL \).

For any \( i \), let \( r_i \) be the rank of the \( n \times n \) matrix whose columns are \( A^j e_{i'} \) for \( i' \) ranging from \( 1 \) to \( i \) and \( j \) ranging from \( 0 \) to \( n-1 \).
Then \( r_i \) equals the dimension of the vector space spanned by \( L \) after the \( i \)-th iteration of the for loop in \cref{alg:ComputeQSequentially}.
(Define \( r_0 = 0 \).)
\( r_i \) can be computed in \( \FCL \) by \cref{lem:LinAlg}.

For any \( k \), the \( k \)-th column of \( Q \) can be found as follows.
When \cref{line:AppendAVectorToQ} of \cref{alg:ComputeQSequentially} found the \( k \)-th column of \( Q \), its value of \( i \) was the largest \( i \) such that \( r_{ i-1 } < k \), and \( j \) was \( k-r_{ i-1 } \). 
Compute \( i \) in \( \FCL \) by computing each \( r_i \) in turn.
Then the \( k \)-th column of \( Q \) is \( A^{ k-r_{ i-1 } } e_i \).

Finally, we can prove \cref{thm:mat-vec-prod}.
\begin{proof}[Proof of \cref{thm:mat-vec-prod}]
    Given a matrix \( A \) as input, our algorithm's goal is to multiply a vector by \( A \) in-place.

    Let \( Q \) be the matrix described in \cref{sec:InPlaceMatrixQ}, so that \( Q^{ -1 } A Q \) is almost upper-triangular.
    As explained in \cref{sec:InPlaceMatrixComputeQ}, \( Q \) can be computed in \( \FCL \).
    
    Since \( Q^{ -1 } A Q \) is almost upper-triangular, \cref{alg:AlmostUpperInPlace} can multiply a vector by \( Q^{ -1 } A Q \) in-place, and so, using the basis-changing simulation described in \cref{sec:InPlaceMatrixBasisChange}, multiplication by \( A \) can be performed in-place.
\end{proof}

\subsection{Applications}\label{sec:InPlaceMatrixApplications}

We can now extend \cref{thm:mat-vec-prod} in two ways. First,
by decomposing a matrix into $n$ vectors, we clearly can do matrix-matrix
product by $n$ applications of \cref{thm:mat-vec-prod} directly.

\matmatprod*
\begin{proof}
    To compute $B \mapsto AB$, simply apply the algorithm from \cref{thm:mat-vec-prod} to each column of $B$ one-at-a-time. To compute $B \mapsto BA$, we first replace $B$ with $B^T$, use the previous algorithm to replace $B^T$ with $A^TB^T$, and then take the transpose to obtain $(A^T B^T)^T = BA$. (Note that matrix transpose is in $\inplaceFL$ using the swapping algorithm from \cref{lem:swap}.
\end{proof}

A more intriguing extension is to invert a matrix \( A \) in-place.
We show how to do this in \( \inplaceFCL \).
Our algorithm requires the catalytic tape to contain an invertible matrix \( B \) the same size as \( A \). This requires two preprocessing steps which we defer to the appendices.

First, it relies on the notion of a \emph{field-catalytic subroutine}, by which we mean a uniform family of Turing machines where the catalytic tape stores field elements rather than bits.
This is nontrivial because of the possibility that the catalytic tape may start with invalid representations of field elements;
we postpone discussion of this issue (and the definition of \emph{field-catalytic subroutine}) to \cref{sec:invalid}.

Now assuming that we can work over field elements on the catalytic tape, the second step is to process the tape to ensure it has an invertible matrix on it.
We prove the following lemma in \cref{app:matrix-compress}; the proof involves compressing non-invertible matrices and contains techniques that may be of independent interest.
\begin{lemma}[Putting an invertible matrix on the catalytic tape]\label{lem:invertible-mat-total}
    There exist field-catalytic subroutines \( C,D \) which perform as follows for any field \( K \) representable in \( b = O(\log n) \) bits.
    Let \( t = \lceil \frac{ \log n }{ \log |K| } \rceil + 1 \),
    \( C \) transforms a vector \( \tau \in K^{ t n^2 } \) on its catalytic tape in-place into a vector \( C( \tau ) \in K^{ t n^2  } \) whose first \( n^2 \) coordinates are the entries of an \( n \times n \) invertible matrix, and additionally produces an output \( \mathop{\mathsf{key}}( \tau ) \in \zo^{ O( \log n ) } \).
    \( D \), when supplied with the string \( \mathop{\mathsf{key}}( \tau ) \), transforms the catalytic tape from \( C( \tau ) \) back to \( \tau \) in-place.
\end{lemma}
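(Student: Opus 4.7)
The plan is to split the catalytic tape $\tau \in K^{tn^2}$ into $t$ consecutive blocks $M_1, \dotsc, M_t \in K^{n \times n}$ and handle two subcases distinguished by a flag bit at the start of $\mathop{\mathsf{key}}(\tau)$. First, if any $M_i$ is invertible, I would locate the least such index $i^*$ by computing each block's rank via the $\FCL$ subroutine of \cref{lem:LinAlg} and swap $M_1 \leftrightarrow M_{i^*}$ in place using \cref{lem:swap}; the key records $i^*$ in $O(\log t) = O(\log \log n)$ bits, and $D$ redoes the swap. This handles the ``generic'' case, where the expected fraction of tuples with every block singular is $\prod_{k\ge 1}(1 - |K|^{-k})^{-t} \cdot p_n^t \le n^{-\Omega(1)}$ for our choice of $t$.

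In the complementary case where every $M_i$ is singular, I would algebraically modify $M_1$ alone using the extension field $K^t$. Fix a degree-$t$ irreducible polynomial $p(X) \in K[X]$ via \cref{lem:RepresentableFields} and let $J \in K^{n \times n}$ be a block-diagonal matrix whose first $t \times t$ block is the companion matrix $C_p$, chosen so that $I, J, \dotsc, J^{t-1}$ are $K$-linearly independent. Define $\Phi\colon K^t \to K^{n \times n}$ by $\Phi(\alpha_0, \dotsc, \alpha_{t-1}) = \sum_k \alpha_k J^k$ and consider $P(\alpha) = \det_K(M_1 + \Phi(\alpha))$. The univariate restriction $P(\alpha_0, 0, \dotsc, 0)$ is monic of degree $n$ in $\alpha_0$, so $P$ is a nonzero multivariate polynomial, and the choice $t = \lceil \log_{|K|} n \rceil + 1$ makes $|K^t| > n$. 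The subroutine $C$ enumerates $\alpha \in K^t$ in logspace, invokes the $\FCL$ rank test to find the first $\alpha^*$ with $M_1 + \Phi(\alpha^*)$ invertible, updates the first block in place to $M_1 + \Phi(\alpha^*)$ by iterating through the $n^2$ entries (each entry of $\Phi(\alpha^*)$ is computable in $O(\log n)$ space from $\alpha^*$ and the fixed $J$), and stores $(1,\alpha^*)$ in the remaining $t \lceil\log_2 |K|\rceil = O(\log n)$ key bits; $D$ subtracts $\Phi(\alpha^*)$ entry-by-entry to recover $M_1$.

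The chief obstacle is guaranteeing existence of $\alpha^* \in K^t$ with $P(\alpha^*) \ne 0$ when $|K|$ is as small as $2$: the naive Schwartz-Zippel bound of $n |K|^{t-1}$ on the roots of $P$ in $K^t$ need not be strictly less than $|K|^t$. The intended resolution, and the technical crux, is to identify $\Phi(K^t) \subseteq K^{n \times n}$ with the subring $K[J]$ generated by $J$, which by the choice of $J$ is a copy of the field $K^t$; under this identification $P(\alpha)$ factors as $N_{K^t/K}(Q(\alpha))$, the $K^t/K$-norm of a single univariate polynomial $Q \in K^t[\alpha]$ of degree at most $n$ with leading coefficient $1$. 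Since such a $Q$ has at most $n$ roots in $K^t$ and $|K^t| > n$, a non-root $\alpha^*$ exists, and by multiplicativity of the norm is also a non-root of $P$. Carrying out this identification carefully, including the choice of the lower $(n-t) \times (n-t)$ block of $J$ so that $\Phi$ remains injective and the factorization through the norm is well-defined, is the step I expect to consume the bulk of the proof.
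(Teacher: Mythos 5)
Your first case (search the $t$ blocks for an invertible one, swap it to the front, record the index in the key) is fine and matches part of the paper's argument. The proposal breaks down in the complementary case, and the failure is not just in the technical step you flagged: the central counting claim is false. You assert that $P(\alpha)=\det_K\bigl(M_1+\Phi(\alpha)\bigr)$ factors as $N_{K^t/K}(Q(\alpha))$ for a univariate $Q$ of degree at most $n$, hence has at most $n<|K|^t$ roots. That factorization is only valid when $M_1$ commutes with $J$ (i.e.\ is $K[J]$-linear), which an adversarially chosen tape block need not be. Already for $K=\GF(2)$, $n=t=2$, $M_1=\mathrm{diag}(1,0)$ and $J$ the companion matrix of $X^2+X+1$, three of the four shifts $M_1+\Phi(\alpha)$ are singular, exceeding your bound of $n=2$. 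Worse, existence itself can fail: take $K=\GF(2)$, $t\mid n$, identify $K^n\cong L^{n/t}$ with $L=\GF(2^t)=K[J]$ acting coordinatewise, let $F:L\to L$ be the ($K$-linear) Frobenius $x\mapsto x^2$, fix $\ell_0\in L\setminus\{0\}$, and let $M_1$ act as $F$ on the first $L$-coordinate, as $F+\ell_0$ on the second, and as the identity elsewhere. Since $F+c$ is singular exactly when $c\neq 0$ (solve $F(v)=cv$ by $v=c$), block one of $M_1+\Phi(\alpha)$ is singular whenever $\alpha\neq 0$ and block two whenever $\alpha\neq\ell_0$; as these cannot both fail, $M_1+\Phi(\alpha)$ is singular for \emph{every} $\alpha\in K^t$, while $M_1$ itself is singular. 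So your algorithm's enumeration of $\alpha$ never terminates successfully on this tape, and no choice of the perturbation family $K[J]$ of this form can repair it. (A secondary issue: when $t\nmid n$ there is no block-diagonal $J$ with $p(J)=0$ at all, since no $(n-t)\times(n-t)$ block with $n-t\not\equiv 0 \pmod t$ can have minimal polynomial dividing an irreducible of degree $t$, so $K[J]$ is then not even a field and the norm picture disappears.)

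The paper avoids perturbing a single block algebraically. Instead it runs a compression loop: if none of the trailing blocks is invertible, \cref{lem:compress-non-invert-other} compresses the collection of $t-1$ singular matrices in-place to free one tape cell (set to $0$), that zero is swapped into the first $n^2$ cells, and the process repeats; either an invertible block eventually appears (your first case) or all $n^2$ leading cells become zero and the identity matrix is written there, with the key recording only the iteration count and swap index so that decompression can be undone. The paper's alternative \cref{lem:compress-non-invert-other-2} is the closest in spirit to what you attempt -- it ``fixes'' a singular matrix -- but it does so by changing a constant number of individual entries (which is always possible, since each dependent column can be made independent by altering one coordinate) and recording those positions and old values in the key, rather than by adding an element of a commutative subfield, whose shifts can, as above, all be singular. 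If you want to salvage a perturbation-style argument, you need a perturbation family with a worst-case covering guarantee for arbitrary singular matrices, not a generic-case or degree-counting one.
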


We are now ready to give our in-place inversion algorithm:
\inplaceinvert*
\begin{proof}
    Below, we describe an algorithm that stores field elements rather than bits on the catalytic tape.
    \Cref{sec:invalid} makes the meaning of this precise: in particular, \cref{lem:SimulateFieldTape} shows that the below algorithm can be converted to work with a catalytic tape that stores bits rather than field elements.

    We first apply \cref{lem:invertible-mat-total} to obtain
    an invertible matrix \(B \in K^{n \times n}\) on the catalytic tape.

    By \cref{cor:mat-mat-prod}, the following
    process can be performed in $\FCL$:
    \begin{enumerate}
        \item replace $\langle A,B \rangle$ with $\langle AB,B \rangle$
        \item replace $\langle AB,B \rangle$ with $\langle AB,B(AB)^{-1} \rangle = \langle AB,A^{-1} \rangle$
        \item replace $\langle AB,A^{-1} \rangle$ with $\langle A^{-1}AB,A^{-1} \rangle = \langle B,A^{-1} \rangle$
        \item swap $B$ and $A^{-1}$ using \cref{lem:swap} to obtain $\langle A^{-1},B \rangle$
    \end{enumerate}
    Note that the second step uses the fact that for any given matrix $C$---in our case
    $C = AB$ is written in memory---we can compute any entry of $C^{-1}$ in $\FCL$ by
    \cref{lem:LinAlg}.

    Since the matrix \( B \) was restored to the catalytic tape, \cref{lem:invertible-mat-total} allows us to then restore the catalytic tape to its original state.
\end{proof}

\section{A relativization barrier to \ts{$\CL \subseteq \P$}{CL ⊆ P}}\label{sec:CLinP}

In our final section, we give an oracle $O$ such that $\CL^O =\EXP^O$.
Our main tool will be ideas from hypercube routing, similar to those used in \cref{sec: fp_in_place} (and see the discussion before \cref{lem: valiant}). We construct an oracle that provides useful
in-place transformations for our catalytic tape while simultaneously
not revealing too much information to an exponential-time machine.

\begin{theorem}\label{thm:exptime-oracle}
    There exists an oracle $O$ such that $\CL^O = \EXP^O$.
\end{theorem}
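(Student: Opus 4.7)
The inclusion $\CL^O \subseteq \EXP^O$ holds for every oracle $O$ in our model, since a $\CL^O$ machine on input $x$ must produce the same output for every initial catalytic tape, so an $\EXP^O$ simulator can simply fix an arbitrary tape and run through the machine's $2^{n^{O(1)}}$ configurations. The task is thus to construct $O$ with $\EXP^O \subseteq \CL^O$, generalizing the compress-or-random oracle of Buhrman, Cleve, Kouck\'y, Loff, and Speelman~\cite{buhrman2014computing}. Their Kolmogorov-complexity-based compression breaks against $\EXP$: an exponential-time adversary can exploit the decompression oracle to produce arbitrarily high-complexity strings and thereby manufacture passwords. We therefore replace compression by an oracle-directed reversible walk on catalytic configurations whose orbits are guaranteed to visit passwords, but whose passwords remain hidden from every exponential-time oracle algorithm.

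The oracle $O$ will have two modes. On a query identified as a \emph{password}, $O$ returns a bit of the truth table of a fixed $\EXP$-complete language, parameterized by data encoded in the password. On any other query, $O$ returns a small reversible in-place edit of the queried region, jointly defining a global permutation $\pi$ of catalytic configurations with the property that every orbit passes through at least one password. The $\CL^O$ simulation of an $\EXP^O$ computation then proceeds in three phases: (i) iterate $\pi$ from the initial catalytic tape $\tau$ until a password is encountered; (ii) use that password via follow-up $O$-queries to extract the desired $\EXP$ answer and print it to the output; and (iii) iterate $\pi^{-1}$ the same number of steps back to $\tau$. To fit within the $O(\log n)$ workspace budget we design $\pi$ so that its orbits have length $n^{O(1)}$, or equivalently, so that $\pi$ acts only on an $n^{O(1)}$-bit subregion of the tape; then a logspace counter of applications suffices both to locate a password and to know when to stop reversing.

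The heart of the argument, and the main obstacle, is simultaneously guaranteeing (i) that $\pi$ is a well-defined permutation whose orbits cover the password set, and (ii) that no $2^{n^{O(1)}}$-time oracle algorithm can locate a password by adaptive querying. We formalize this as a two-player game between a \emph{cycle-finder}---a dovetailing enumeration of all $\EXP^O$ oracle algorithms, which adaptively queries $O$---and a \emph{cycle-hider}, who lazily commits to values of $\pi$ consistent with the query history. Borrowing from the Valiant-style random hypercube routing used in \Cref{sec: fp_in_place} for \Cref{thm:FPInPlace}, the cycle-hider chooses $\pi$ as the deterministic routing schedule of each configuration toward a uniformly random target under a random change of basis; this yields short orbits that always meet a prescribed password set, while a probabilistic congestion argument shows the cycle-finder's at most $2^{n^{O(1)}}$ adaptive queries coincide with passwords with vanishing probability. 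A standard diagonalization over all exponential-time oracle algorithms stitches these strategies into a single oracle $O$ that defeats every $\EXP$ machine yet admits the $\CL$ simulation, giving $\EXP^O \subseteq \CL^O$ and completing the proof.
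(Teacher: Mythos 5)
Your high-level architecture is the same as the paper's (a successor oracle defining a reversible walk on catalytic configurations, passwords hidden by a lazily-committed cycle game against a dovetailing enumeration, and a Valiant-style routing strategy for the hider), but two of your concrete steps would fail. First, your password oracle returns ``a bit of the truth table of a fixed $\EXP$-complete language.'' That can only give $\EXP \subseteq \CL^O$, whereas the theorem needs $\EXP^O \subseteq \CL^O$: the machines to be simulated query $O$ itself, possibly on exponentially long queries, so the password answers must refer to computations relative to the very oracle being constructed. The device that breaks this circularity is the essential missing piece: in the paper the password oracle takes a tuple $(v,M,t,x)$ and returns the state of $M^O(x)$ at time $t$ only when $M$ has already been simulated for $t$ steps in the dovetailing enumeration before the relevant game closes, and every premature password query is hard-wired to $\bot$ (the bad set $B$), which is exactly what makes the enumeration's simulated runs agree with the real runs against the finished oracle.

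Second, your workspace accounting rests on orbits of length $n^{O(1)}$, which you claim is ``equivalent'' to $\pi$ acting on an $n^{O(1)}$-bit subregion; it is not (a permutation of a $\poly(n)$-bit region can have orbits of length $2^{\poly(n)}$), and, more importantly, polynomially short orbits are incompatible with hiding. Some machine in the enumeration simply traces the successor chain from a start configuration; after $n^{O(1)}$ queries it has either closed the whole cycle---so no unqueried password vertex remains on it and the $\CL$ simulation from that initial tape has nothing left to find---or forced the hider to extend the orbit beyond polynomial length. Equivalently, if every poly-length orbit contains a password, passwords have density $1/\poly(n)$ in the acted-upon region, contradicting your claim that $2^{n^{O(1)}}$ adaptive queries hit them with vanishing probability. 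The paper's cycles are exponentially long, and its $\CL$ algorithm needs neither a step counter (counting up to $2^{\poly(n)}$ takes $\poly(n)$ bits, not $O(\log n)$) nor an inverse walk: it walks forward until the $O(\log d)$-bit auxiliary part of the configuration returns to all-zeros, and correctness uses the game's guarantee that each start vertex lies on a cycle containing no other start vertex and at least one unqueried password, so the first such return is the original tape. Your opening observation that $\CL^O \subseteq \EXP^O$ by direct simulation on a fixed catalytic tape is fine.
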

\begin{proof}
    We will define, for every $d \in \N$, a pair of oracles $S^{(d)}$ (the ``successor'' oracle) and $P^{(d)}$ (the ``password'' oracle), and we will let $O$ be the infinite
    union of all of these oracles.
    
    First, our successor oracles are
    \[ S^{(d)}: \zo^{d} \times \zo^{100 \log_2(d)} \rightarrow ([d] \times \zo^{100 \log_2(d)}) \cup \{ \bot \}\]
    One should imagine $S^{(d)}$ as implicitly defining a directed graph $G^{(d)}$ on $\zo^{d} \times \zo^{100 \log_2(d)}$ with outdegree 1. $S^{(d)}(x, x') = \bot$ indicates that $(x, x')$ has a directed edge to itself, while $S^{(d)}(x, x') = (i, y)$ means that $G^{(d)}$ has a directed edge from $(x, x')$ to $(x+e_i, y)$ where $e_i$ is the standard basis vector $(0, \dots, 0, 1, 0, \dots, 0)$.

    Now given the graph $G^{(d)}$, we can define the password oracles
    \[ P^{(d)}: \zo^{d} \times \zo^{100 \log_2(d)} \times \zo^* \times \zo^* \times \zo^* \rightarrow \{ 0, 1, \bot\} \]
    The password oracle takes as input the name of a vertex $v \in \zo^{d} \times \zo^{100 \log_2(d)}$, the description of an oracle Turing machine $M$, an input $x$, and a number of timesteps $t$.
    When $v$ is a ``valid password'', and $t$ is ``sufficiently small'', the password oracle simulates $M$'s computation with an $O$ oracle on input $x$ for $t$ timesteps (our construction will ensure that this definition is not circular) and, if $M$ halted, outputs its answer. If $v$ is not a valid password, or $M$ does not halt after $t$ steps, or if $t$ is too large, $P^{(d)}$ returns $\bot$.

    To understand the role these oracles play, we define the following game
    over the graph $G^{(d)}$:

    \begin{definition}[Cycle-hiding game]\label{def:routing-game}
        Let $Q^n$ denote the hypercube on $n$ vertices and $K_n$ denote the clique on $n$ vertices. The \emphdef{cycle-hiding game} is played by two players on graph
        $G^{(d)} \subseteq Q^d \times K_{d^{100}}$, with the subset
        $S = \zo^d \times \{0^{100 \log d}\}$ designated as \emphdef{start vertices}.
        In each round of a game, the first player, whom we call the \emphdef{cycle hunter},
        selects a previously unselected vertex, and the second player, whom we
        call the \emphdef{cycle hider}, must respond by either 1) naming one of that
        vertex's neighbors as its ``successor'', or 2) declaring that it has no successor.
        After $R$ rounds of this game, the cycle hunter wins if either 
        \begin{enumerate}
            \item the cycle-hider has revealed a cycle in the successor relations, or
            \item there exists no way for the cycle-hider to define successor relations for the remaining vertices while ensuring that every start vertex belongs to a cycle containing no other start vertices. 
        \end{enumerate}
    \end{definition}
    
    We explain the intuition behind \cref{def:routing-game}.
    The start vertices correspond to the starting configurations
    $s_\tau = \langle \tau, 0^{100 \log d} \rangle$ of the catalytic machine.
    The successor oracle $S^{(d)}$ will define the cycle-hider's answers, which
    we restrict to changing at most one bit on the catalytic tape, plus the entire
    work tape, in each step, giving $1 + 100 \log d$ changes in total.
    If the cycle-hider always answers in a way
    that is consistent with each start vertex lying on its own cycle,
    this means that each $s_\tau$ lies on its own cycle defined by $S^{(d)}$,
    and thus applying $S^{(d)}$ enough times in succession,
    we will eventually reach $s_\tau$ and thus reset the catalytic tape.
    Furthermore, while traversing this cycle we want $s_\tau$ to eventually
    reach a password that is accepted by $P^{(d)}$, but which is not discovered
    by any $\EXP$ machine; hence the goal of the cycle-hider
    is to never actually reveal a complete cycle, giving us room to hide
    at least one password per starting tape $\tau$.

    \begin{lemma}\label{lem:routing-game}
        For sufficiently large $d$, if the number of rounds is less than $2^{d/100}$, then the cycle-hider has a winning strategy in the cycle-hiding game on $G^{(d)}$.
    \end{lemma}

    We now go through the formal definition of the oracle $O$
    assuming \cref{lem:routing-game}, which we prove at the end of this section.
    We will imagine an infinite number of instances of the cycle-hiding game of \cref{lem:routing-game} being played simultaneously, one for each $d \in \N$. The cycle-hunter's strategy will be determined by an enumeration of all oracle Turing machines, and the cycle-hider's responses, as given by the winning strategy guaranteed in \cref{lem:routing-game}, will let us iteratively construct $S^{(d)}$ and $P^{(d)}$.

    We keep an infinite number of ongoing sets $B^{(d)}$ (for bad queries), one for each $d \in \N$. We consider a dovetailing enumeration of all oracle Turing machines: first, simulate the lexicographically first oracle Turing machine for one step, then the lexicographically first two oracle Turing machines for two steps each, and so on. (Here, we consider the input as part of the machine's description.)
    
    Every time a machine makes an oracle query, it makes a query $q$ to either $S^{(d)}$ or $P^{(d)}$ for some $d \in \N$. Say first that the Turing machine queries $S^{(d)}$. If game $d$ is over, we continue the simulation of that Turing machine assuming its oracle call returned $S^{(d)}(q)$ (we will see soon that after game $d$ is over, $S^{(d)}$ is completely fixed). If game $d$ is ongoing and the Turing machine makes an oracle query to $S^{(d)}$ we consider this as a move in the game corresponding to the $d$, and the cycle-hider responds accordingly, either answering using an already computed value of $S^{(d)}$ or fixing $S^{(d)}$'s response on that query.
    
    Now consider the case when the Turing machine makes a query $q$ to $P^{(d)}$. If game $d$ is not over, we add $q$ to $B$ and continue the simulation of the Turing machine assuming the oracle query returned $\bot$. If game $d$ is over, we assume the oracle call returns $P^{(d)}(v, M, t, x)$.
    
    Once $2^{d/100}$ distinct queries have been made in a particular game $d$, we consider the game as over. Since the cycle-hider followed a winning strategy, there must now exist some way to complete the definition of $S^{(d)}$ and ensure that every start vertex $v \in \zo^d \times 0^{100 \log_2(d)}$ is on a unique cycle, while also ensuring that each cycle has a vertex that has not been queried thus far.
    We fix $S^{(d)}$'s description to be such a completion. Note that for every $d \in \N$, $S^{(d)}$ will eventually be fixed since for every input to $S^{(d)}$, there is a Turing machine which eventually queries it, so game $d$ must eventually end.

    Once game $d$ concludes, we also fix the oracle $P^{(d)}$. We first set $P(q) = \bot$ for all $q \in B$. This is what allows our definition of $O$ to be non-cyclic. We now define $P(q)$ for $q \notin B$.
    $P^{(d)}$'s ``passwords'' will consist of all vertices in the game $d$ that have not yet been queried by any machine in the enumeration by the time game $d$ ends.
    (Observe that, since they haven't been queried yet, we are free to fix $P$'s oracle responses arbitrarily on these vertices without impacting any of the previous gameplay.)
    If $v$ a valid password, and $M$ on input $x$ has already been simulated for at least $t$ timesteps in the dovetailing enumeration by the time game $d$ ends, then $P^{(d)}(v, M, t, x)$ will output the state of machine $M$ at time $t$ (i.e. \texttt{Accepting}, \texttt{Rejecting}, or \texttt{Not Yet Halted}). Otherwise, $P^{(d)}(v, M, t, x)$ will output $\bot$. 
 
    This completes the description of the oracle $O$; it now remains to show that $\CL^O = \EXP^O$. Note that since the proof that $\CL \subseteq \ZPP \subseteq \EXP$ is relativizing,
    we have $\CL^O \subseteq \EXP^O$ (under the standard oracle definition \cref{def: standard_oracle_def}). It therefore suffices to show that, for any $c$, and any $\TIME[2^{n^c}]^O$ machine $M$, there exists a $\CL^O$ algorithm deciding the same language $L(M)$.
    Since a $\TIME[2^{n^c}]^O$ machine is also a $\TIME[2^{n^{c'}}]^O$ machine for any $c' > c$, we can assume without loss of generality that $c > |M|$, the description length of the machine (not including the length of its input).

    Let $d = 300 n^c$. Our catalytic algorithm will use $d + 100 \log (d)$ catalytic space and $O(\log d)$ work space on inputs of length $n$. The algorithm works as follows. In the following, $c$ denotes the catalytic space and $w$ denotes the first $100 \log(d)$ bits of workspace.
    \begin{algorithm}
        \caption{$\CL^O$ algorithm for $L(M)$}\label{alg: relativized_cl_algo}
        $a \gets 0$\;
        \Do{$w \neq 0^{100 \log(d)}$}
        {
            $(i, y') \leftarrow S^{(d)}(c, w)$\;
            $c \leftarrow c+1_i$\;
            $w \leftarrow y'$\;
            \If{$P^{(d)}(c, w, M, 2^{n^c}, x) \neq \bot$}
            {
                $a \leftarrow P^{(d)}(c, w, M, 2^{n^c}, x)$\;
            }
        }
        \Return $a$\;
    \end{algorithm}

    Observe that our \cref{alg: relativized_cl_algo} is implicitly taking a walk on the implicitly defined graph $G^{(d)}$ and our algorithm terminates when the walk arrives at any start vertex $s \in \zo^d \times 0^{100 \log d}$.
    
    We first show that \cref{alg: relativized_cl_algo} is in fact a catalytic algorithm. \cref{alg: relativized_cl_algo} clearly only requires $O(\log d) = O(\log n)$ auxiliary work space since $i$ and $y$ are $O(\log d)$ size values. Furthermore, all operations can be done using only $O(\log d)$ extra work space. We now show that \cref{alg: relativized_cl_algo} terminates and returns the catalytic tape to its original position when it terminates. By construction of $S^{(d)}$, and therefore $G^{(d)}$, any walk starting from $(c, 0^{100 \log(d)})$ will eventually reach $(c, 0^{100 \log(d)})$ again. Therefore, our algorithm eventually terminates. Similarly, the fact that it returns the catalytic tape to its original position follows from the fact that no walk starting from $(c, 0^{100 \log(d)})$ ever reach a start vertex $s \in \zo^d \times 0^{100 \log d}$ other than $(c, 0^{100 \log(d)})$.

    We now show that \cref{alg: relativized_cl_algo} outputs $L(M)$. Observe that, by our construction of $S^{(d)}$ and therefore $G^{(d)}$, $(c, w)$ is on a cycle containing at least one password vertex. This follows from the fact that all vertices not queried by the cycle hunter in the game $d$ were deemed valid passwords and the cycle hunter did not find a cycle. Therefore, our catalytic algorithm is guaranteed to find a password. In other words, there will exist a point in \cref{alg: relativized_cl_algo} where $P^{(d)}(c, w, M, 2^{n^c}, x)$ such that $c, w$ is a valid password.

    All that remains to be shown is that when $c, w$ is a valid password, $P^{(d)}(c, w, M, 2^{n^c}, x) \neq \bot$. It is sufficient to show that when game $300 n^{c}$ is declared over, $M$ has already been simulated for at least $2^{n^c}$ steps.
    Observe that, when $M$'s $2^{n^c}$th step is simulated, the total number of oracle queries made thus far in the enumeration can be at most $\max\paren{2^{n^c}, 2^{|M| + |x|}}^2$ since the dovetailing enumeration run for at most that many iterations.
    Since $c > |M|$, we have $n^c > |M| + n$, so the number of oracle queries made is at most $\paren{2^{n^{c}}}^2 = 2^{2n^{c}}$. 
    The game doesn't conclude until $2^{300 n^{c}/100} = 2^{3n^c}$ queries have been made, so $t$ is a ``sufficiently small'' time bound for oracle $P^{(300n^c)}$ to answer the query $(v, M, t, x)$, meaning that our $\CL^O$ algorithm can successfully simulate $M$ on $x$.
\end{proof}

Lastly, we prove \cref{lem:routing-game}, giving a strategy for the cycle-hider
on $G^{(d)}$. To do so, we utilize some ideas from the network routing literature.
\begin{definition}
    We define the bit-fixing path from $x \in \zo^{d}$ to $y \in \zo^{d}$ as the path $x_1 x_2 x_3 \dots x_{d} \rightarrow y_1 x_2 x_3 \dots x_{d} \rightarrow y_1 y_2 x_3 \dots x_{d} \rightarrow \dots \rightarrow y_1 y_2 y_3 \dots y_{d}$.
\end{definition}

The following fact, \cref{lem: valiant}, is a classic result of Valiant.  
In Valiant's setup, we have a network which is a $d$ dimensional hypercube where the vertices are people/servers and connections are edges. Each vertex $v$ wishes to transmit a packet from itself to a different vertex $f(v)$ using connections in the hypercube. The goal is to find a routing scheme where packets are delivered from $v$ to $f(v)$ for all $v \in \zo^d$ but at any given point in time, congestion --- the maximum number of packets at any given vertex --- is minimized. Valiant's insight is that routing each packet to a random vertex (via the bit-fixing path) before routing it to its destination (via the bit-fixing path) yields low congestion, only $O(d)$. As this scheme takes $O(d)$ time steps, we are furthermore guaranteed that at most $O(d^2)$ packets pass through any given vertex across all timesteps.

\begin{lemma}[\cite{ValiantBoolDifficult}]
    \label{lem: valiant}
    Let $f: \zo^d \rightarrow \zo^d$ be a permutation and $N(a, b, c)$ be the multi-set of all vertices occurring on the bit-fixing path from $a \in \zo^d$ to $b \in \zo^d$ and the bit-fixing path from $b$ to $c \in \zo^d$. Let $r: \zo^d \rightarrow \zo^d$ be drawn uniformly from all functions from $\zo^d$ to $\zo^d$. Let us now define $S$ to be the following multi-set.
    \[ S = \bigcup_{v \in \zo^d} N(v, r(v), f(v)) \]
    With $1-\negl(d)$ probability over the choice of $r$, no element occurs in $S$ more than $O(d^2)$ times.
\end{lemma}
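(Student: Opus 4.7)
The plan is to fix a single target vertex $w \in \zo^d$, show that its expected multiplicity in $S$ is $O(d)$, boost this to a high-probability concentration statement via independence of the random choices $r(v)$, and finally union bound over the $2^d$ possible $w$'s.

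For the expectation, fix $v,w \in \zo^d$ and analyze each leg of $N(v, r(v), f(v))$ separately. The bit-fixing path from $v$ to $r(v)$ traverses the vertices $(r(v)_1, \dots, r(v)_i, v_{i+1}, \dots, v_d)$ for $i = 0, \dots, d$; letting $s(v,w)$ denote the largest coordinate at which $v$ and $w$ disagree (with $s = 0$ if $v = w$), one checks that $w$ appears on this path iff $r(v)_j = w_j$ for all $j \leq s(v,w)$, which happens with probability $2^{-s(v,w)}$. Grouping the sum $\sum_v 2^{-s(v,w)}$ by the value of $s$ (noting that at most $2^{s-1}$ vertices $v$ satisfy $s(v,w) = s$) bounds it by $1 + d/2$.

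The second leg from $r(v)$ to $f(v)$ is symmetric: $w$ appears on it iff $r(v)_j = w_j$ for all $j \geq t(v,w)$, where $t(v,w)$ is the first coordinate of disagreement between $f(v)$ and $w$. The key use of the hypothesis that $f$ is a \emph{permutation} occurs here -- it lets us reparameterize $\sum_v 2^{-(d+1-t(v,w))}$ as a sum over $y = f(v)$ ranging over $\zo^d$, which again evaluates to $O(d)$. Thus the expected multiplicity of $w$ in $S$ is at most $d+2$.

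For concentration, let $X_v \in \{0,1,2\}$ count the contribution of vertex $v$ to the multiplicity of $w$ in $S$. Since $r$ assigns independent uniform images to distinct inputs, the variables $\{X_v\}_{v \in \zo^d}$ are mutually independent, so a multiplicative Chernoff bound for bounded-range independent summands with mean $\mu = O(d)$ yields
\[ \Pr\left[\sum_v X_v \geq Cd^2\right] \leq 2^{-\Omega(d^2 \log d)} \]
for any sufficiently large constant $C$. A union bound over the $2^d$ choices of $w$ still leaves failure probability $2^{-\Omega(d^2 \log d)} = \negl(d)$, which completes the proof. The main obstacle will be the per-vertex probability computation for the two legs and, in particular, arranging the indexing for the second leg so that the permutation hypothesis can be invoked cleanly; the concentration and union-bound steps are then entirely routine.
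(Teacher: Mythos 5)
Your proof is correct. Note that the paper itself does not prove \cref{lem: valiant} at all --- it is imported as a black box from Valiant's work on randomized routing --- so what you have written is a self-contained proof of the cited result, and it is the standard one: fix a target vertex $w$, compute the exact probability that $w$ lies on each leg of the path for a given source $v$ (your characterizations via the last disagreement $s(v,w)$ for the leg $v \to r(v)$ and the first disagreement $t(v,w)$ between $f(v)$ and $w$ for the leg $r(v) \to f(v)$ are both right, and the reparameterization $y = f(v)$ is exactly where injectivity of $f$ is needed), sum to get expected congestion $O(d)$, then exploit the mutual independence of the values $r(v)$ to apply a Chernoff bound and union bound over the $2^d$ targets. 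The only point worth flagging is the implicit reading of the multiset $N(v,r(v),f(v))$: you take each leg to contribute each of its vertices once, so $X_v \in \{0,1,2\}$; this is the natural interpretation (a bit-fixing path never revisits a vertex), and even under the more generous reading where degenerate steps are counted, $X_v$ is bounded by $O(d)$ and the same Chernoff computation still gives a $2^{-\Omega(d\log d)}$-type tail, which survives the union bound. Your quantitative conclusion is in fact stronger than needed --- the argument would already give $O(d)$ congestion per vertex with high probability, while the lemma only asks for $O(d^2)$.
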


This randomized routing trick is the key ingredient in defining a winning strategy for the cycle-hider.

\begin{proof}[Proof of \cref{lem:routing-game}]
     It suffices to show a \emph{randomized} strategy for the cycle-hider that wins with positive probability. We will play on a subgraph $G$ of $G^{(d)}$, which we view as the product of a slightly larger hypercube --- say, $Q^{d + 1000}$ --- and a much smaller clique --- say $d^{99}$ vertices. Note that this is a subgraph of $G^{(d)}$;
    to see this, consider the Boolean strings view of vertices in $G^{(d)}$ as
    $\langle \tau, v \rangle \in \{0,1\}^d \times \{0,1\}^{100 \log d}$, and note that
    $G$ corresponds to increasing the length of $\tau$ by 1000 and decreasing
    the length of $v$ by $\log d$. Furthermore, all start vertices in the previous graph are start vertices in the new graph.
    
    We will talk about this as a $(d+1000)$-dimensional hypercube with ``capacity'' $d^{99}$ at each vertex, since we have effectively a hypercube of this dimension with $d^{99}$ copies of each vertex. Formally, for each $x \in \zo^{d+1000}$, the cycle hider finds keeps track of a set $A_{x} \subseteq [d^{99}]$ which is initialized to $\{ 0^{99 \log d} \}$. The capacity of $x \in \zo^{d+1000}$ is $d^{99}-|A_x|$. Whenever we say that we choose $x$ to be the successor of some vertex $y$, what we actually mean is that we choose a vertex from $(x \times \zo^{99 \log d}) \setminus A_x$ to be the successor of $y$. This allows us to essentially only focus on playing the cycle-hiding game only over $\zo^{d+1000}$ as long as we ensure that we never route to any $x \in \zo^{d+1000}$ which has already exceeded its capacity. From this point on, we will only talk about paths and successors over $\zo^{d+1000}$ while being mindful of the capacity of each note.  We do this with the confidence that the reader can infer the translation to choosing paths and successor over $Q_d \times K_{d^{99}}$. We will also occasionally say that two paths collide at a vertex $v$ if those paths both include $v$.

    The first component of the cycle-hider's strategy will be, before the game begins, to select for each start vertex $s$ a random ``head'' from $Q_{d+1000}$ which we refer to as $h(s)$, and declare a chain of successors leading from $s$ to $h(s)$ along the bit-fixing path.
    Similarly, she'll select, for each start vertex $s$, a random ``tail'' $t(s)$ from $Q_{d+1000}$, and declare a chain of successors leading $t(s)$ to $s$ along the bit-fixing path. $h(s)$ and $t(s)$ may change later and at any point during the game, we refer $\{ h(s) : s \in \text{start vertices} \}$ as active heads and $\{ t(s) : s \in \text{start vertices} \}$ as active tails.
    
    The cycle hider now announces all of these choices unprompted to the cycle-hunter. We now analyze, for any vertex $v \in \zo^{d+1000}$, how much of its capacity is used up in this initial step. We will only consider the paths from stating vertices to their heads for now, we call this the forward path for a vertex. \cref{lem: valiant} tells us with all but negligible probability, for all $v$, only $O(d^2)$ of its capacity is used up in this initial setup of forward paths (choosing forward paths can be thought of as choosing a subset of vertices chosen in \cref{lem: valiant}). The exact same logic applies to the backward paths, those from start vertices to tails. The union bound then tells us that with all but negligible probability, the forward and backward paths use up $O(d^2)$ of the capacity for any given vertex.

    Now, whenever the cycle-hunter makes a query involving a vertex $v \in Q_{d+1000}$ (really it's a query from $Q_{d+1000} \times K_{d^{99}}$, but we only use the first $d+1000$ bits), the cycle-hider will do the following:
    \begin{itemize}
        \item If there are multiple active heads or tails within Hamming distance $d/10$ from $v$, forfeit the game.
        \item Otherwise, if there is one active head (resp. tail) within Hamming distance $d/10$ from $v$, choose a new random point from $Q_{d+1000}$, declare that to be the new active head (resp. tail) of the start vertex of the path from with $v$ originates, and reveal a chain of successors along the bit-fixing sequence to (resp. from) that vertex.
        If this path would pass through a vertex with less than $d^{50}$ capacity remaining, forfeit the game.
        
        Finally, if the queried vertex $v$ appeared along the newly-revealed chain of successors, the cycle-hider has already revealed a successor for the cycle-hunters query.
              Otherwise, respond to the query by announcing that the queried vertex has no successors.
              In this case, mark $v$ as ``ruined'', and consider it as having no remaining capacity.
    \end{itemize}

    We claim that, if the cycle-hider uses this strategy, she will with high probability never have to forfeit the game.
    Let us consider three ways in which the cycle-hunter may be forced to forfeit, and bound their probabilities separately.

    \textbf{Multiple heads and tails close to a query.} If the cycle-hunter can make a query capturing multiple heads and tails within distance $d/10$, then the cycle-hider will forfeit.
    However, note that the set of heads or tails that ever appear in a run of the game is a uniformly random size-$2^{d+1} + 2^{d/100}$ subset of $Q_{d+1000}$ since there are $2^d$ active heads and $2^d$ active tails during the initialization phase and only one new active head/tail is chosen at each of the $2^{d/100}$ rounds. We view this uniformly random size-$2^{d+1} + 2^{d/100}$ subset as a code over $\zo^{d+1000}$.
    With overwhelming probability, this random code will have distance at least $d/5$, meaning that there does not exist any pair close enough to be within distance $d/10$ of the same query.

    \textbf{A path collides with a ruined vertex.} Since we consider a vertex to have no remaining capacity once it has been queried, if any bit-fixing path would pass through one of these already-queried vertices, the cycle-hider must forfeit.
    However, note that we only need to consider collisions with ruined vertices at distance at least $d/10$ from active heads and tails: by the same argument as the previous point, we know with high probability no head or tail will ever be created within distance $d/10$ of an already-queried point,
    so the only way a ruined vertex can exist within distance $d/10$ of a head or tail is if that vertex was queried after the head or tail was already chosen. 
    But if it's queried after the head or tail is chosen, we immediately select a new head or tail, and don't mark the queried point as ruined until after routing away. All ruined vertices are therefore far all from active heads and tails with high probability.

    In order to bound the probability that a given bit-fixing path passes through any ruined vertex at distance at least $d/10$ of the start of the path, we can simply union bound.
    Since at most one vertex is ruined per round, there are at most $2^{d/100}$ ruined vertices in the graph.
    The only way the bit-fixing path could pass through a given ruined vertex is if all of the places where the start vertex disagrees with that ruined vertex, the end vertex agrees with it.
    We're choosing the end vertex randomly, so the probability of this occurring is at most $2^{-d/10}$.
    Thus, the probability that this bit-fixing path passes through any ruined vertex is at most $2^{d / 100} \cdot 2^{-d/10} \ll 2^{-d/100}$, so with high probability no such collision occurs at any round.

    \textbf{Too many paths collide at a non-ruined vertex.} The cycle-hider must also forfeit if any vertex $v \in Q_{d+1000}$ becomes involved in $d^{99} - d^{50}$ different bit-fixing paths.
    But even if she sent a random bit-fixing path from \emph{every} vertex of $Q_{d + 1000}$ to another random vertex, \cref{lem: valiant} that this will occur with extremely small probability.

    So, with high probability the cycle-hider will never have to forfeit.
    This process never reveals a cycle or merges two paths together, because there is always sufficient remaining capacity at each vertex in $Q_{d+1000}$.
    What remains is to guarantee that once this process is over, there is a way to connect each path's head to its tail without creating any collisions.

    To guarantee this, we take each head $h$, pick a random vertex $r$, declare a series of successors along the bit-fixing path from $h$ to $r$, and then declare a series of successors along the bit-fixing path from $r$ to $t$, the tail corresponding to $h$. \cref{lem: valiant} tells us that with high probability there will be fewer than $O(d^2)$ collisions at any vertex.
    Some of these routes may pass through ruined vertices --- but since any given one of these random routes has a smaller than $2^{-d/100}$ chance of passing through a ruined vertex, if we simply ignore the routes that pass through ruined vertices, we have still successfully paired off at least half of the head-tail pairs we needed to.
    So, if we repeat \cref{lem: valiant} $d$ times (each time among the unpaired head-tail pairs), we will eventually find non-ruined paths for every head-tail pair, while introducing at most $d \cdot O(d^2) \ll O(d^{50})$ collisions anywhere.
\end{proof}

\section*{Acknowledgements}
We thank Gil Cohen, Dean Doron, and Ryan Williams for helpful discussions. We thank Ryan Williams in particular for suggesting \cref{cor:rsa}.
\printbibliography

\appendix

\section{Handling invalid field elements on the catalytic tape}\label{sec:invalid}

All of our algorithms describe Turing machines with the tape alphabet \( \zo \), but sometimes we want to do computations over some finite field \( K \).
For the most part, this is easily resolved by requiring the field elements to have representations in \( \zo^b \) for some \( b \) (\cref{def:RepresentableField}) --- then we simply partition a section of tape into \( b \)-bit blocks, and think of each block as one field elements.

Algorithms that read and write field elements on the catalytic tape require more care.
For example, if \( K = \GF(3) \) and its elements are represented as \( 00, 01, 10 \in \zo^2 \), then it's not clear what to do if the catalytic tape starts with the bits \( 111111 \) and the algorithm attempts to, for example, add \( 1 \) to the first field element on the catalytic tape.

This is similar to a problem originally faced by Buhrman, Cleve, Kouck\'y, Loff, and Speelman~\cite{buhrman2014computing} when transforming their register program for \( \TC^1 \) circuits into a catalytic algorithm: the catalytic tape will not always start with valid representations of ring elements.
Our approach is similar to theirs: we ignore \( b \)-bit blocks of catalytic tape that contain invalid field elements.
If the tape starts with too many invalid field elements, we first apply a transformation to each \( b \)-bit block, similar to how Cook and Pyne~\cite{catgraph2025} handle this case.

\begin{definition}\label{def:FieldCatalytic}
A \emph{field-catalytic subroutine}\footnote{
    Despite the word \emph{catalytic}, a field-catalytic subroutine doesn't necessarily restore its catalytic tape. Rather, field-catalytic subroutines are used as algorithms which modify the catalytic tape in some desirable way which can later be reversed.
} is a family of Turing machines \( ( M_K ) \) indexed by representable fields \( K \).
Each machine \( M_K \) has \( K \) as the alphabet for its catalytic tape, and the family must be uniform, in the sense that there is a single universal Turing machine \( \mathcal{M} \) which, given a description of a representable field \( K \) (including Turing machines that compute its operations \( \mathsf{ADD}, \mathsf{MULTIPLY}, \mathsf{VALID} \)), outputs a description of the corresponding machine \( M_K \).
\end{definition}

\begin{lemma}\label{lem:SimulateFieldTape}
Any field-catalytic subroutine \( A \) can be simulated by a catalytic subroutine \( A' \), where if \( A \) uses \( \Poly(n) \) cells (field elements) on the catalytic tape and \( O( \log n ) \) bits of work space, then \( A' \) uses \( \Poly(n) \) catalytic space and \( O( \log n ) \) work space, as long as the field \( K \) is representable in \( b = O( \log n ) \) bits.
If \( A \) always restores its catalytic tape to its starting state (a vector in \( K^m \)), then \( A' \) also restores its catalytic tape to its starting state (a string in \( \zo^{ m' } \)).
\end{lemma}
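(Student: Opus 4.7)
The plan is to embed $A$'s field-catalytic tape $K^m$ inside a bit-catalytic tape of length $m' = \Poly(n)$ by interpreting disjoint $b$-bit blocks as candidate field elements, applying an in-place reversible preprocessing to guarantee that enough blocks become valid representations, and then running $A$ only on those valid blocks while leaving the invalid ones untouched.

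Concretely, I would set $m' = \lceil 2^b / |K| \rceil \cdot b \cdot m$, chop the catalytic tape into $M \vcentcolon= m'/b$ blocks of $b$ bits, and search for an XOR mask $p^* \in \zo^b$ such that at least $m$ of the blocks become valid field-element representations after replacing each block $\tau_i$ by $\tau_i \oplus p^*$. Such a $p^*$ exists by an averaging argument: as $p$ ranges over $\zo^b$, each $\tau_i \oplus p$ takes every value in $\zo^b$ exactly once, so the average (over $p$) of the number of valid blocks in $\tau \oplus p^M$ is exactly $M \cdot |K|/2^b \geq m$. Since $2^b = \Poly(n)$, $A'$ can try every candidate mask in logspace, pick the lexicographically smallest that yields at least $m$ valid blocks, XOR the tape in place with $p^*$ block by block, and record $p^*$ in its $O(\log n)$ workspace. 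Keeping $p^*$ on the work tape throughout the simulation is important, because $A'$ will need it to reverse the preprocessing at the end.

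Next, I would simulate $A$ step by step. Whenever $A$ addresses its $i$-th virtual $K$-cell, $A'$ scans across the tape and locates the $i$-th currently-valid block, returning its decoded field element on a read and overwriting with $r(k)$ on a write of $k \in K$. Since $A$ only writes valid field-element representations, the set of valid blocks is invariant under $A$'s writes, so the index-to-block mapping is stable throughout. Counting and locating the $i$-th valid block uses only the $\mathsf{VALID}$ algorithm from \cref{def:RepresentableField} plus an $O(\log n)$-bit counter, and together with $A$'s $O(\log n)$-bit work tape and the stored mask $p^*$ the total workspace remains $O(\log n)$. Field arithmetic inside $A$ is handled by the $\mathsf{ADD}, \mathsf{MULTIPLY}$ algorithms, again in $O(\log n)$ extra space.

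Finally, once $A$ halts, $A'$ undoes the preprocessing by XORing every block with $p^*$ a second time. If $A$ was a field-catalytic subroutine that restored its field-tape to its starting state, then each valid block is back to its post-preprocessing value and each invalid block is literally untouched (since $A$ never addressed them), so XORing again with $p^*$ recovers the original bit string $\tau$ exactly. The main obstacle is really the first step: one needs a deterministic, low-space preprocessing that always frees up $m$ valid cells no matter how adversarial the initial bit tape is. The XOR-freeness of $\zo^b$ makes the averaging argument work, and the only cost is a $\lceil 2^b/|K|\rceil$ blow-up in the length of the catalytic tape, which is $\Poly(n)$ exactly because $b = O(\log n)$.
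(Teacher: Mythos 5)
Your proposal is correct and follows essentially the same route as the paper: pad the tape so that an averaging argument over XOR masks (the paper isolates this as \cref{lem:ValidXor}) guarantees at least $m$ valid $b$-bit blocks, find the mask by exhaustive search in logspace and store it on the work tape, simulate $A$ by addressing the $i$-th valid block (noting validity is preserved by writes), and undo the mask at the end to restore the original bit string.
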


To prove this, we begin with a simple lemma that helps the algorithm \( A' \) prepare the catalytic tape with sufficiently many valid field elements.

\begin{lemma}\label{lem:ValidXor}
For any \( b \in \N \), nonempty \( S \subseteq \zo^b \), and sequence \( a_1, \dotsc, a_{ \ell } \in \zo^b \), there exists some \( x \in \zo^b \) such that \( a_i \oplus x \in S \) for at least \( \left\lceil \frac{ |S| }{ 2^b } \ell \right\rceil \) distinct indices \( i \).
(Here, \( a_i \oplus x \) denotes the \( b \)-bit string whose \( j \)-th bit is the sum of the \( j \)-th bits of \( a_i \) and \( x \) modulo \( 2 \).)
\end{lemma}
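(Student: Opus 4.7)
The plan is to use a simple averaging argument over a uniformly random choice of $x \in \zo^b$. For each fixed index $i$, note that if $x$ is drawn uniformly from $\zo^b$, then $a_i \oplus x$ is also uniformly distributed on $\zo^b$, so
\[
\Pr_{x}[a_i \oplus x \in S] \;=\; \frac{|S|}{2^b}.
\]
Let $N(x)$ denote the number of indices $i \in [\ell]$ for which $a_i \oplus x \in S$. By linearity of expectation,
\[
\mathbb{E}_{x}[N(x)] \;=\; \sum_{i=1}^{\ell} \Pr_x[a_i \oplus x \in S] \;=\; \frac{|S|}{2^b}\,\ell.
\]

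Since $N(x)$ is integer-valued for every $x$, the existence of some $x$ with $N(x) \geq \mathbb{E}_{x}[N(x)]$ immediately upgrades to $N(x) \geq \lceil \frac{|S|}{2^b}\ell \rceil$: any integer meeting or exceeding the expectation must in particular meet the ceiling of the expectation. This gives the desired $x$, completing the proof.

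There is no real obstacle here; the only thing to be careful about is the ceiling, which is handled automatically by the integrality of $N(x)$. The lemma will then be used later to guarantee that the algorithm $A'$ can find a small offset $x$ (stored in $O(\log n)$ bits of work space) which it can XOR into each $b$-bit block of the catalytic tape so that sufficiently many blocks become valid representations of field elements, at which point the field-catalytic computation can proceed on the valid blocks while the remaining invalid blocks are left untouched and later restored by XORing $x$ back in.
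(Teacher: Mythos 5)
Your proof is correct and is essentially the paper's argument: the paper phrases the same averaging step as a double-counting of the set $T=\{(i,x): a_i\oplus x\in S\}$ of size $\ell|S|$, whereas you phrase it probabilistically via linearity of expectation over a uniform $x$, with the identical final observation that integrality of the count upgrades the bound to the ceiling.
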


\begin{proof}
Let \( T = \{ (i, x) \in [ \ell ] \times \zo^b \mid a_i \oplus x \in S \} \).
For \( x \in \zo^b \), let \( T_x = \{ i \in [ \ell ] \mid (i, x) \in T \} \).
It suffices to show \( |T_x| \ge \lceil \frac{ |S| }{ 2^b } \rceil \) for some \( x \in \zo^b \).

For any fixed \( i \), the set \( \{ a_i \oplus x \mid x \in \zo^b \} \) exactly equals \( \zo^b \), and so it contains every element of \( S \).
It follows that \( |T| = \ell |S| \), and so since \( |T| = \sum_{ x \in 2^b } |T_x| \), there must be some \( x \in 2^b \) such that \( |T_x| \ge \frac{ \ell |S| }{ 2^b } \).
\end{proof}

\begin{proof}[Proof of \cref{lem:SimulateFieldTape}]
Let \( n \) be the length of the input, and let \( K \) be the field, representable in space \( b \).

Let \( m \) be the number of catalytic tape cells algorithm \( A \) uses.
Then \( A' \) will use \( \lceil \frac{ |K| }{ 2^b } \rceil m b \) bits of catalytic space, and operates as follows.

\( A' \) treats its catalytic tape as consisting of \( \lceil \frac{ 2^b m }{ |K| } \rceil \) blocks with \( b \) bits each.
By \cref{lem:ValidXor}, for any starting configuration of the tape, there must exist some \( x \in \zo^b \) such that after replacing each block's initial value \( \tau_i \) with \( \tau_i \oplus x \), at least \( m \) of the tap blocks start with valid encodings of field elements.
\( A' \) can find such an \( x \) by trying each possibility.
It then records the \( x \) it used on its work tape, and uses it to reverse the transformation at the end of the algorithm.

Then, \( A' \) simulates \( A \) by skipping over invalid \( b \)-bit blocks of catalytic tape.
That is, if \( A \) wishes to read from or write to the \( i \)-th tape cell, \( A' \) finds the \( i \)-th tape cell accepted by \( \mathsf{VALID} \) and applies the corresponding operation to that one.
Note that during this simulation, a block never changes from valid to invalid or vice-versa, so for any fixed \( i \), the \( i \)-th cell of the catalytic tape of \( A \) always corresponds to the same \( b \)-bit block on the catalytic tape of \( A' \), and if \( A \) restores its tape, then each tape block of \( A' \) is restored to its starting value \( \tau_i \oplus x \), which \( A' \) then transforms back to \( \tau_i \) at the end of the simulation.
\end{proof}

\section{Compressing non-invertible matrices}
\label{app:matrix-compress}

In this appendix we will state a number of lemmas, chiefly revolving
around a compress-or-random argument for non-invertible matrices,
that can be used to prove \cref{lem:invertible-mat-total}.

\subsection{Compressing non-invertible matrices}

Here we show how to compress non-invertible matrices in-place to save space.

\Cref{lem:compress-non-invert-exact} shows how to compress a single non-invertible matrix from \( n^2 b \) bits to \( n^2 b - 1 \) bits, but only if the field is exactly representable in \( b > 1 \) bits (and in particular, the field's size must be a power of two).

\begin{lemma}[Compressing one non-invertible matrix over certain fields]
\label{lem:compress-non-invert-exact}
    Let $K$ be a field exactly representable in \( b =O(\log n)\) bits, where $b>1$.
    There exist $\inplaceFL$ algorithms $\Comp: \{0,1\}^{n^2 b} \rightarrow \{0,1\}^{n^2 b - 1}$
    and $\Decomp: \{0,1\}^{n^2 b-1} \rightarrow \{0,1\}^{n^2 b}$ such that $\Decomp(\Comp(Q)) = Q$ for any non-invertible matrix $Q \in K^{ n \times n }$.
\end{lemma}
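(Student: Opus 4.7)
The plan is to compress $Q$ via its first columnar linear dependence, combined with a tight arithmetic encoding that exploits the counting bound on non-invertible matrices. Since $Q$ is non-invertible, let $j^{\star} \in [n]$ be the smallest index such that column $Q_{:, j^{\star}}$ lies in the span of $Q_{:, 1}, \dotsc, Q_{:, j^{\star} - 1}$, and let $c \in K^{j^{\star} - 1}$ be the unique coefficient vector with $Q_{:, j^{\star}} = \sum_{i < j^{\star}} c_i Q_{:, i}$. The triple $(j^{\star}, c, Q_{:, [n] \setminus \{j^{\star}\}})$ uniquely determines $Q$, so the plan is to write this triple to the output tape using exactly one bit less than the input.

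The central counting observation is that the pair $(j^{\star}, c)$ ranges over the disjoint union $S = \bigsqcup_{j=1}^{n} K^{j-1}$, of cardinality $(|K|^n - 1)/(|K| - 1)$. When $|K| = 2^b \ge 4$ (which follows from $b > 1$), this quantity is strictly smaller than $|K|^n / 2 = 2^{nb - 1}$, so a standard base-$|K|$ arithmetic coding gives an explicit injection $\phi \colon S \hookrightarrow \{0,1\}^{nb - 1}$ computable in $O(\log n)$ workspace. Writing $\phi(j^{\star}, c)$ in place of the $nb$-bit column $j^{\star}$ therefore saves exactly one bit overall, producing the desired output length $n^2 b - 1$.

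The $\inplaceFL$ compression $\Comp$ proceeds in three steps. (i) Compute $j^{\star}$ by iterating $j = 1, 2, \dotsc$ and testing whether column $j$ lies in the span of its predecessors, then extract the coefficients $c_1, \dotsc, c_{j^{\star}-1}$ one field element at a time; both subroutines run on the read-write matrix tape, with intermediate modifications carefully unwound so that $Q$ is restored before the next stage. (ii) Shift columns $Q_{:, j^{\star} + 1}, \dotsc, Q_{:, n}$ leftward by one column-width using the block-swap routine of Lemma~\ref{lem:swap}, freeing the rightmost $nb$ bits of the tape. (iii) Write $\phi(j^{\star}, c)$ into the first $nb - 1$ of the freed bits, leaving the final bit for halting-convention padding. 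The decompression $\Decomp$ reverses these steps: it applies $\phi^{-1}$ to recover $(j^{\star}, c)$, shifts columns $j^{\star}+1, \dotsc, n$ back to the right, and reconstructs $Q_{:, j^{\star}} = \sum_{i < j^{\star}} c_i Q_{:, i}$ in its vacated slot.

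The main obstacle is implementing the linear-algebraic work of step (i) within the $O(\log n)$-bit workspace limit of $\inplaceFL$, since neither rank testing nor coefficient extraction over a general finite field is known to be in $\LOG$. The key idea is to exploit the $n^2 b$-bit read-write matrix tape as scratch for destructive in-place Gaussian elimination subroutines, and to build up the arithmetic encoding incrementally so that each $c_i$ is computed, streamed into $\phi$ as one base-$|K|$ digit, and discarded in turn rather than being held simultaneously in workspace. The inverse row operations needed to restore $Q$ after each destructive elimination can themselves be deduced from the current intermediate tape state (since elementary row operations are self-describing once one knows which row index was pivoted), so no explicit $\omega(\log n)$-bit history of the elimination needs to be stored.
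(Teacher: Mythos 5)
Your encoding scheme is information-theoretically sound and genuinely different from the paper's: you replace the entire first dependent column by an arithmetic encoding of $(j^\star,c)$, using $|S| = (|K|^n-1)/(|K|-1) < 2^{nb-1}$ for $|K|\ge 4$, whereas the paper erases only $i$ entries of the dependent column at positions $j_1,\dotsc,j_i$ chosen by greedily completing the independent columns to a basis with standard basis vectors, and stores $i$ in a unary prefix code in the freed space (the positions $j_k$ and the erased values are then \emph{recomputed} from the surviving columns at decompression time). Identifying the first dependent column is common to both arguments. The difference that matters, however, is computational, and this is where your proposal has a genuine gap.

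Your step (i) requires computing $j^\star$ (rank tests) and the coefficient vector $c$ within the $\inplaceFL$ workspace bound, and you propose destructive in-place Gaussian elimination on the matrix tape, undone afterwards because ``elementary row operations are self-describing once one knows which row index was pivoted.'' That claim is false as stated: in a row operation $R_i \leftarrow R_i - m R_p$ the multiplier $m$ is read from the very entry that the operation zeroes out, so after the operation $m$ is no longer recoverable from the tape; undoing a full elimination requires the multipliers and the pivot history, which is $\Theta(n^2 b)$ (or at best $\Theta(n\log n)$) bits, far beyond $O(\log n)$ workspace, and it cannot be recomputed because the original $Q$ has been destroyed. There is also a second problem with the ``streaming'' of $c$ into $\phi$: every output bit of $\phi(j^\star,c)$ depends (through carry propagation and through the positional encoding) on many digits $c_i$, so discarding each $c_i$ after one pass only works if arbitrary coefficients can be recomputed on demand in logspace --- which is exactly the linear-algebra-in-$\LOG$ capability that is not known and that you cannot assume. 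The paper sidesteps all of this: it never attempts logspace linear algebra, but instead invokes \cref{lem:LinAlg} (rank and inverse in $\FCL$, via Ben-Or--Cleve style catalytic techniques) for every rank test and for the recovery of the erased entries, and its erase-a-few-entries-plus-unary-$i$ format is designed precisely so that nothing needs to be held in workspace that cannot be recomputed by such a subroutine from the data still on the tape. So the route you chose would need either a proof that the required eliminations are reversible in-place with $O(\log n)$ clean space, or a fallback to catalytic/oracle subroutines as the paper does; as written, the argument does not go through.
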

\begin{proof}
    Let $c_1 \ldots c_n$ be the columns of $Q$, and note that
    there exists an index $i \in [n]$ such that $c_{n-i+1}$ is a linear combination of 
    $c_1 \ldots c_{n-i}$ over $K$.
    
    Fix $i$ to be the largest such value, so that $c_1 \ldots c_{n-i}$ are linearly
    independent over $K$.
    Extend this collection of \( n-i \) vectors to a basis
    \( c_1 \ldots c_{ n-i }, e_{ j_1 } \ldots e_{ j_i } \)
    by adding \( i \) standard basis vectors.
    To make sure this choice is deterministic, depending only on the matrix \( Q \),
    at each step \( k \in [i] \),  set \( j_k \) to be the smallest index such that \( e_{ j_k } \) is
    linearly independent from the vectors already chosen.
    
    There are two important properties of the sequence $j_1 \ldots j_i$.

    First, it can be computed in \( \FCL \) given $c_1 \ldots c_{n-i}$.
    To see this, note that for each \( k \), \( j_k \) is the smallest value
    such that the matrix with columns \( c_1 \ldots c_{ n-i } \) followed by the
    first \( j_k \) standard basis vectors has rank at least \( n-i+k \), and
    matrix rank can be computed by \cref{lem:LinAlg}.

    Second, given vector $c_{ n-i+1 } \in K^n$ but with the coordinates at indices $j_1 \ldots j_k$ erased,
    it is possible to recover in-place the vector \( c_i \) in \( \FCL \).
    That is, if a vector \( c' \in K^n \) is written on the catalytic tape, and \( c' \)
    agrees with \( c_{ n-i+1 } \) on coordinates other than \( j_1 \ldots j_k \), it is
    possible to replace \( c' \) with \( c_{ n-i+1 } \) in-place.
    Here is how to do this.
    Using \cref{lem:LinAlg}, we can in \( \FCL \) invert the matrix whose columns are
    \( c_1 \ldots c_{ n-i }, e_{ j_1 } \ldots e_{ j_i } \), and thus compute coefficients
    coefficients \( ( a_k ) \) and \( ( b_{ \ell } ) \) so that
    \( c' = a_1 c_1 + \dotsb + a_{ n-i } c_{ n-i } + b_1 e_{ j_1 } + \dotsb + b_i e_{ j_i } \).
    If all of \( b_1 \ldots b_i \) are zero, then \( c' = c_{ n-i+1 } \), so we're done.
    Otherwise, record on the work tape the index \( k \) of some nonzero \( b_k \), and the value
    \( b_k \) itself, and then subtract \( b_k e_{ j_k } \) from \( c' \) in-place.
    Then start over. Each time this process is repeated, one more coefficient \( b_k \) will
    be set to zero, so the algorithm will finish after at most \( i \) repetitions.

    Our compression algorithm $\Comp$ now works as follows. Given $Q$, we find this first index
    $i$, which we record in binary on the work tape for the moment. Now we erase
    the entries at indices $j_1 \ldots j_i$ of $c_j$, freeing up $b$ bits
    each, for a total of $bi$ bits freed. Each time we erase an entry, we shift the
    rest of the space used to store the matrix \( Q \) forward, so that in the end,
    the first \( bi \) bits (of the space originally used to store \( Q \)) are unused.

    We are not quite done, since although we've freed \( bi \) bits, we cannot recover the
    matrix \( Q \) without knowing the value \( i \), which is currently stored on the
    work tape. We complete the compression as follows. Replace the first \( i+1 \) bits
    with the string \( 0 1^{i-1} 0 \). This serves two purposes. First, the first bit
    is always zero, and thus may be considered truly freed: we have compressed from
    \( n^2 b \) to \( n^2 b - 1 \) bits. Second, the string \( 1^{ i-1 } 0 \) is a prefix
    encoding of the number \( i \). The decompression algorithm can recover \( i \) by
    counting the number of \( 1 \)s that appear before the first \( 0 \). The reason
    we encode \( i \) in unary instead of binary is to guarantee that the encoding
    always takes fewer than \( bi \) bits. (Notice this also requires \( b>1 \), which
    is why this lemma does not work for the field \( K = \GF(2) \).)

    For $\Decomp$, we first infer \( i \) as the number of 1s seen
    before the first 0 (plus one). We can now determine the indices $j_1 \ldots j_i$ using $c_1 \ldots c_{n-i}$,
    and for each index $j_k$ in reverse order, we recompute the $j_k$th entry of $c_{n-i+1}$
    and insert it into the correct location, shifting the rest of $Q$ as needed.
\end{proof}

For other fields, we use a slightly different approach.
We treat the tape as storing field elements rather than bits (as discussed in \cref{sec:invalid}), and we compress a sequence of matrices $Q_1 \ldots Q_s$ rather than a single matrix $Q$.
\Cref{lem:compress-non-invert-other} compresses a sequence of \( m \) non-invertible matrices from \( mn^2 \) field elements down to \( mn^2-1 \) field elements, for sufficiently large \( m \).
The input and output lie on a catalytic tape whose cells store field elements rather than bits, a notion made precise in \cref{def:FieldCatalytic} (field-catalytic subroutine).

\begin{lemma}[Compressing a sequence of non-invertible matrices over any field]
\label{lem:compress-non-invert-other}
    Let \( K \) be a field representable in space \( b = O( \log n ) \).
    Let \( s \geq \frac{ \log n }{ \log |K| } + 1 \).
    There exist field-catalytic subroutines $\Comp, \Decomp$ which operate as follows on a catalytic tape made out of \( s n^2 \) elements of \( K \).
    If the catalytic tape starts with \( s \) non-invertible matrices, then \( \Comp \) changes it in-place in such a way that the last cell of the catalytic tape is \( 0 \).
    Running \( \Decomp \) after \( \Comp \) restores the catalytic tape.
    (If the tape didn't start with non-invertible matrices, there is no such guarantee.)
\end{lemma}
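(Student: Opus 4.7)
The plan is to generalize the single-matrix compression of \cref{lem:compress-non-invert-exact} to the $s$-matrix setting, amortizing the logarithmic metadata overhead across many matrices rather than charging it to a single one. For each matrix $Q_m$ on the catalytic tape, I would apply essentially the same column-selection procedure as in \cref{lem:compress-non-invert-exact}: find the largest $i_m \geq 1$ such that the first $n - i_m$ columns $c_1^{(m)}, \dotsc, c_{n - i_m}^{(m)}$ are linearly independent and $c_{n - i_m + 1}^{(m)}$ lies in their span; extend these columns to a basis of $K^n$ by appending the lexicographically first standard basis vectors $e_{\ell_1^{(m)}}, \dotsc, e_{\ell_{i_m}^{(m)}}$; and erase the $i_m$ entries of the dependent column $c_{n - i_m + 1}^{(m)}$ at positions $\ell_1^{(m)}, \dotsc, \ell_{i_m}^{(m)}$. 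The determinant-expansion argument used in \cref{lem:compress-non-invert-exact} shows that the erased entries can be recovered by solving a linear system over the untouched columns, and the value $i_m$ itself is recoverable from the untouched first $n - i_m$ columns via a rank computation, so no per-matrix index needs to be stored externally.

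After performing these erasures across all $s$ matrices, a total of $\sum_m i_m \geq s$ field elements are free, and we consolidate them at the end of the tape via in-place shifting (using primitives akin to \cref{lem:swap}). The hypothesis $s \geq \frac{\log n}{\log |K|} + 1$ yields $|K|^{s-1} \geq n$, which is precisely the slack needed to encode a single boundary pointer in the consolidated free region---for instance, a tag recording the position where the compression artifacts begin---while still leaving the final cell available to be set to $0$. Even in the worst case where $i_m = 1$ for every $m$, the $s$ freed cells cover the $\lceil \log n / \log |K| \rceil$-cell pointer plus the one required zero cell.

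The decompression algorithm $\Decomp$ reverses the steps above: it reads the boundary pointer, un-shifts the consolidated data to restore each matrix's original layout, recovers $i_m$ and $\ell_k^{(m)}$ from the untouched columns via rank computations and the canonical basis extension, and solves a small linear system per matrix to reconstruct each erased column entry. All of these primitives---rank, matrix-vector multiplication, matrix inversion, and in-place data shifting---lie in $\FCL$ by \cref{lem:LinAlg} and the techniques developed in \cref{sec:in_place_matrix}. The main obstacle will be carefully orchestrating the in-place shifts so that every intermediate configuration of the tape is a state from which the algorithm can continue without losing information, together with verifying that all bookkeeping genuinely fits in the $O(\log n)$ work-tape space available to a field-catalytic subroutine; the shape of the argument mirrors the treatment in \cref{lem:compress-non-invert-exact}, but over an alphabet of field elements rather than bits.
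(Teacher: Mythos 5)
Your recovery of the erased entries via a linear system, and of the positions $\ell_1^{(m)},\dotsc,\ell_{i_m}^{(m)}$ from the intact independent columns, is fine and matches \cref{lem:compress-non-invert-exact}. The gap is the claim that ``the value $i_m$ itself is recoverable from the untouched first $n-i_m$ columns via a rank computation, so no per-matrix index needs to be stored externally.'' This is circular: after compression, which cells constitute the ``untouched first $n-i_m$ columns'' of matrix $m$, where its mutilated column starts, and where the next matrix's data begins are all determined by $i_m$ (and, once you consolidate freed cells, by $i_1,\dotsc,i_{m-1}$ as well), so you cannot run the rank test without already knowing the quantity you are trying to recover. If instead $\Decomp$ tries to infer $i_m$ by parsing $n$-cell columns from the start of the block and looking for the first column dependent on its predecessors, the parsed column in position $n-i_m+1$ is a hybrid of the surviving entries of the dependent column and a prefix of the next column; nothing forces this hybrid into the span of the earlier columns, so the parse can report a wrong boundary and $\Decomp(\Comp(\cdot))$ fails to be the identity. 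This is precisely the issue the paper isolates: even in \cref{lem:compress-non-invert-exact} the index $i$ must be explicitly written down (there in unary, affordable only because freeing $i$ cells frees $bi\ge 2i$ bits), and the entire difficulty of \cref{lem:compress-non-invert-other} is that on a tape of field elements a per-matrix encoding of $i_t$ costs about $i_t$ cells---exactly what the erasure frees---so naively storing it cancels all savings. The paper's proof gets around this with an idea your proposal does not contain: by averaging, fix a single value $i^*$ attained by many of the matrices, pay $\lceil\log_{|K|}n\rceil$ cells once to record $i^*$, leave matrices with $i_t>i^*$ uncompressed (detectable since their first $n-i^*$ columns are dependent), encode $i_t<i^*$ in full unary (net zero), and gain one cell from each matrix with $i_t=i^*$ by dropping the terminating symbol of its unary code.

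Separately, your accounting for the ``boundary pointer'' does not close. A pointer into a tape of $sn^2$ cells needs $\lceil\log_{|K|}(sn^2)\rceil\approx 2\log_{|K|}n$ cells, which can already exceed the guaranteed $\sum_m i_m\ge s\ge \log_{|K|}n+1$ freed cells in the worst case $i_m\equiv 1$; and in any case a single position marking ``where the compression artifacts begin'' carries far too little information, since $\Decomp$ needs the individual values $i_1,\dotsc,i_s$ (equivalently, the length of each matrix's encoding) to undo the shifting and reconstruct each matrix.
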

\begin{proof}
    Similar to the proof of \cref{lem:compress-non-invert-exact}, for each \( t \), let \( i_t \) be the largest value such that the first \( n-i_t \) columns of \( Q_t \) are linearly independent.
    We will use a similar approach to \cref{lem:compress-non-invert-exact}, but we are no longer guaranteed to be able to store an index \( i_t \) in less space than it takes to store \( i_t \) field elements, so we need to change something.

    By averaging, there exists an index $i^*$ such that \( i_t = i^* \) for at least \( s/n \ge \log_{ |K| } n + 1 \) of the matrices \( Q_t \).
    We record the index \( i^* \) in our free space, and then encode each $Q_t$ as follows.
    \begin{itemize}
        \item In all cases, the encoding of begins with the first \( n-i^* \) columns of \( Q_t \) copied exactly.
        \item If \( i_t > i^* \), then the encoding continues with the remaining columns also copied exactly, so \( Q_t \) is simply encoded as itself.
        \item If \( i_t < i^* \), then the encoding continues with \( i_t-1 \) ones followed by a zero: a unary encoding of \( i_t \). Then, we copy each remaining column of \( Q_t \), except that in the \( i_t \)-th column, we omit the entries at indices \( j_1, \dotsc, j_{ i_t } \), where those indices are determined in the same way as in the proof of \cref{lem:compress-non-invert-exact}.
        \item The case \( i_t = i^* \) is handled similarly, except we omit the final zero in the unary encoding of \( i_t \).
    \end{itemize}
    Since we save one field element per $Q_t$ which compresses the $i^*$th column, this gives
    us at least \( s / n \ge \lceil \log_{ |K| } n \rceil + 1 \) free field elements.
    We use \( \lceil \log_{ |K| } n \rceil \) of them to encode \( i^* \), and save one field element overall.

    Our decompression $\Decomp$ is as follows. After recording $i^*$, we consider each $Q_t$
    in turn. First, if the first $n-i^*$ columns contain a linear dependence, we
    leave $Q_t$ untouched, as $i > i^*$ for this matrix. Otherwise, we discern $i$
    from field elements following the first \( n-i^* \) columns, knowing that $i \leq i^*$ and thus a
    sequence of $i^*$ ones indicates that $i = i^*$. Now given $i$, we decompress as in
    \cref{lem:compress-non-invert-exact}.
\end{proof}

\subsection{An alternative approach}

We mention a second approach for putting an invertible matrix on the catalytic tape,
which works for just a single matrix (unlike \cref{lem:compress-non-invert-other}),
but is not a pure compression
argument, as it instead may only ``fix'' the given non-invertible matrix.
\begin{lemma}[Compress-or-fix for non-invertible matrices]
\label{lem:compress-non-invert-other-2}
    Let \( K \) be a field representable in space \( b = O( \log n ) \).
    There exists field-catalytic subroutines \( \Comp, \Decomp, \Fix \) which operate as follows
    on a catalytic tape made out of \( n^2 \) elements of \( K \).
    If the catalytic tape starts with a non-invertible matrix \( Q \in K^{ n^2 } \), then 
    running \( \Comp \) will either (a) compress the tape, meaning
    leave the tape in a state where its last tape cell is \( 0 \) and
    running \( \Decomp \) will restore it, or (b) signal failure and
    leave \( Q \) unchanged. In case (b), the algorithm \( \Fix \) will
    change \( Q \) into an invertible matrix \( B \in K^{ n^2 } \) such
    that \( B \) and \( Q \) differ in only a constant number of locations;
    \( \Fix \) will also output those locations and their original values.
\end{lemma}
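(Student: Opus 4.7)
The plan is to drive both $\Comp$ and $\Fix$ by the \emph{rank deficiency} $i := n - \rk(Q)$, which can be computed in $\FCL$ by \cref{lem:LinAlg}. I would set a threshold $t^\ast = \lceil \log_{|K|} n \rceil + 2$: $\Comp$ compresses whenever $i \ge t^\ast$ and signals failure otherwise. The point of this threshold is that the compression scheme will free exactly $i$ cells of raw data but must spend $\lceil \log_{|K|} n \rceil$ cells on bookkeeping, yielding a net saving of at least one cell precisely when $i \ge t^\ast$. Conversely, in the failure case $i \le t^\ast - 1 = O(1)$, which is what lets $\Fix$ alter only constantly many entries.

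In the compression branch, I would follow the template of \cref{lem:compress-non-invert-other}. Let $j_1 < \dots < j_i$ be the lexicographically smallest indices such that $e_{j_1}, \dots, e_{j_i}$ extend the first $n-i$ columns of $Q$ to a basis of $K^n$; crucially, these indices are recoverable from the first $n-i$ columns alone via $\FCL$ rank queries. For each dependent column $c_{n-i+k}$ I would delete the entry at row $j_k$, shifting the rest of the tape leftward and freeing $i$ cells at its tail. Into the first $\lceil \log_{|K|} n \rceil$ freed cells I would write $i$ in $|K|$-ary; the remaining freed cells (at least one) are zeroed, so in particular the final cell becomes $0$. $\Decomp$ reads $i$ from the overhead, recomputes $(j_k)$ in $\FCL$, and for each dropped entry solves a small linear system, using the fact that the dependent column lies in the span of the first $n-i$ columns with coefficient zero on every $e_{j_k}$, so the surviving coordinates uniquely determine the missing one.

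For $\Fix$ in the failure branch, I would perform $i$ single-entry updates, each raising the rank by one. At iteration $s$, given $Q^{(s)}$ of rank $n - i + s$, compute in $\FCL$ a row index $r$ and column index $c$ outside the support $R, C$ of a current maximal invertible submatrix of $Q^{(s)}$, together with the $(r,c)$-cofactor $C_{r,c}(Q^{(s)})$. By choice of $(r,c)$ the submatrix on $R \cup \{r\}$ and $C \cup \{c\}$ is invertible, so $C_{r,c}(Q^{(s)}) \neq 0$, while $\det(Q^{(s)}) = 0$. The identity
\[
  \det(Q^{(s+1)}) \;=\; \bigl(Q^{(s+1)}_{r,c} - Q^{(s)}_{r,c}\bigr)\cdot C_{r,c}(Q^{(s)})
\]
then shows that any nonzero perturbation of $Q^{(s)}_{r,c}$ strictly raises the rank; we change the entry accordingly and log the triple $(r, c, Q^{(s)}_{r,c})$ on the work tape. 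After $i = O(1)$ iterations the matrix is invertible, and the total log occupies only $O(\log n)$ bits.

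The hard part will be the tight bookkeeping in $\Comp$: the encoding of $i$ must be self-delimiting and recoverable with no external information, the in-place shift must not clobber any cell that $\Decomp$ will later need, and the net saving has to land at exactly one cell so that the last tape cell becomes $0$. A secondary subtlety is that the ``constant'' in $\Fix$ implicitly requires $|K| = n^{\Omega(1)}$ so that $t^\ast = O(1)$; for very small fields one must either batch many matrices (as in \cref{lem:compress-non-invert-other}) or allow the fix size to degrade to $O(\log_{|K|} n)$, which still fits on the work tape.
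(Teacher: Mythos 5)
The central gap is in how you pay for the bookkeeping, and it breaks the ``constant number of locations'' guarantee that the lemma actually demands. You spend a fixed $\lceil \log_{|K|} n\rceil$ cells writing $i$ in $|K|$-ary, which forces your failure threshold to be $t^\ast \approx \log_{|K|} n$; consequently, in the failure branch $\Fix$ may change $\Theta(\log_{|K|} n)$ entries, which is not $O(1)$ for small fields such as $\GF(2)$ or $\GF(3)$, even though these are representable in $O(\log n)$ bits. You acknowledge this and propose either batching matrices or letting the fix size degrade, but both prove a different statement (the batching route is exactly \cref{lem:compress-non-invert-other}, which the paper states separately). The paper's proof sidesteps this by making the bookkeeping cost scale \emph{with the savings} rather than being fixed: it takes $i$ to be defined as in \cref{lem:compress-non-invert-exact} (the largest $i$ such that $c_{n-i+1}$ lies in the span of the linearly independent columns $c_1,\dotsc,c_{n-i}$), erases the $i$ entries of that \emph{single} column at the basis-completion indices, and when $i>k$ for a fixed constant $k\ge 2$ records $i-k$ in unary ($i-k-1$ ones followed by a zero), netting at least $k-1\ge 1$ free cells no matter how close $i$ is to $k$. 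Failure then means $i\le k=O(1)$ \emph{independently of the field}, and $\Fix$ only needs to repair the last $i\le k$ columns, one entry each, using that a column lying in the span of its predecessors can be moved out of that span by altering a single coordinate (otherwise all $n$ standard basis vectors would lie in a proper subspace).

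A second, independent problem is that your compression presumes structure the matrix need not have: with $i$ equal to the rank deficiency, the first $n-i$ columns need not be linearly independent, and the last $i$ columns need not lie in their span (e.g.\ a zero first column with the remaining $n-1$ columns independent), so your indices $j_1<\dots<j_i$ and the recovery step ``the dependent column lies in the span of the first $n-i$ columns with zero coefficients on the $e_{j_k}$'' are not well-defined; the paper avoids this by compressing only the \emph{first} dependent column. Finally, a minor slip in your $\Fix$: for an arbitrary choice of $r,c$ outside the support of a maximal invertible submatrix, the cofactor $C_{r,c}(Q^{(s)})$ need not be nonzero, so the displayed identity does not by itself license the update; one should instead argue on the determinant of the enlarged $(|R|+1)\times(|C|+1)$ submatrix, which is affine in the new entry with nonzero leading coefficient, or simply use the one-entry-per-column argument above.
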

\begin{proof}
    Let $k \geq 2$ be any constant. We again repeat the algorithm $\Comp$ from
    \cref{lem:compress-non-invert-exact}, but with the compression split into two cases
    based on the index $i$ of the compressible row.
    If $i > k$, we record $i-k$ in unary as \( i-k-1 \) ones followed by a zero,
    and as before, free \( i \) catalytic tape cells, netting at least \( k - 1 \ge 1 \)
    tape cells saved in the compressed representation.
    $\Decomp$ will be the same as before with the new
    interpretation of $i$.
    
    If $i \leq k$, \( \Comp \) signals failure, and now we describe how \( \Fix \) behaves.
    In this case, the first \( n-i \) columns are linearly independent, so we only
    need to fix the last \( i \) columns.
    Each of those columns can be made linearly independent from the previous columns by
    changing at most one entry (since the \( n \) standard basis vectors can't all be
    linear combinations of the previous columns), so \( Q \) can be transformed into
    an invertible matrix by changing at most \( i \le k \) entries in total.
\end{proof}

\subsection{Proof of \ts{\cref{lem:invertible-mat-total}}{constructing invertible matrices}}

Finally, it is straightforward to iterate the above lemmas to prove
\cref{lem:invertible-mat-total}.

\begin{proof}[Proof of \cref{lem:invertible-mat-total}]
    We will complete this proof using \cref{lem:compress-non-invert-other}, but it can also be proved in much the same way using \cref{lem:compress-non-invert-other-2}, or \cref{lem:compress-non-invert-exact} if \( K \) is exactly representable.

    Our catalytic tape will have length \( t n^2 \) (where \( t = \lceil \frac{ \log n }{ \log |K| } \rceil + 1 \) as in the lemma statement).
    We interpret the last \( (t-1) n^2 \) cells as a sequence of  matrices \( Q_1, \dotsc, Q_{ t-1 } \in K^{ n^2 } \), and our goal will be to put an invertible matrix in the first \( n^2 \) cells.

    Initialize variable \( i \) to zero; \( i \) counts how many of the first \( n^2 \) cells we've freed so far.
    Then we repeatedly do the following.
    Check whether any of the matrices \( Q_1, \dotsc, Q_s \) is invertible.
    If so, swap it with the first \( n^2 \) tape cells and stop.
    Otherwise, apply the algorithm \( \Comp \) from \cref{lem:compress-non-invert-other} to the last \( (t-1) n^2 \) tape cells.
    The last tape cell will then be \( 0 \). Swap it with the \( i \)-th tape cell, and then increase \( i \) by one.

    Repeating the above steps, eventually either an invertible matrix will be found, or \( i \) will reach \( n^2 \).
    In the latter case, the first \( i^2 \) tape cells are all zero, and we may simply write the identity matrix (or any other invertible matrix) to those cells.

    We have successfully found an invertible matrix; all that remains is to describe the algorithm \( D \) which restores the tape, and the string \( \mathop{\mathsf{key}}( \tau ) \) which \( C \) passes on to \( D \) as a helper.
    That latter string describes the state \( C \) stopped in: the value of the variable \( i \), and, if one of the matrices \( Q_j \) was swapped with the first \( n^2 \) tape cells, the index \( j \).
    This is sufficient information for \( D \) to undo everything that \( C \) did, with the help of the algorithm \( \Decomp \) of \cref{lem:compress-non-invert-other}.
\end{proof}

\end{document}